\newcommand*{\citet}[1]{\shortciteA{#1}}
\newcommand*{\citep}[1]{\shortcite{#1}}
\newcommand{\bbR}{\mathbb{R}}
\newcommand{\calS}{\mathcal{S}}
\newcommand{\calP}{\mathcal{P}}
\newcommand{\calC}{\mathcal{C}}
\newcommand{\GASP}{{\sf GASP}}
\newcommand{\sfGASP}{{\sf GASP}}
\newcommand{\gGASP}{{\sf gGASP}}
\newcommand{\NS}{{\sf NS}}
\newcommand{\IS}{{\sf IS}}
\newcommand{\CR}{{\sf CR}}
\newcommand{\PS}{{\sf PS}}
\newcommand{\WS}{{\sf WS}}
\newcommand{\DS}{{\sf DS}}
\newcommand{\poly}{\mathrm{poly}}
\newcommand{\desc}{\mbox{\rm desc}}
\newcommand{\ch}{\mbox{\rm ch}}
\newcommand{\MCC}{\textsc{Multicolored Clique}}
\newcommand{\MIS}{\textsc{Multicolored Independent Set}}
\newcommand{\height}{\mbox{\rm height}}
\newcommand{\colour}{{\sf Color}}
\newcommand{\pair}{{\sf ColorPair}}
\newcommand{\Dummy}{{\sf Dummy}}
\newtheorem*{rep@theorem}{\rep@title}
\newcommand{\newreptheorem}[2]{%
\newenvironment{rep#1}[1]{%
 \def\rep@title{#2 \ref{##1}}%
 \begin{rep@theorem}}%
 {\end{rep@theorem}}}
\theoremstyle{plain}
	  \newtheorem{theorem}{Theorem}
	  \newtheorem{cor}[theorem]{Corollary}
	  \newtheorem{lemma}[theorem]{Lemma}
	  \newtheorem{proposition}[theorem]{Proposition}
	  \newtheorem{observation}[theorem]{Observation}
	  \newtheorem{claim}[theorem]{Claim}
\theoremstyle{definition}
	  \newtheorem{define}[theorem]{Definition}
	  \newtheorem{example}[theorem]{Example}
\theoremstyle{remark}
\begin{document}

\title{Group Activity Selection on Social Networks}

\author{\name Ayumi Igarashi \email ayumi.igarashi@cs.ox.ac.uk\\
       \name Robert Bredereck \email robert.bredereck@tu-berlin.de\\
       \name Dominik Peters \email dominik.peters@cs.ox.ac.uk\\
       \name Edith Elkind \email elkind@cs.ox.ac.uk\\
       \addr University of Oxford\\
       Wolfson Building, Parks Road, Oxford OX1 3QD
       %\AND
       }

% For research notes, remove the comment character in the line below.
% \researchnote

\maketitle

\begin{abstract}
We propose a new variant of the group activity selection problem (\sfGASP), where the agents are placed on a social 
network and activities can only be assigned to connected subgroups (\gGASP). We show that if multiple groups
can simultaneously engage in the same activity, finding a stable outcome is easy as long as the network
is acyclic. In contrast, if each activity can be assigned to a single group only, finding stable outcomes 
becomes computationally 
intractable, even if the underlying network is very simple:
the problem of determining whether 
a given instance of a \gGASP\ admits a Nash stable outcome turns out to be NP-hard 
when the social network is a path or a star, or if the size of each connected component is bounded by a constant.
%parameterized complexity
We then study the parameterized complexity of finding outcomes of \gGASP{} that are
Nash stable, individually stable or core stable.
For the parameter `number of activities', we propose an FPT algorithm for Nash stability 
for the case where the social network is acyclic and
obtain a W[1]-hardness result for cliques (i.e., for standard \GASP); similar results hold for individual stability.
In contrast, finding a core stable outcome is hard even if the number of activities 
is bounded by a small constant, both for standard \GASP{} and when the social network is a star.
For the parameter `number of players', all problems we consider are in XP for arbitrary social networks; 
on the other hand, we prove W[1]-hardness results with respect to the parameter `number of players'
for the case where the social network is a clique. \footnote{Preliminary versions of this article appeared in Proceedings of the 31st AAAI Conference on Artificial Intelligence \citep{Igarashi2016b}, and Proceedings of the 16th International Joint Conference on Autonomous Agents and Multi Agent Systems \citep{Igarashi2017}. In this longer version, we provide all proofs that were omitted from the conference version and improve some of the results.}
%Ayumi: The pointers for the conference versions are both Igarashi et al. (2017). Is there any way we can avoid this?
\end{abstract}

\section{Introduction}
Companies assign their employees to different departments, large decision-making bodies split their members into expert 
committees, and university faculty form research groups: division of labor, and thus group formation, is 
everywhere. For a given assignment of agents to activities (such as management, product development, or marketing) to be 
successful, two considerations are particularly important: the agents need to be capable to work on their activity, and 
they should be willing to cooperate with other members of their group.

Many relevant aspects of this setting are captured by the {\em group activity selection problem} (\sfGASP), 
introduced by \citet{Darmann2012}. In \sfGASP, players have preferences over pairs of the form 
(activity, group size).
The intuition behind this formulation is that certain tasks are best performed in small or large groups, 
and agents may differ in their preferences over group sizes; however, they are indifferent about
other group members' identities. In the analysis of \sfGASP, desirable outcomes 
correspond to \emph{stable} and/or \emph{optimal} assignments of players to activities, i.e., assignments 
that are resistant to player deviations and/or
that maximize the total welfare. 

The basic model of \sfGASP\ ignores the relationships among the agents: 
Do they know each other? Are their working styles and personalities compatible?
Typically, we cannot afford to ask each agent about her preferences over all pairs of the form 
(coalition, activity), as the number of possible coalitions grows quickly with the number of agents.
A more practical alternative is to adopt the ideas of \citet{Myerson1977} and assume
that the relationships among the agents are encoded by a \emph{social network},
i.e., an undirected graph where nodes correspond to players and edges represent 
communication links between them; one can then require that each group is connected
with respect to this graph.

In this paper we extend the basic model of \sfGASP\ to take into account the agents' social network (\gGASP). 
We formulate several notions of stability for this setting, including Nash stability and core stability,
and study the complexity of computing stable outcomes in our model. 
These notions of stability are inspired by the hedonic 
games literature \citep{Banerjee2001,Bogomolnaia2002}, and were applied in the 
\sfGASP\ setting by \citet{Darmann2012} and \citet{Darmann2015}.
 
Hedonic games on social networks were recently considered by \citet{Igarashi2016a},
who showed that if the underlying network is acyclic, stable outcomes are guaranteed to exist
and some of the problems known to be computationally hard for the unrestricted setting
become polynomial-time solvable. We obtain a similar result for \sfGASP, but only
if several groups of agents can simultaneously engage in the same activity, i.e., 
if the activities are {\em copyable}. In contrast, we show that if each activity 
can be assigned to at most one coalition, a stable outcome may fail to exist, and moreover finding them is computationally hard
even if the underlying network is very simple.
Specifically, checking the 
existence of Nash stable, individually stable, or core stable outcomes turns out to be NP-hard 
even for very restricted classes of graphs, including paths, stars, 
and graphs with constant-size connected components. This result stands 
in sharp contrast to the known computational results in the literature; indeed, in the context of of cooperative games, such restricted networks 
usually enable one to design efficient algorithms for computing stable solutions (see, e.g., 
\citet{Chalkiadakis2016}; \citet{Elkind2014}; 
\citet{Igarashi2016a}).

Given these hardness results, we switch to the fixed parameter tractability paradigm. A problem is said to be {\em fixed parameter tractable} (FPT) with respect to a parameter $k$ if each instance $I$ of this problem can be solved in time $f(k) \poly(|I|)$, and to be {\em slice-wise polynomial} (XP) with respect to a parameter $k$ if each instance $I$ of this problem can be solved in time $|I|^{f(k)}$, respectively; here $f$ is a computable function that depends only on $k$.
In the context of \sfGASP, a particularly relevant parameter is the number of activities: generally speaking, we expect the number of available activities to be small in many practical applications. For instance, companies can only assign a limited number of projects to their employers; a workshop can usually organise a couple of social events; and schools can offer few facilities to their students. 
We show that the problem of deciding the existence of Nash stable outcomes for \gGASP s on acyclic graphs is in FPT with respect to the number of activities. For general graphs, we obtain a W[1]-hardness result, 
implying that this problem is unlikely to admit an FPT algorithm.
This hardness result holds even for \gGASP s on cliques; thus, it is also W[1]-hard to decide the existence of a Nash stable outcome in a standard \GASP.

While we find that from an algorithmic point of view, 
individual stability is very similar to Nash stability,
%core stability
unfortunately, our FPT results do not extend to core stability: we prove that checking the existence of core stable 
assignments is NP-complete even for \gGASP s on stars with two activities; for standard \GASP, we can prove
that this problem is hard if there are at least four activities.
On the other hand, if there is only one activity, 
a core stable assignment always exists and can be constructed efficiently, for any network structure.

Another parameter we consider is the number of players. This restriction applies to many practical scenarios. For example, in research teams with limited human resources, there are a limited number of researchers who are able to conduct the projects. 
Somewhat surprisingly, we 
show that the parameterization by the number of players does not give rise to an FPT algorithm for \gGASP s on general networks. 
Specifically, for all stability notions we consider, 
it is W[1]-hard to decide the existence of a stable outcome even when the underlying graph is a clique. Again, our hardness result particularly implies the W[1]-hardness of computing stable outcomes in a standard \GASP.
%It is worth noting that in our proofs, we essentially show the hardness of determining whether a fixed coalition structure can be stabilized for some assignment. The computational intractability is thus due to the difficulty of assigning activities to coalitions when players have non-trivial preferences. 
We summarize our complexity results in Table~\ref{table}. 

\begin{table*}[t]
	\footnotesize
	\centering
	\begin{tabular}{lcccc}
		\toprule
		& general case & few activities ($p$)& few players ($n$) & copyable activities \\
		\midrule
		\multicolumn{3}{l}{Nash stability and individual stability} \\
		\midrule
		cliques & \cellcolor{gray!30} NP-c. & W[1]-h. (Th. \ref{thm:W1:activities:clique:NS}, \ref{thm:W1:activities:clique:IS})& W[1]-h. (Th. \ref{thm:W1players:clique:NSIS}) &\cellcolor{gray!30} NP-c. \\
		acyclic & NP-c. (Th. \ref{thm:NPhardness:path:NS}, \ref{thm:NPhardness:path:star:sc:CRIS}) & FPT (Th. \ref{thm:FPT:tree:NS},\ref{thm:FPT:tree:IS}) & XP (Obs. \ref{obs:XPplayers}) & P (Th. \ref{thm:NS:copyable}, \ref{thm:IS:copyable}) \\
		paths & NP-c. (Th. \ref{thm:NPhardness:path:NS}, \ref{thm:NPhardness:path:star:sc:CRIS}) & FPT (Th. \ref{thm:FPT:tree:NS},\ref{thm:FPT:tree:IS})& XP (Obs. \ref{obs:XPplayers}) & P (Th. \ref{thm:NS:copyable}, \ref{thm:IS:copyable}) \\
		stars & NP-c. (Th. \ref{thm:NPhardness:star:NS}, \ref{thm:NPhardness:path:star:sc:CRIS})& FPT (Th. \ref{thm:FPT:tree:NS},\ref{thm:FPT:tree:IS})& XP (Obs. \ref{obs:XPplayers}) & P (Th. \ref{thm:NS:copyable}, \ref{thm:IS:copyable}) \\
		small comp. & NP-c. (Th. \ref{thm:NPhardness:sc:NS}, \ref{thm:NPhardness:path:star:sc:CRIS})&FPT (Th. \ref{thm:FPT:smallcomponents:NSIS}) & XP & P (Th. \ref{thm:NS:copyable}, \ref{thm:IS:copyable}) \\
		\midrule
		
		\multicolumn{3}{l}{core stability} \\
		\midrule
		cliques &\cellcolor{gray!30}  NP-c. & NP-c. for $p=4$ (Th. \ref{thm:NP:clique:core}) & W[1]-h. (Th. \ref{thm:W1players:clique:core})&\cellcolor{gray!30}  NP-c. \\	
		acyclic & NP-c. (Th. \ref{thm:NPhardness:path:star:sc:CRIS})& NP-c. for $p=2$ (Th. \ref{thm:NP:star:core}) & XP (Obs. \ref{obs:XPplayers})&\cellcolor{gray!30} P\\
		paths & NP-c. (Th. \ref{thm:NPhardness:path:star:sc:CRIS})& XP (Prop. \ref{prop:core-xp}) & XP (Obs. \ref{obs:XPplayers})&\cellcolor{gray!30} P \\
		stars & NP-c. (Th. \ref{thm:NPhardness:path:star:sc:CRIS})& NP-c. for $p=2$ (Th. \ref{thm:NP:star:core})& XP (Obs. \ref{obs:XPplayers})&\cellcolor{gray!30} P\\
		small comp. & NP-c. (Th. \ref{thm:NPhardness:path:star:sc:CRIS})&FPT (Th. \ref{thm:FPT:smallcomponents:CR})& XP (Obs. \ref{obs:XPplayers})&\cellcolor{gray!30} P\\		
		\bottomrule
	\end{tabular}
\caption{Overview of our complexity results. %`NS' stands for Nash stability, `IS' stands for individual stability, `CR' stands for core stability.
All W[1]-hardness results are accompanied by XP-membership proofs.
For all `XP'-entries, the question whether the problem is fixed-parameter tractable remains open. Grey entries are results that have appeared in the literature before.}
\label{table}
\end{table*}

\subsection{Related work}
%GASP
\citet{Darmann2012} initiate the study of \GASP. In the work of \citet{Darmann2012}, players are assumed 
to have approval preferences, and a particular focus is placed on individually rational assignments 
with the maximum number of participants and Nash stable assignments. They obtained a number of complexity results of computing these outcomes while concerning special cases where players have increasing/degreasing preferences on the size of a group. Subsequently, \citet{Darmann2015}
investigated a model where players submit ranked ballots. In this work, stability concepts such as core stability and individual stability have been adapted from the hedonic games literature. Further, \citet{Lee2015} studied {\em stable invitation problems}, which is a subclass of \GASP\ where only one activity is assigned to players. This problem was inspired by settings in which an organizer of an event chooses a {\em stable} set of guests who have preferences over the number of participants of the event. We refer the reader to the recent survey by \citet{Darmann2017} for the relation between these models. 

Recently, the parameterized aspects of \GASP\ have been considered by several authors. \citet{Lee2017b} studied the complexity of standard \GASP, with parameter being the number of groups. They showed that computing a maximum individually rational assignment is in FPT with respect to that parameter; however, they proved that this does not extend to other solution concepts, such as Nash stability and envy-freeness, by obtaining a number of W[1]-hardness results. More recently, \citet{Gupta2017} investigated the parameterized complexity of finding Nash stable outcomes in the context of \gGASP. In their work, computation of a Nash stable outcome was shown to be in FPT with respect to the combined parameters: the maximum size of a group and the maximum degree in the underlying social network. They also presented an FPT algorithm with respect to the number of activities when the underlying network has a bounded tree-width. This generalizes our FPT result for trees and improves the bound on the running time. 

%Hedonic games
\GASP\  are closely related to hedonic games \cite{Banerjee2001,Bogomolnaia2002}. Much work has been devoted to the complexity study of hedonic games when there is no restriction on coalition formation (see e.g. \citet{Woeginger2013} and \citet{Aziz2016}).
%Ballester
In particular, the copyable setting of \GASP\ includes a class of anonymous hedonic games where players' preferences are only determined by the size of the coalition to which they belong. \citet{Ballester2004} showed that computing Nash, core, and individually stable outcomes of the game is NP-hard for anonymous games; this translates into the NP-hardness of these solutions for copyable instances of \gGASP\ when the social network is a clique. Nevertheless, our positive results for copyable activities imply that in anonymous hedonic games, one can compute a stable outcome in polynomial time if the underlying social network is acyclic. 
%DP: maybe cite my paper on hedonic games with dichotomous preferences 

It is worth mentioning that models with graph connectivity constraints have been studied in different settings from ours \citep{Talmon2017,Bouveret2017,Warut2017}. \citet{Talmon2017} considered the multiwinner problem when each winner has to represent a connected voting district. In the work of \citet{Talmon2017}, a similar hardness result concerning optimal committees for paths was obtained; further, computing an optimal committee on graphs with bounded tree-width was shown to be polynomial-time solvable for {\em non-unique} variants of the problem where several connected districts can be represented by the same winner. This restriction corresponds to our copyalbe cases of \gGASP.  
In the fair division literature, \citet{Bouveret2017} investigated a fair allocation of indivisible goods under graph connectivity constraints: the graph represents the dependency among the items, and each player's bundle must be connected in this graph. Similarly, computing envy-free and proportional allocations was proven to be NP-hard even when the graph among the items is a path; however, they showed that computing maximin fair allocations can be done in polynomial time when the graph is acyclic.

\section{Preliminaries}
\subsection{Group Activity Selection Problems}
For $s\in{\mathbb N}$, let $[s]=\{1,2,\ldots,s\}$. For $s,t\in{\mathbb Z}$ where $s \le t$, let $[s,t]=\{s,s+1,s+2,\ldots,t\}$.
An instance of the {\em Group Activity Selection Problem} (\sfGASP) is given by a finite set of {\em players} 
$N=[n]$, a finite set of {\em activities} $A=A^{*}\cup\{a_{\emptyset}\}$, where 
$A^{*}=\{a_{1},a_{2},\ldots,a_{p}\}$ and $a_{\emptyset}$ is the {\em void activity}, and a {\em profile} 
$(\succeq_{i})_{i \in N}$ of complete and transitive preference relations over the set of {\em alternatives} 
$(A^{*} \times [n])\cup \{(a_{\emptyset},1)\}$. Intuitively, $a_\emptyset$ corresponds to staying 
alone and doing nothing; multiple agents can make that choice independently from each other.

Throughout the paper, we assume that we can determine in unit time whether each player $i$ prefers $(a,k)$ to $(b,\ell)$, prefers $(b,\ell)$ to $(a,k)$, or is indifferent between them. We will write $x \succ_i y$ or $i: x\succ y$ to indicate that player
$i$ strictly prefers alternative $x$ to alternative $y$;
similarly, we will write $x\sim_i y$ or $i: x\sim y$ if $i$ is indifferent between $x$ and~$y$.
Also, given two sets of alternatives $X, Y$ and a player $i$, we write
$X\succ_i Y$ to indicate that $i$ is indifferent among all alternatives in $X$
as well as among all alternatives in $Y$, and prefers each alternative in $X$
to each alternative in $Y$.

We refer to subsets $S\subseteq N$ of players as {\em coalitions}. 
We say that two non-void activities $a$ and $b$ are {\em equivalent} if for every player $i \in N$ and every $\ell \in [n]$ 
it holds that $(a,\ell) \sim_{i} (b,\ell)$. A non-void activity $a \in A^{*}$ is called {\em copyable} if $A^{*}$ contains 
at least $n$ activities that are equivalent to $a$ (including $a$ itself). 
We say that player $i \in N$ {\em approves} an alternative $(a,k)$ if $(a,k) \succ_{i} (a_{\emptyset},1)$. 

An outcome of a \sfGASP\ is an {\em assignment} of activities $A$ to players $N$, i.e., a mapping $\pi:N \rightarrow A$.
Given an assignment $\pi:N \rightarrow A$ and a non-void activity $a \in A^{*}$,
we denote by $\pi^{a}=\{\, i \in N \mid \pi(i)=a \,\}$ the set of players 
assigned to $a$. Also, if $\pi(i)\neq a_\emptyset$,
we denote  by $\pi_{i}=\{i\}\cup\{\, j \in N \mid \pi(j)=\pi(i)\}$ 
the set of players assigned to the same activity as player $i \in N$;
we set $\pi_i=\{i\}$ if $\pi(i)= a_\emptyset$.
%IR
An assignment $\pi:N \rightarrow A$ for a \sfGASP\ is {\em individually rational} (IR) if 
%each player weakly prefers her own activity to doing nothing, i.e., 
for every player $i \in N$ with $\pi(i)\neq a_\emptyset$
we have $(\pi(i),|\pi_{i}|) \succeq_{i} (a_{\emptyset},1)$.
%blocking
A coalition $S \subseteq N$ and an activity $a \in A^{*}$ {\it strongly block} an assignment $\pi:N \rightarrow A$ if 
$\pi^a\subseteq S$ and $(a,|S|) \succ_{i} (\pi(i),|\pi_{i}|)$ for all $i \in S$.
%CR
An assignment $\pi:N \rightarrow A$ for a \sfGASP\ is called {\em core stable} (CR) if it is individually rational, and 
there is no coalition $S \subseteq N$ and activity $a \in A^{*}$ 
such that $S$ and $a$  strongly block $\pi$.
%deviation
Given an assignment $\pi:N \rightarrow A$ of a \gGASP, a player $i \in N$ is said to have 
\begin{itemize}
\item an {\em NS-deviation} 
to activity $a \in A^{*}$ if $i$ strictly prefers to join the group $\pi^a$, i.e., $(a,|\pi^a|+1) \succ_{i} (\pi(i),|\pi_{i}|)$.
\item an {\em IS-deviation} if it is an NS-deviation, and all players in $\pi^a$ accept it, 
i.e., $(a,|\pi^a|+1) \succeq_{j} (a,|\pi^a|)$ for all $j \in \pi^a$.
\end{itemize}

\subsection{Graphs and digraphs}
An {\em undirected graph}, or simply a {\em graph}, is a pair $(N,L)$, 
where $N$ is a finite set of {\em nodes} and 
$L\subseteq \{\, \{i, j\}\mid i, j\in N,i \neq j \,\}$ is a collection of {\em edges} between nodes. 
Given a set of nodes $S$, the {\em subgraph of $(N, L)$ induced by $S$}
is the graph $(S, L|S)$, where $L|S=\{\{i, j\}\in L\mid i, j\in S\}$.
%path
For a graph $(N,L)$, a sequence of distinct nodes $(i_1, i_2, \ldots, i_k)$, $k\geq 2$, 
is called a {\it path} in $L$ if $\{i_h,i_{h+1}\} \in L$ for $h\in [k-1]$. 
%cycle
A path $(i_1, i_2, \ldots, i_{k})$, $k\geq 3$, is said to be a {\it cycle} in $L$ if $\{i_{k},i_{1}\} \in L$.
%forest
A graph $(N,L)$ is said to be a {\it forest} if it contains no cycles.
%incident
An edge $e$ is {\em incident} to a node $i$ if $i \in e$. 
%adjacency
A pair of distinct nodes $i,j$ are {\em adjacent} if $\{i,j\} \in L$.
%connectivity
A subset $S \subseteq N$ is said to be {\it connected} in $(N, L)$ if for every pair of distinct nodes 
$i, j \in S$ there is a path between $i$ and $j$ in $L|S$. 
%tree
A forest $(N,L)$ is said to be a {\it tree} if $N$ is connected in $(N, L)$.
%star
A tree $(N,L)$ is called a {\it star} if there exists a central node $c \in N$ that is adjacent to every other node.
%clique
A subset $S\subseteq N$ of a graph $(N,L)$ is said to be a {\it clique} if 
for every pair of distinct nodes $i, j\in S$, $i$ and $j$ are adjacent.
%directed graph
\par
A {\it directed graph}, or a {\it digraph}, is a pair $(N, T)$ where $N$ is a finite set of nodes 
and $T \subseteq N \times N$ is a collection of {\em arcs} between nodes. 
%directed path
A sequence of distinct nodes $(i_1, i_2, \ldots, i_k)$, $k\geq 2$, is called a {\it directed 
path} in $T$ if $(i_h,i_{h+1}) \in T$ for $h=1,2,\ldots,k-1$.
%underlying graph
Given a digraph $(N, T)$, let $L(T)=\{\, \{i,j\} \mid (i,j) \in T \,\}$: 
the graph $(N, L(T))$ is the {\em undirected version} of $(N, T)$. 
%rooted tree
A digraph $(N, T)$ is said to be a {\it rooted tree} if $(N,L(T))$ 
is a tree and each node has at most one arc entering it. A rooted tree has exactly one node 
that no arc enters, called the {\it root}, and there exists a unique directed path from the root to every node of $N$. 
%parent,child, descendants
Let $(N, T)$ be a rooted tree. A node $i \in N$ is said to be a {\it child} of $j$ in $T$ if $(j,i) \in T$, and to be the {\it parent} of $j$ in $T$ if $(i,j) \in T$. A node $i \in N$ is called a {\it descendant} of $j$ in $T$ if there exists a directed path from $j$ to $i$ in $T$; here, $j$ is called a {\em predecessor} of $i$.

\section{Our Model}
We now define a group activity selection problem where communication links between the players 
are represented by an undirected graph.
\begin{define}
An instance of the {\em Group Activity Selection Problem with graph structure} (\gGASP) is given by an instance 
$(N,(\succeq_{i})_{i \in N},A)$ of a \sfGASP\ and a set of communication links between players $L \subseteq \{\, \{i,j\} 
\mid i,j\in N \land i\neq j \,\}$.
\end{define}

A coalition $S \subseteq N$ is said to be {\em feasible} if $S$ is connected in the graph $(N,L)$. An outcome of a 
\gGASP\ is a {\em feasible assignment} $\pi:N \rightarrow A$ such that $\pi_{i}$ is a feasible 
coalition for every $i \in N$. We adapt the definitions of stability concepts to our setting as follows. We say that a 
deviation by a group of players is {\em feasible} if the deviating coalition itself is feasible; 
a deviation by an individual player where player $i$ 
joins activity $a$ is {\em feasible} if $\pi^{a}\cup \{i\}$ is feasible. We modify the definitions in the previous section by 
only requiring stability against feasible deviations.
Note that an ordinary \sfGASP\ (without graph structure) is equivalent to a \gGASP\ 
where the underlying graph $(N,L)$ is complete.

A key feature of \gGASP\ as well as \GASP\ is that players' preferences are {\em anonymous}, i.e., players do not care about the identities of the group members.  We can thus show that checking whether a given feasible assignment is core stable is easy, irrespective of the structure of the social network. The proposition below generalizes Theorem~11 of \citet{Darmann2015}. Note that in many other contexts, deciding whether a given assignment is core stable is coNP-hard, for example in additively separable hedonic games \citep{Sung2007}. 
\begin{proposition}\label{prop:in-core}
Given an instance \,\,$(N, (\succeq_i))_{i\in N}, A, L)$\,\,
of \gGASP\ and a feasible assignment $\pi$ for that instance, 
we can decide in $O(pn^3)$ time whether $\pi$ is core stable. 
\end{proposition}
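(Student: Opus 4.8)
The plan is to verify individual rationality first, then check for the existence of a strongly blocking pair $(S,a)$. Individual rationality is easy: for each player $i$ with $\pi(i)\neq a_\emptyset$ we compare $(\pi(i),|\pi_i|)$ with $(a_\emptyset,1)$, which takes $O(n)$ time total. The substance is in detecting strong blocks. The key observation is that although a strongly blocking coalition $S$ may be large, the \emph{crucial constraint} on $S$ is that it must contain $\pi^a$ (all current players of $a$) and be connected; beyond that, every member just needs to strictly prefer $(a,|S|)$ to its current alternative. Since preferences are anonymous, once we fix the target activity $a$ and the target size $k=|S|$, whether a player $i$ is "willing" to be in $S$ depends only on whether $(a,k)\succ_i(\pi(i),|\pi_i|)$ — a single unit-time comparison.

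So I would iterate over all non-void activities $a\in A^*$ (there are $p$ of them) and all candidate sizes $k\in\{|\pi^a|,\dots,n\}$ (at most $n$ choices). For each pair $(a,k)$: first check that every player in $\pi^a$ strictly prefers $(a,k)$ to its current alternative — if not, no blocking coalition of size $k$ for activity $a$ exists containing $\pi^a$, so skip. Otherwise, let $W_{a,k}=\{\, i\in N \mid (a,k)\succ_i(\pi(i),|\pi_i|)\,\}$ be the set of willing players; note $\pi^a\subseteq W_{a,k}$ by the previous check. Now a strongly blocking $(S,a)$ with $|S|=k$ exists iff there is a connected set $S$ with $\pi^a\subseteq S\subseteq W_{a,k}$ and $|S|=k$. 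To test this, I restrict the graph $(N,L)$ to $W_{a,k}$, and look at how $\pi^a$ sits inside the connected components of this induced subgraph: if the players of $\pi^a$ lie in two or more different components of $(W_{a,k}, L|W_{a,k})$, then no connected $S$ can contain all of $\pi^a$ while staying inside $W_{a,k}$, so no block of this form. If $\pi^a$ lies entirely within a single component $C$ (or $\pi^a=\emptyset$, in which case take $C$ to be any component), then I need to know whether some connected subset of $C$ of size exactly $k$ contains $\pi^a$. Since $C$ itself is connected and contains $\pi^a$, and any connected subgraph containing a fixed connected "core" can be grown one vertex at a time, such an $S$ of size $k$ exists iff $|\pi^a|\le k\le |C|$; one simply does a BFS/DFS-style growth from $\pi^a$ within $C$, adding boundary vertices until size $k$ is reached. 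Computing the components and doing this growth costs $O(n+|L|)=O(n^2)$ per pair $(a,k)$, and the willing-set computation costs $O(n)$; over all $O(pn)$ pairs this gives $O(pn^3)$, matching the claimed bound.

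The main obstacle — and the point that needs care in the write-up — is the argument that once $\pi^a$ lies in a single component $C$ of the induced subgraph $(W_{a,k},L|W_{a,k})$, a connected set of exactly size $k$ containing $\pi^a$ exists whenever $|\pi^a|\le k\le|C|$. This relies on the elementary fact that in a connected graph one can order the vertices so that every prefix is connected and moreover so that a prescribed connected subset appears as a prefix; equivalently, contract $\pi^a$ to a single vertex and take a BFS order from it. The reason the size can be any value between $|\pi^a|$ and $|C|$ is just that adding one vertex at a time from the outer boundary keeps the set connected. I would state this as a small self-contained claim. One further subtlety to flag: when $\pi^a=\emptyset$, activity $a$ is currently unused, and a blocking coalition can be \emph{any} connected set of willing players of the right size, so one ranges over all components $C$ of $(W_{a,k},L|W_{a,k})$ and checks whether $|C|\ge k$; this does not change the asymptotics. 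Finally I would remark that we do not actually need to enumerate all sizes $k$ separately with a fresh $O(n^2)$ computation if one is careful, but the straightforward bound $O(pn^3)$ already suffices and keeps the argument clean.
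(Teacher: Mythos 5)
Your proposal is correct and follows essentially the same route as the paper: for each activity $a$ and each size $k$, form the set of players strictly preferring $(a,k)$ to their current alternative and check, via a connectivity/BFS computation in $O(n^2)$ time, whether it contains a connected set of size $k$ that includes $\pi^a$, giving $O(pn^3)$ overall. The only difference is that you spell out the growth argument (extending the connected set $\pi^a$ one boundary vertex at a time) and the $\pi^a=\emptyset$ case, which the paper treats more tersely.
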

\begin{proof}
Let $A=A^*\cup\{a_\emptyset\}$ and let $n=|N|$.
By scanning the assignment $\pi$ and the players' preferences, 
we can check whether $\pi$ is individually rational.
Now, suppose that this is the case. Then, for each $a\in A^*$
and each $s\in[n]$ we can check if there is a deviation
by a connected coalition of size $s$ that engages in $a$.
To this end, we consider the set $S_{a, s}$ of all players who strictly
prefer $(a, s)$ to their assignment under $\pi$ and verify
whether $S_{a, s}$ has a connected component of size at least $s$
that contains $\pi^a$; if this is the case, $\pi^a$ (which is itself connected or empty)
could be extended to a connected coalition of size exactly 
$s$ that strongly blocks $\pi$. If no such deviation exists, 
$\pi$ is core stable. The existence of a connected component of a given size can be checked in $O(n^2)$ time by using depth first search algorithm.
\end{proof}

In this paper, we will be especially interested in \gGASP s where $(N,L)$ is \emph{acyclic}. This restriction 
guarantees the existence of stable outcomes in many other cooperative game settings \citep{Demange2004}. However, this is not the case 
for \gGASP: here, all stable outcomes under consideration may fail to exist, even if $(N,L)$ is a path or a star.

\begin{example}[Stalker game]\label{ex:NS:empty}
	Consider a \gGASP\ with $N=\{1,2\}$, $A^{*}=\{a\}$, $L=\{\{1,2\}\}$, where preferences $(\succeq_{i})_{i \in N}$ are given as follows:
	\begin{align*}
	1:&~ (a,1) \succ (a_{\emptyset},1) \\
	2:&~ (a,2) \succ (a_{\emptyset},1)
	\end{align*}
	Thus, player 1 wishes to participate in activity $a$ alone, while player 2 (the ``stalker'') wants to participate in activity $a$ together with player 1.

This instance admits no Nash stable outcomes: If all players engage in the void activity, player 1 wants to start doing activity $a$. If player 1 does activity $a$,
then player 2 wants to join her coalition, causing player 1 to deviate
to the void activity. \qed
\end{example}

Similarly, a core stable outcome is not guaranteed to exist even for \gGASP s on paths and stars, as the following example shows.
\begin{example}\label{ex:core:empty}
Consider a \gGASP\ with $N=\{1,2,3\}$, $A^{*}=\{a,b\}$, $L=\{\{1,2\},\{2,3\}\}$, where preferences $(\succeq_{i})_{i \in N}$ are given as follows:
\begin{align*}
1:&~ (b,2) \succ (a,3) \succ (a_{\emptyset},1)\\
2:&~ (a,2) \succ (b,2) \succ (a,3) \succ (a_{\emptyset},1)\\
3:&~ (a,3) \succ (b,1) \succ (a,2) \succ (a_{\emptyset},1)
\end{align*}
We will argue that each individually rational feasible assignment $\pi$ admits a strongly blocking feasible coalition and activity. 
If all players do nothing, then player $3$ and activity $b$ strongly block $\pi$.
Now, there are only four individually rational feasible assignments where some player is engaged in a non-void activity; each of them is strongly blocked by some coalition and activity as follows (we write $S\to x$ to indicate that coalition $S$ strongly blocks $\pi$ together with activity $x$): 
\begin{enumerate}[label=(\arabic*)]
\item
$\pi(1)=b$, $\pi(2)=b$, $\pi(3)=a_{\emptyset}$: $\{2,3\} \to a$; 
\item
$\pi(1)=a_{\emptyset}$, $\pi(2)=a$, $\pi(3)=a$: $\{3\} \to b$;
\item
$\pi(1)=a_{\emptyset}$, $\pi(2)=a_{\emptyset}$, $\pi(3)=b$: $\{1,2,3\} \to a$;
\item
$\pi(1)=a$, $\pi(2)=a$, $\pi(3)=a$: $\{1,2\} \to b$; \qed
\end{enumerate}
\end{example}

\citet{Igarashi2016a} showed that in the context of hedonic games, 
acyclicity is sufficient for individually stable outcomes to exist: 
an individually stable partition of players always exists and can be computed in polynomial time. 
In contrast, it turns out that for \gGASP s this is not the case:
an individually stable outcome may fail to exist even if the underlying social network is a path;
moreover, this may happen even if there are only three players and their preferences are strict.

\begin{example}\label{ex:IS:empty}
Consider a \gGASP\ with $N=\{1,2,3\}$, $A^{*}=\{a,b,c\}$, $L=\{\{1,2\},\{2,3\}\}$, 
where players' preferences are as follows:
\begin{align*}
1:&~ (b,2) \succ (a,1) \succ (c,3) \succ (c,2) \succ (c,1) \succ (a_{\emptyset},1)\\
2:&~ (c,3) \succ (c,2) \succ (a,2) \succ (b,2) \succ (b,1) \succ (a_{\emptyset},1)\\
3:&~ (c,3) \succ (a,2) \succ (a,1) \succ (a_{\emptyset},1)
\end{align*}

We will argue that each individually rational feasible assignment $\pi$ admits an IS-deviation.
Indeed, if $\pi(1)=a_\emptyset$ then no player is engaged in $c$ and hence player $1$
can deviate to $c$. Similarly, if $\pi(2)=a_\emptyset$ then no player is engaged in $b$
and hence player $2$ can deviate to $b$. There are $9$ individually rational feasible assignments
where $\pi(1)\neq a_\emptyset$, $\pi(2)\neq a_\emptyset$; for each of them we can find an IS deviation
as follows (we write $i\to x$ to indicate that player $i$ has an IS-deviation to activity $x$): 
\begin{enumerate}[label=(\arabic*)]
\item
$\pi(1)=a$, $\pi(2)=b$, $\pi(3)=a_{\emptyset}$: $1\to b$; %player $1$ has an IS-deviation to $b$; 
\item
$\pi(1)=b$, $\pi(2)=b$, $\pi(3)=a_{\emptyset}$: $3\to a$; %player $3$ has an IS-deviation to $a$;
\item
$\pi(1)=b$, $\pi(2)=b$, $\pi(3)=a$: $2\to a$; %player $2$ has an IS-deviation to $a$;
\item
$\pi(1)=c$, $\pi(2)=a$, $\pi(3)=a$: $2\to c$; %player $2$ has an IS-deviation to $c$;
\item
$\pi(1)=c$, $\pi(2)=b$, $\pi(3)=a_\emptyset$: $3\to a$;
\item
$\pi(1)=c$, $\pi(2)=b$, $\pi(3)=a$: $2\to a$;
\item
$\pi(1)=c$, $\pi(2)=c$, $\pi(3)=a_{\emptyset}$: $3\to a$; %player $3$ has an IS-deviation to $a$;
\item
$\pi(1)=c$, $\pi(2)=c$, $\pi(3)=a$: $3\to c$; %player $3$ has an IS-deviation to $c$;
\item
$\pi(1)=c$, $\pi(2)=c$, $\pi(3)=c$: $1\to a$. %player $1$ has an IS-deviation to $a$. 
\end{enumerate}
Notice that the instance does not admit a core stable outcome either: if such an outcome existed, the assignment would satisfy individual stability due to the fact that all the preferences are strict, a contradiction to what we have seen above. 
\qed
\end{example}

\subsection{Copyable cases}
If all activities are copyable, we can effectively 
treat \gGASP\ as a non-transferable utility game on a graph. In particular, 
we can invoke a famous result of \citet{Demange2004} concerning the stability of non-transferable 
utility games on trees. Thus, requiring all activities to be copyable 
allows us to circumvent the non-existence result for the core 
(Example~\ref{ex:core:empty}). The argument is constructive.

\begin{theorem}[implicit in the work of~\citet{Demange2004}]\label{thm:core:copyable}
For every \gGASP\ where each activity $a \in A^{*}$ is copyable 
and $(N,L)$ is acyclic, a core stable feasible assignment exists and can be found in time polynomial in $p$ and~$n$.
\end{theorem}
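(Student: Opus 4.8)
The plan is to reduce the copyable case to the framework of \citet{Demange2004} on non-transferable utility (NTU) games played on a tree, by treating each player's set of acceptable ``roles'' as the consumption set in an NTU game whose interaction structure is the acyclic social network $(N,L)$. First I would recall what Demange's result gives: for an NTU hierarchical/tree game where the feasibility of a coalition is governed by connectivity in a fixed tree, a core stable outcome always exists, and it can be computed by a bottom-up procedure that processes the tree from the leaves to the root. The main work is to phrase \gGASP\ with copyable activities as such a game so that (i) the ``state'' a coalition can achieve is exactly a pair $(a,k)$ with $a\in A^*$ and $k$ the coalition size, (ii) a player's preference over the states available to her is the restriction of $\succeq_i$, and (iii) coalition feasibility is precisely connectivity in $(N,L)$. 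Copyability is what makes this clean: since each non-void activity has at least $n$ equivalent copies, whenever a connected coalition $S$ wants to perform activity-type $a$ at size $|S|$, a fresh copy of $a$ is always available regardless of what the rest of the assignment does, so the achievable payoff of a coalition depends only on the coalition itself and not on the global assignment. Without copyability this would fail (two groups competing for the same single activity), and indeed Example~\ref{ex:core:empty} shows the core can be empty.

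The key steps, in order. First, normalize the instance: we may assume $(N,L)$ is a tree (if it is a forest, treat each connected component independently, since no feasible coalition crosses components and hence no blocking coalition does either), and root it arbitrarily at some node $r$ to get a rooted tree. Second, define the NTU game: for each connected coalition $S$, the set of feasible payoff vectors is $\{\,(u_i)_{i\in S} : \exists a\in A^*,\ u_i = (a,|S|)\text{ for all }i\in S\,\} \cup \{\text{everyone in }S\text{ alone}\}$, where utilities are compared via each player's $\succeq_i$ (a singleton can always take $(a_\emptyset,1)$, and if the coalition is a singleton $\{i\}$ it may also take any $(a,1)$ it approves). For coalitions that are not connected in $(N,L)$ the feasible set is empty. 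Third, observe that a core stable outcome of this NTU game translates back into a core stable feasible assignment of the \gGASP: given the core vector, every player $i$ is assigned some $(a,k)$ with $k$ matching her coalition's size; assign each such coalition a distinct unused copy of the activity-type $a$ (possible by copyability, since we need at most $n$ copies of each type in total and there are at least $n$ available). A strongly blocking coalition-plus-activity for the assignment would, by equivalence of copies, be a blocking coalition in the NTU game, and vice versa; individual rationality corresponds to each player weakly preferring her state to $(a_\emptyset,1)$. Fourth, invoke \citet{Demange2004} to get existence, and extract the algorithm: the bottom-up procedure computes, for each subtree, the ``best guarantee'' each node can secure, in polynomial time; since each player has at most $pn+1$ relevant alternatives and coalition sizes range over $[n]$, every step is polynomial in $p$ and $n$, giving the claimed bound.

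The main obstacle I anticipate is purely expository: making the translation between ``$(a,k)$ as an abstract payoff label'' and ``an actual activity assignment using distinct copies'' fully rigorous, in particular checking that the correspondence between blocking coalitions is exact in both directions and that it respects the anonymity/equivalence structure. One has to be careful that a blocking coalition $S$ in the \gGASP\ sense requires $\pi^a\subseteq S$ for the \emph{specific copy} $a$, but under copyability we can always route a blocking group to a \emph{fresh} copy with empty current occupancy, so the $\pi^a\subseteq S$ constraint is vacuous for the copy we use; this is exactly the point where copyability is essential and should be stated explicitly. A secondary, minor issue is confirming that Demange's tree-NTU setup literally subsumes connectivity-feasibility on a tree (her ``hierarchies'' are rooted-tree coalition structures), which may require a sentence noting that connected subsets of a tree are exactly the coalitions feasible in her sense when the hierarchy is taken to be the rooted tree; alternatively one can cite \citet{Igarashi2016a}, who already carried out this correspondence for hedonic games on acyclic graphs, and note that copyable \gGASP\ is a special case of an NTU hedonic game on $(N,L)$.
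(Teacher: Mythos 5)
Your proposal is correct and takes essentially the same route as the paper: the paper itself presents this theorem as implicit in \citet{Demange2004}, and its proof is exactly the bottom-up computation of a guaranteed alternative $(a(i),|S(i)|)$ at each node of a rooted tree followed by a top-down assignment of fresh equivalent copies, with forests handled component by component. The only points the paper makes explicit that you leave implicit are (i) why each per-node maximization is polynomial despite the exponentially many connected subsets of descendants — by anonymity it suffices, for each fixed $(a,k)$, to check whether the descendants weakly preferring $(a,k)$ to their guarantees contain a connected set of size $k$ containing $i$, which yields the $O(n^3p)$ bound — and (ii) the small counting remark needed to back your ``fresh copy always exists'' claim in the corner case where all $n$ copies of a type are occupied (then every player is a singleton using such a copy, so a copy whose users lie inside the blocking coalition is still available).
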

\begin{proof}
If the input graph $(N, L)$ is a forest, we can process each of its connected components separately, so we assume that $(N, L)$ is a tree. Prior to giving a formal description of the algorithm (Algorithm \ref{alg:core}), we outline the basic idea. 
We choose an arbitrary node $r$ as the root and construct a rooted tree $(N,T)$ by orienting the edges in $L$ towards the leaves. We denote by $\ch(i)$ the set of children of $i$ and by $\desc(i)$ the set of descendants of $i$ (including $i$) in the rooted tree. For each $i  \in N$, we define $\height(i)=0$ if $\ch(i)=\emptyset$, and $\height(i)=1+\max \{\, \height(j)\mid j \in \ch(i)\,\}$ otherwise. We denote by $(N,T|S)$ the subdigraph induced by $S \subseteq N$, i.e., $T|S=\{\, (i,j) \in T \mid i,j \in S \,\}$. We define $\calC(i)$ to be the set of connected subsets of $\desc(i)$ for each $i \in N$.

The algorithm has two different phases: the bottom-up and the top-down phase. 
\begin{itemize}
	\item \emph{Bottom-up phase}: In the bottom-up phase, we will determine a {\em guaranteed} activity $a(i)$ and coalition $S(i)$ for every subroot $i$. To this end, we choose a connected subset $S(i)$ of $\desc(i)$ that maximizes $i$'s utility under the constraint that every descendant $j$ of $i$ in the coalition can agree, i.e., for any descendant $j \in S(i)$ of $i$, $(a(i),|S(i)|)$ is at least as preferred as $(a(j),|S(j)|)$. The utility level of $(a(i),|S(i)|)$ determined for each player $i \in N$ can be interpreted as $i$'s guaranteed utility in the final outcome.  
	\item \emph{Top-down phase}: In the top-down phase, the algorithm builds a feasible assignment $\pi$, by iteratively choosing a root $r^{\prime}$ of the remaining rooted trees and reassigning the activity $a(r^{\prime})$ to its coalition $S(r^{\prime})$. Since each activity is copyable, we can always find an activity that is equivalent to $a(r^{\prime})$ and has not been used by their predecessors.
\end{itemize}

This procedure is formalized in Algorithm~\ref{alg:core}.
\begin{algorithm}
\SetKwInOut{Input}{input} 
\SetKwInOut{Output}{output}
\SetKw{And}{and}
\SetKw{None}{None}
\caption{Finding core stable partitions}\label{alg:core}
\Input{tree $(N,L)$, activity set $A=A^*\cup \{a_{\emptyset}\}$, $r\in N$, preference $\succeq_i$, $i\in N$}
\Output{$\pi:N \rightarrow A$}
	{\tt // Bottom-up phase: assign activities to players in a bottom-up manner}\;
	make a rooted tree $(N,T)$ with root $r$ by orienting all the edges in $L$\;
	initialize $S(i)\leftarrow \{i\}$ and $a(i) \leftarrow a_{\emptyset}$ for each $i \in N$\;
	\ForEach{$t=0,\ldots,\height(r)$}{
	\ForEach{$i \in N$ with $\height(i)=t$}{
	{\tt // Find $i$'s favourite alternative where all members of $i$'s coalition can agree to join.}
	\ForEach{$(a,k) \in (A^{*} \times [n])\cup \{(a_{\emptyset},1)\}$}{
	\If{there exists $S \in \calC(i)$ such that $|S|=k$, $(a,k) \succeq_j (a(j),|S(j)|)$ for all $j \in S$, and $(a,k) \succ (a(i),|S(i)|)$ \label{step:favourite}}{
	set $S(i) \leftarrow S$ and $a(i) \leftarrow a$\;
	}
	}
	}
	}
	{\tt // Top-down phase: relabel players with their predecessor's activities}\;
	set $N^{\prime} \leftarrow N$ and $A^{\prime} \leftarrow A^*$\;
	\While{$N^{\prime} \neq \emptyset$}{
	choose a root $r^{\prime}$ of some connected component of the digraph $(N^{\prime},T|{N^{\prime}})$ and find an activity $b \in A^{\prime}\cup \{a_{\emptyset}\}$ that is equivalent to $a(r^{\prime})$\;
	set $\pi(i) \leftarrow b$ for all $i \in S(r^{\prime})$\;
	set $N^{\prime} \leftarrow N^{\prime} \setminus S(r^{\prime})$ and $A^{\prime} \leftarrow A^{\prime} \setminus \{b\}$\;
}
\end{algorithm}

We will now argue that Algorithm \ref{alg:core} correctly finds a core stable feasible assignment.
Observe that the following lemma holds due to the {\bf if} statement in line \ref{step:favourite} of the algorithm.
\begin{lemma}\label{lem0}
For all $i \in N$, the following statements hold:
\begin{itemize}
\item[$(${\rm i}$)$] $i$ has no incentive to deviate to an alternative of size $1$, i.e., $(a(i),|S(i)|)\succeq_i (b,1)$ for all $b \in A$,
\item[$(${\rm ii}$)$] all players in $S(i)$ weakly prefer $(a(i),|S(i)|)$ to their guaranteed alternative $(a(j),|S(j)|)$.
\end{itemize}
\end{lemma}

Now, by $(${\rm ii}$)$ in Lemma \ref{lem0}, it can be easily verified that 
\begin{equation}\label{eq:core}
(\pi(i),|\pi_i|) \succeq_i (a(i),|S(i)|)~\mbox{for all}~i \in N. 
\end{equation}
Combining this with $(${\rm i}$)$, we know that at the assignment $\pi$, all players weakly prefer their alternatives to engaging alone in unused activities or the void activity. It thus remains to show that no connected coalition together with a non-void activity strongly blocks $\pi$.

Take any connected subset $S \subseteq N$ and activity $a \in A^*$. Let $i$ be the subroot of the coalition $S$ so that $S \in \calC(i)$. First, consider the case when $(a(i),|S(i)|) \succeq_i (a,|S|)$. By \eqref{eq:core}, it is clear that the coalition $S$ and the activity $a$ do not strongly block $\pi$. Second, consider the case when $(a,|S|) \succ_i (a(i),|S(i)|)$. By line \ref{step:favourite} of the algorithm, this means that there is a player $j \in S$ such that $(a(j),|S(j)|)\succ_{j} (a,|S|)$. Thus, $S$ and $a$ do not strongly block $\pi$. We conclude that $\pi$ is core stable. 

It remains to analyze the running time of Algorithm~\ref{alg:core}.
Consider the execution of the algorithm for a fixed player $i$.
Let $d=|\desc(i)|$. Line~\ref{step:favourite} checks whether there is a connected coalition of size $k$ that can engage in $a$. Similarly to the proof of Proposition \ref{prop:in-core}, we do this by computing the set $S$ of all descendants in $\desc(i)\setminus \{i\}$ who weakly prefer $(a,k)$ to their guaranteed coalition and verify whether the set $S \cup \{i\}$ has a connected component of size at least $k$. This procedure requires at most $d$ queries: no descendant of $i$ is queried more than once. 
Summing over all players, we conclude that the number of queries for the bottom-up phase is bounded by $O(n^3p)$. It is immediate that the top-down phase can be done in polynomial time.
This completes the proof of the theorem. 
\end{proof}

Similarly, an individually stable outcome is guaranteed to exist in copyable cases if the underlying graph is acyclic. 
Moreover, we can adapt the result of \citet{Igarashi2016a} for hedonic games
and obtain an efficient algorithm for computing an individually stable outcome. The proof can be found in the appendix.

\begin{theorem}\label{thm:IS:copyable}
Each instance of \gGASP\ where each activity $a \in A^{*}$ is copyable 
and $(N,L)$ is acyclic admits an individually stable feasible assignment; 
moreover, such an assignment can be found in time polynomial in $p$ and $n$.
\end{theorem}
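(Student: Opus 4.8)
The plan is to keep the bottom-up/top-down skeleton of Algorithm~\ref{alg:core} and patch it so that the coalitions it produces become \emph{saturated}, i.e.\ already contain every descendant who would both want to join and be accepted by the group; this is precisely the ingredient that rules out IS-deviations but is not delivered by core stability (an IS-deviation needs only \emph{weak} acceptance from the incumbents, whereas a strong block needs strict improvement for everyone, so a core stable outcome need not be individually stable once preferences may be non-strict --- as can happen already with two players). As in Theorem~\ref{thm:core:copyable}, I would handle each connected component separately, root the tree at an arbitrary vertex $r$, orient the edges towards the leaves, and reuse the notation $\ch(\cdot)$, $\desc(\cdot)$, $\height(\cdot)$, $\calC(\cdot)$.

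First I would run the bottom-up phase of Algorithm~\ref{alg:core} verbatim, obtaining a guaranteed pair $(a(i),S(i))$ for each $i$, but immediately after the inner loop finishes for a node $i$ I would insert a \emph{saturation step}: while there is a descendant $j\notin S(i)$ adjacent to $S(i)$ in the rooted tree (equivalently, the parent of $j$ lies in $S(i)$) such that $(a(i),|S(i)|+1)\succ_j (a(j),|S(j)|)$ and $(a(i),|S(i)|+1)\succeq_m (a(i),|S(i)|)$ for every $m\in S(i)$, set $S(i)\leftarrow S(i)\cup\{j\}$. Each such step leaves $i$ indifferent between the old and the new alternative (take $m=i$ in the second condition; the old alternative was already $i$-optimal among feasible options), and keeps every member of $S(i)$ weakly above its guaranteed alternative (old members by the second condition, the newcomer by the first), so the two parts of Lemma~\ref{lem0} and inequality~\eqref{eq:core} survive for the saturated $S(i)$. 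The top-down phase is then run unchanged, relabelling each finalized coalition with a fresh equivalent activity using copyability.

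For correctness, individual rationality and the absence of any profitable deviation to an alternative of size $1$ --- hence in particular to a fresh singleton activity or to $a_\emptyset$ --- follow exactly as in Theorem~\ref{thm:core:copyable} from Lemma~\ref{lem0}(i) together with \eqref{eq:core}. It then remains to exclude an IS-deviation by some player $j$ into a non-singleton group $\pi^a=C$ (if $\pi^a=\emptyset$ the deviation has size $1$ and is already excluded). As in Theorem~\ref{thm:core:copyable}, $C$ equals the saturated coalition $S(c)$ of a node $c$ chosen in the top-down phase, and $c$ is the (unique) highest vertex of $C$; since $C$ is a connected subtree rooted at $c$, every neighbour of $C$ lies in $\desc(c)$ or is the parent of $c$, so only two cases arise. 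If $j\in\desc(c)$ with parent in $C$, then $j$ would have met the saturation test at $c$: the incumbents' acceptance condition is exactly $(a,|C|+1)\succeq_m(a,|C|)$ for all $m\in C$, while $(a,|C|+1)\succ_j(\pi(j),|\pi_j|)\succeq_j(a(j),|S(j)|)$ (the last step by \eqref{eq:core}) gives the other condition --- contradicting saturation of $S(c)$. If $j$ is the parent of $c$, then $j$ is processed strictly after $c$, the coalition $C\cup\{j\}$ lies in $\calC(j)$, and for $(a,|C|+1)$ with this coalition the eligibility test of line~\ref{step:favourite} is satisfied (for $m\in C$, $(a,|C|+1)\succeq_m(a,|C|)\succeq_m(a(m),|S(m)|)$, using the acceptance condition and \eqref{eq:core}); hence the inner loop guarantees $(a(j),|S(j)|)\succeq_j(a,|C|+1)$, and then \eqref{eq:core} yields $(\pi(j),|\pi_j|)\succeq_j(a,|C|+1)$, contradicting the assumption that $j$ strictly prefers $(a,|C|+1)$.

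For the running time, the saturation step at a node $i$ only ever inserts descendants of $i$, so it performs $O(n)$ insertions at $i$ over the whole run, and each candidate can be detected with polynomially many preference queries by a connected-component computation, as in the analysis of Algorithm~\ref{alg:core}; so the overall algorithm still runs in time polynomial in $p$ and $n$. I expect the main obstacle to be not the complexity bound but the correctness of the saturation idea: one must verify carefully that enlarging a coalition never damages the ``guarantee'' invariants underpinning Lemma~\ref{lem0} (so that the top-down phase and \eqref{eq:core} remain valid), and, somewhat more delicately, that the two cases above really are exhaustive --- which is exactly where acyclicity is used, since in a rooted tree a connected coalition has a unique highest vertex and all of its outside neighbours hang either strictly below it or directly above it. The resulting argument is the natural \gGASP\ counterpart of the individual-stability algorithm of \citet{Igarashi2016a} for hedonic games, with the additional bookkeeping forced by the presence of activities.
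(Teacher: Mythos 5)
Your proposal is correct, and it reaches the result by a route that shares the paper's overall skeleton but differs in a substantive way in the bottom-up phase. The paper's own algorithm (Algorithm~\ref{alg:is} in the appendix) does \emph{not} reuse the core-style selection of line~\ref{step:favourite}: at each node $i$ it only lets $i$ either take her best singleton activity or join the best coalition $S(j)$ of a child $j$ that IS-accepts her (the set $C^*(i)$), and then runs exactly the saturation while-loop you propose; correctness rests on its Lemma~\ref{lem1}, whose part (ii) (``no IS-deviation to children's coalitions'') is what kills deviations of a subroot's parent into that subroot's coalition. You instead keep the core algorithm's rule ``pick $i$'s best alternative for which some connected set of descendants all weakly beat their guarantees'' and bolt the same saturation loop onto it; child-side deviations into a final coalition $S(c)$ are then excluded by saturation (as in the paper), while parent-side deviations are excluded by the optimality invariant of line~\ref{step:favourite} together with \eqref{eq:core} -- the candidate $C\cup\{j\}$ with activity $a(c)$ is eligible at $j$, so $j$'s guarantee already weakly dominates $(a(c),|C|+1)$. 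Your indifference argument for $i$ during saturation (acceptance with $m=i$ in one direction, eligibility of the enlarged coalition plus optimality in the other) is the right way to preserve Lemma~\ref{lem0} and \eqref{eq:core}, and your case split (deviator is a child of a member of $C$, or the parent of the subroot $c$) is exhaustive by acyclicity, exactly as you say. The trade-off: your variant leans on the heavier $O(n^3p)$ core-style search per node but reuses Theorem~\ref{thm:core:copyable}'s machinery almost verbatim, whereas the paper's lighter ``join the best accepting child'' rule mirrors the hedonic-games algorithm of Igarashi and Elkind and needs its own Lemma~\ref{lem1}. One small point of precision: in the parent-of-$c$ case you say the line~\ref{step:favourite} test ``is satisfied''; strictly, that test also demands strict improvement for $j$ over her current guess, so the correct statement is that either the update fired or it failed only for lack of strict improvement -- in both cases the final guarantee of $j$ weakly dominates $(a(c),|C|+1)$, which is the conclusion you use, so this is a wording issue rather than a gap.
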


The stalker game in Example~\ref{ex:NS:empty} does not admit a Nash stable outcome even if we make all
activities copyable. 
Thus, in contrast to core and individual stability, there is no existence guarantee for Nash stability even if activities are copyable.
However, with copyable activities, we can still \emph{check} for the existence of a Nash stable outcome
in polynomial time if the social network is acyclic.

\begin{theorem}\label{thm:NS:copyable}
Given an instance $(N,A,(\succeq_{i})_{i \in N},L)$ of \gGASP\ where each activity $a \in A^{*}$ is copyable 
and the graph $(N,L)$ is acyclic, one can decide whether it admits a Nash stable outcome in time polynomial in $p$ and 
$n$.
\end{theorem}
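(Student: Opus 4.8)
The plan is to convert the question into a purely combinatorial problem about the forest $(N,L)$ and then solve that problem by a bottom-up dynamic program on the (rooted) forest. First I would observe that, because preferences are anonymous and each non-void activity has at least $n$ equivalent copies while any partition of $N$ has at most $n$ parts, a feasible assignment is essentially the same object as a partition of $N$ into connected coalitions together with a choice, for each coalition, of an activity \emph{equivalence class} (or the void activity, in which case the coalition must be a singleton); every such labelled partition can be realised by handing distinct copies to distinct coalitions. Next, a player $i$ can always move to a spare copy of any class $\tau$ and engage in it alone (the $n-1$ other players occupy at most $n-1$ copies of $\tau$), so $i$ has an $\NS$-deviation of this kind exactly when $(\tau,1)\succ_i(\pi(i),|\pi_i|)$, and any remaining $\NS$-deviation of $i$ merely joins an existing coalition to which $i$ is adjacent (this is feasible since coalitions are connected). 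Hence $\pi$ is Nash stable iff the corresponding labelled partition satisfies: (a) for every coalition $C$ and every $i\in C$ we have $(\tau(C),|C|)\succeq_i(b,1)$ for all $b\in A$ (reading $(a_\emptyset,1)$ when $C$ is a void singleton); and (b) for all distinct coalitions $C\ne C'$ with $C'$ non-void and every $i\in C$ adjacent to some node of $C'$ we have $(\tau(C),|C|)\succeq_i(\tau(C'),|C'|+1)$. It therefore suffices to decide whether such a labelled partition of $(N,L)$ exists, and since there are no deviations between connected components we may treat each component separately.

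I would then root the tree at a node $r$, write $T_v$ for the subtree at $v$, and use the structural fact that a connected set $C$ has a unique vertex $\mathrm{top}(C)$ closest to $r$, that $C\subseteq T_{\mathrm{top}(C)}$, and — crucially — that only $\mathrm{top}(C)$ can be adjacent to a vertex outside $T_{\mathrm{top}(C)}$. Consequently each instance of~(b) is ``governed'' by a single tree edge $\{u,c\}$ in which $c$ is the top of the lower of the two coalitions and $u$ is its parent, and it can be checked when $u$ is processed; the only instance not covered this way has $c=r$ and is vacuous. The dynamic program computes, bottom-up, for each $v$ the set of \emph{admissible states}, where a state records: the activity class $\delta$ of $v$'s coalition (possibly $a_\emptyset$); a flag marking the coalition \emph{closed} ($v$ is its top) or \emph{open} (it continues into $\mathrm{parent}(v)$); the size $s$ of the portion of that coalition inside $T_v$; and, for open states, a \emph{guessed} eventual total size $S\in[n]$ of the whole coalition. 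A state is admissible when $T_v$ has a labelled partition realising it under which all instances of~(a) and all instances of~(b) governed by edges inside $T_v$ hold, using $S$ as that coalition's size wherever its size is needed.

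The recursion at $v$ with children $c_1,\dots,c_d$ is: choose $\delta$ and, if $\delta\ne a_\emptyset$, a guess $S$; then, independently for each $c_j$, either \emph{merge} $c_j$'s coalition into $v$'s (this requires an open state of $c_j$ with the same $\delta$ and $S$, and adds its size) or \emph{detach} it as a separate closed coalition $(\delta_j,s_j)$, in which case verify the two inequalities of~(b) across $\{v,c_j\}$, namely $(\delta,S)\succeq_v(\delta_j,s_j+1)$ and $(\delta_j,s_j)\succeq_{c_j}(\delta,S+1)$ (with the obvious adjustments when $\delta=a_\emptyset$: no child may merge, and the second inequality, which would amount to joining the void activity, is dropped); finally set $v$'s accumulated size to $1+\sum_{\text{merged }j}s_j$, check condition~(a) for $v$ against $(\delta,S)$, and record the open state with that accumulated size (when it is $\le S$) and/or the closed state (when it equals $S$). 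There are $O(pn^2)$ states per node and the transitions are polynomial, so the whole algorithm runs in time polynomial in $p$ and $n$; we answer ``yes'' iff $r$ has an admissible closed state. Correctness follows because the root's coalition must be closed, every instance of~(a) and every governed instance of~(b) has been verified, and the one uncovered instance of~(b) (above $r$) is vacuous; conversely a Nash stable outcome yields such a state directly.

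The one genuine obstacle is that, while a subtree is being processed, the coalition crossing its top edge has a size that is not yet determined, whereas conditions~(a) and~(b) for that coalition's members refer to its final size. Guessing $S$ and carrying it in the state resolves this cleanly: it multiplies the state space only by a factor of $n$, and it makes every instance of~(a) and~(b) checkable exactly at the moment the governing node or edge is processed, so the DP never has to remember which players already belong to a given coalition. Everything else — the merge/detach bookkeeping and the argument that each inequality of~(b) is checked exactly once — is routine.
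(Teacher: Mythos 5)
Your proposal is correct and follows essentially the same route as the paper's own proof: a bottom-up dynamic program on the rooted tree (handling forest components separately) whose state carries the coalition's activity class, the guessed final coalition size, and the number of coalition members accumulated so far, with a merge-or-detach choice per child and the cross-edge no-deviation checks performed at the parent. Your open/closed states with parameters $(\delta,s,S)$ correspond exactly to the paper's table $\NS[i,c,(a,k),t]$ (your $S$ is the paper's $k$, your $s$ its $t$, closure being $t=k$), so aside from your more explicit labelled-partition reformulation and the ``unique connecting edge'' bookkeeping argument, the two proofs coincide.
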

\begin{proof}
Again, we assume that $(N, L)$ is a tree. We choose an arbitrary node as the root and construct a rooted tree by orienting the edges in $L$ towards the leaves. We denote by $\ch(i)=\{j_1,j_2,\ldots,j_{|\ch(i)|}\}$ the set of children of $i$; and we denote by $\desc(i,c)$ the set of the descendants of the first $c$-th children of $i$ (including $j_1,j_2,\ldots,j_c$) in the rooted tree. 
Then, for each player $i \in N$, each $c \in [|\ch(i)|]$, each alternative $(a,k) \in X$, and $t\in[k]$
we set $\NS[i,c,(a,k),t]$ to \emph{true} if there exists a feasible assignment $\pi:N \rightarrow A$ such that $|\pi_i \cap (\desc(i,c)\cup \{i\})|= t$, $\pi(i)=a$, each player in $\pi_i \cap (\desc(i,c)\cup \{i\})$ likes $(a, k)$ at least as much as any alternative she can deviate to (including the void activity), and no player in $\desc(i,c)\setminus\pi_i$ has an NS feasible deviation.
Otherwise, we set $\NS[i,c,(a,k),t]$ to \emph{false}.
By construction, there exists a Nash stable feasible assignment if and only if $\NS[r,|\ch(r)|,(a,k),k]$ is {\em true} for some alternative $(a,k) \in X$, where $r$ is the root of the rooted tree.

For each player $i \in N$, each $c \in [|\ch(i)|]$, each alternative $(a,k) \in X$, 
and each $t \in [k]$, we initialize $\NS[i,c,(a,k),t]$ to {\em true} if $t=1$ 
and $i$ weakly prefers $(a,k)$ to any alternative of size $1$, 
and we set $\NS[i,c,(a,k),t]$ to {\em false} otherwise. Then, for $i \in N$ from the bottom to the root, we iterate through all the 
children $j_c$ of $i$ and update $\NS[i,c,(a,k),t]$ step-by-step; more precisely, for each child $j_c$ of $i$ and for 
$t \in [k]$, we set $\NS[i,c,(a,k),t]$ to {\em true} if 
\begin{itemize}
\item $t \geq 2$ and there exists an $x \in [t-1]$ such that both $\NS[i,c-1,(a,k),t-x]$ and 
$\NS[j,c,(a,k),x]$ are {\em true}, or 
\item $\NS[i,c-1,(a,k),t]$ is {\em true}, and players $i$ and $j_c$ can be {\em separated} from each other, i.e., there exists $(b,\ell) \in X$ such that 
(i) $\NS[j_c,|\ch(j)|,(b,\ell),\ell]$ is {\em true}, (ii) $b=a_{\emptyset}$ or $(a,k) \succeq_{i} (b,\ell+1)$, and (iii) $a=a_{\emptyset}$ or $(b,\ell) \succeq_{j} (a,k+1)$.
\end{itemize}
In cases where $\NS[r,|\ch(r)|,(a,k),k]$ is {\em true} for some alternative $(a,k) \in X$, a Nash stable feasible assignment can be found using dynamic programming. 

This can be done in polynomial time since the size of the dynamic programming table is at most $n^3p$ and each entry can be filled in time $O(n^2p)$. This completes the proof of the theorem. 
\end{proof}

We note that these tractability results for copyable cases do not extend to arbitrary graphs: \citet{Ballester2004} showed that it is NP-complete to determine the existence of core or individually or Nash stable outcomes for anonymous hedonic games, which can be considered as a subclass of \gGASP s whose graph is a clique. 
\section{NP-completeness results}
We now move on to the case where each activity can be used at most once. For other types of cooperative games, many desirable outcomes can be computed in polynomial time if the underlying network structure is simple \citep{Chalkiadakis2016,Elkind2014,Igarashi2016a}. In particular, \citet{Igarashi2016a} showed it is easy to compute stable partitions for hedonic games on trees with polynomially bounded number of connected coalitions. However, computing stable 
solutions of \gGASP\ turns out to be NP-complete even if the underlying network is a path, a star, or a graph with constant-size 
connected components. For each family of graphs, we will reduce from a different combinatorial problem that is structurally similar to our problem. %Essentially, we create unstable gadgets; in order to stabilize them, we need to assign activities to players so that the assignments correspond to desired solutions of the respective problem.

%path%%%%%%%%%%%%%%%%%%%%%%%%%%%%%%%%%%%%%%%%%%%%
\paragraph{Paths.}
Our proof for paths is by reduction from a restricted version of the NP-complete problem {\sc Rainbow Matching}. Given a 
graph $G$ and a set of colors $\calC$, a {\em proper edge coloring} is a mapping $\phi:E \rightarrow \calC$ where $\phi(e) \neq \phi(e^{\prime})$ 
for all edges $e, e^{\prime}$ such that $e\neq e^\prime$ and $e\cap e^{\prime}\neq\emptyset$.
Without loss of generality, we assume that $\phi$ is surjective.
 A {\em properly edge-colored graph} $(G,\calC,\phi)$ is a graph together with a set of colors 
and a proper edge coloring. A matching $M$ 
in an edge-colored graph $(G,\calC,\phi)$ is called a {\em rainbow matching} if all edges of $M$ have different colors. Given a properly edge-colored graph $(G, \calC,\phi)$ together with an integer $k$, {\sc Rainbow Matching} asks whether $G$ admits a rainbow matching with at least $k$ edges.  
\citet{Le2014} show that {\sc Rainbow Matching} remains NP-complete even for properly edge-colored paths. 

\begin{theorem}\label{thm:NPhardness:path:NS}
Given an instance of \gGASP\ whose underlying graph is a path, it is {\em NP}-complete to determine whether it has a Nash stable feasible assignment.
\end{theorem}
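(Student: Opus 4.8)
The plan is to establish membership in NP by a direct verification argument and NP-hardness by a polynomial-time reduction from \textsc{Rainbow Matching} on properly edge-coloured paths, which is NP-complete by \citet{Le2014}. Membership in NP is immediate: given a feasible assignment $\pi$, individual rationality is checked in linear time, and checking that no player has a feasible NS-deviation amounts to testing, for every player $i$ and every activity $a\in A^{*}$, whether $\pi^{a}\cup\{i\}$ is connected and $(a,|\pi^{a}|+1)\succ_{i}(\pi(i),|\pi_{i}|)$, which is polynomial.

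For hardness, take a properly edge-coloured path $G$ with vertices $v_{1},\dots,v_{m}$ and edges $e_{i}=\{v_{i},v_{i+1}\}$ for $i\in[m-1]$, colour set $\calC$, colouring $\phi$, and target $k$ (we may assume $k\le\lfloor m/2\rfloor$, as otherwise we output a fixed no-instance). I would construct a \gGASP{} whose social network is a path made up of one \emph{edge gadget} per edge of $G$, laid out in the order $e_{1},\dots,e_{m-1}$, together with a short \emph{forcing gadget} appended at one end. I would use one activity $a_{c}$ for each colour $c\in\calC$, a family of auxiliary `dummy' activities, and one special activity $g$ for the forcing gadget. The intended semantics is that edge $e_{i}$ is \emph{selected} exactly when the whole coalition of its gadget engages in $a_{\phi(e_{i})}$. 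Two properties then come for free. Since each activity can host at most one coalition, at most one edge of each colour is selected, which gives the rainbow condition. Since the network is a path, every feasible coalition is a contiguous subpath; by arranging that consecutive edge gadgets share a `vertex player' (mirroring the vertex $v_{i+1}$ shared by $e_{i}$ and $e_{i+1}$), selecting $e_{i}$ forces that shared player into $a_{\phi(e_{i})}$ and thus prevents $e_{i}$ and $e_{i+1}$ from both being selected, which gives the matching condition. The dummy activities serve to give the players of an unselected edge gadget a stable `off' state with no NS-deviation available.

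The forcing gadget's job is to make the whole instance stable exactly when at least $k$ edges are selected. I would build it as a conditional version of the stalker game of Example~\ref{ex:NS:empty}: it contains a player $z$ whose only individually rational options are (a) an unstable stalker configuration, and (b) sitting in $g$ inside a connected witnessing coalition, where the preferences along the edge chain are tuned so that a witnessing coalition of the required size can be assembled precisely when at least $k$ edges are selected. Hence, if fewer than $k$ edges are selected, $z$ is trapped in the stalker oscillation and no Nash stable outcome exists; if at least $k$ are selected, $z$ settles into $g$ and a Nash stable assignment is obtained. For the two directions: from a rainbow matching $M$ of $G$ with $|M|\ge k$, keep exactly $k$ of its edges, put every kept edge's gadget into its `on' state in $a_{\phi(e)}$, put the remaining edge-gadget players into dummy activities, and put $z$ with its witnesses into $g$, then verify that no player can feasibly NS-deviate; conversely, from a Nash stable $\pi$ one argues that the forcing gadget can be stable only with $z$ in $g$, which forces at least $k$ edge gadgets into their `on' state, and the corresponding edges form a matching (by the contiguity argument) that is rainbow (by the single-use argument).

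The main obstacle is the joint design and verification of the edge chain and the forcing gadget. One must choose the gadget preferences so that the only Nash stable local configurations of an edge gadget are the clean `on' and `off' states, ruling out half-selected configurations and surprise deviations into a colour activity, a dummy activity, or the void activity; one must make threshold detection in the forcing gadget faithful; and, precisely because a path is so rigid, one must transmit whatever information the forcing gadget needs along the chain without breaking the `on'/`off' dichotomy. Ruling out every unintended feasible NS-deviation in both directions is a routine but lengthy case analysis, and keeping all the gadget preferences mutually consistent is the delicate part of the argument.
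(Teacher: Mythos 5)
Your choice of reduction source and the skeleton of the construction (one activity per colour, triples $u$--$e$--$v$ along the subdivided path supplying the matching condition via connectivity and the rainbow condition via single use of each activity) coincide with the paper's, but the step that carries all the weight---enforcing that at least $k$ edges are selected---is delegated to an unconstructed ``forcing gadget'' that performs threshold detection, and this is exactly where the plan has a genuine gap. In a \gGASP\ on a path, a player's deviation incentives depend only on the size of her own coalition, the sizes of the (at most two) coalitions she is adjacent to, and which activities are unused; preferences are anonymous and cannot be conditioned on what happens elsewhere in the network. A single gadget with one player $z$ appended at one end of the path therefore has no way to ``see'' how many edge gadgets, scattered along the rest of the path, are in their on state: any witnessing coalition containing $z$ is a contiguous subpath near $z$, and neither its feasibility nor its members' stability can depend on a global count of selected edges. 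You acknowledge that the preferences must be ``tuned'' along the chain to transmit this information, but that tuning is precisely the missing construction, and it is the nontrivial content of the theorem.

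The paper avoids any counting gadget by distributing the threshold check: it appends one copy of the stalker game of Example~\ref{ex:NS:empty} for each colour $c$ (this copy admits no stable local configuration unless activity $c$ is used somewhere outside it) together with exactly $q-k$ garbage collectors, each of whom can absorb one colour activity alone. Hence in every Nash stable assignment all $q$ colours must be in use, at most $q-k$ of them can be absorbed by garbage collectors, and the remaining at least $k$ colours must be used in the graph part, where the only individually rational option is a triple $(u,e,v)$ with activity $\phi(e)$; these triples yield a rainbow matching of size at least $k$. If you want to complete your argument, replace the single threshold gadget by this per-colour forcing plus a bounded supply of absorbers; your dummy activities for the ``off'' state then also become unnecessary, since once every colour activity is in use an unselected vertex or edge player has no individually rational feasible deviation from the void activity.
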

\begin{proof}
Clearly, our problems are contained in NP since we can easily check whether a given assignment is Nash stable. The hardness proof proceeds by a reduction from {\sc Path Rainbow Matching}.

\emph{Construction.}
Given an instance $(G, \calC, \phi, k)$ of {\sc Path Rainbow Matching} where 
$|\calC|=q$, we create a vertex 
player $v$ for each $v \in V$ and an edge player $e$ for each $e \in E$. To create the social network, 
we start with $G$ and place each edge player in the middle of the respective edge, i.e., 
we let $N_{G}=V \cup E$ and $L_{G}=\{\, \{v,e\} \mid v \in e \in E \,\}$. To the 
right of the graph $(N_G,L_G)$, we attach a path consisting of ``garbage collectors'' $\{g_{1},g_{2},\ldots,g_{q-k}\}$ 
and $q$ copies $(N_{c},L_{c})$ of the stalker game where $N_{c}=\{c_1,c_2\}$ and
$L_{c}=\{ \{c_1,c_2\}\}$ for each $c \in \calC$. See Figure \ref{fig:path:Nash}.

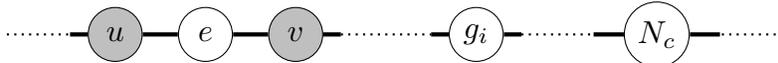
\begin{figure*}[htb]
\centering
%path
\begin{tikzpicture}[scale=1.2, transform shape, inner sep=0pt, minimum size=17pt, font=\small]
	\node[draw, circle,fill=gray!50](u) at (0,0) {$u$};
	\node[draw, circle](e) at (1,0) {$e$};
	\node[draw, circle,fill=gray!50](v) at (2,0) {$v$};
	\node[draw, circle](g) at (4,0) {$g_i$};
	\node[draw, circle](Nc) at (6,0) {$~N_c~$};

	\draw[dotted, >=latex,thick] (-1.2,0)--(-0.5,0);				
	\draw[-, >=latex,ultra thick] (-0.5,0)--(u);	
	
	\draw[-, >=latex,ultra thick] (u)--(e);
	\draw[-, >=latex,ultra thick] (e)--(v);
	\draw[-, >=latex,ultra thick] (2.5,0)--(v);		
	
	\draw[dotted, >=latex,thick] (2.5,0)--(3.5,0);
	\draw[-, >=latex,ultra thick] (g)--(3.5,0);
	\draw[-, >=latex,ultra thick] (g)--(4.5,0);

	\draw[dotted, >=latex,thick] (4.5,0)--(5.3,0);
	\draw[-, >=latex,ultra thick] (Nc)--(5.3,0);
	\draw[-, >=latex,ultra thick] (Nc)--(6.7,0);
	\draw[dotted, >=latex,thick] (7.4,0)--(6.7,0);
\end{tikzpicture}

\caption{
The graph constructed in the hardness proof for \gGASP s on paths. 
\label{fig:path:Nash}
}
\end{figure*}

We introduce a color activity $c$ for each color $c \in \calC$.
Each vertex player $v$ approves color activities $\phi(e)$ of its adjacent edges $e$ with size $3$; each edge player $e$ 
approves the color activity $\phi(e)$ of its color with size $3$; each garbage collector $g_{i}$ approves any color 
activity $c$ with size $1$; finally, for players in $N_{c}$, $c \in \calC$,  player $c_1$ approves its color activity 
$c$ with size $1$, whereas player $c_2$ approves $c$ with size $2$.

\emph{Correctness.}
We show
that $G$ has a rainbow matching of size at least $k$ if and only if there exists a Nash stable 
feasible assignment.

Suppose that there exists a rainbow matching $M$ of size $k$. We construct a feasible assignment 
$\pi$ where for each $e=\{u, v\}\in M$ we set $\pi(e)=\pi(u)=\pi(v)=\phi(e)$,
each garbage collector $g_{i}$, $i\in[q-k]$, is arbitrarily assigned to one of the remaining $q-k$ color activities, 
and the remaining players are assigned to the void activity. The assignment $\pi$ is Nash stable, since every garbage 
collector as well as every edge or vertex player assigned to a color activity
are allocated their top alternative, and no remaining player has an NS feasible deviation.

Conversely, suppose that there is a Nash stable feasible assignment $\pi$. Let $M=\{\, e \in E(G) \mid \pi(e) \in \calC 
\,\}$. We will show that $M$ is a rainbow matching of size at least $k$. To see this, notice that $\pi$ cannot 
allocate a color activity to a member of $N_{c}$, since otherwise no feasible assignment would be Nash stable. Further, at 
most $q-k$ color activities are allocated to the garbage collectors, which means that at least $k$ color activities 
should be assigned to vertex and edge players. The only individually rational way to do this is to select triples of the 
form $(u,e,v)$ where $e=\{u,v\} \in E(G)$ and assign to them their color activity $\phi(e)$. Thus, $M$ is a rainbow 
matching of size at least $k$.
\end{proof}

%star%%%%%%%%%%%%%%%%%%%%%%%%%%%%%%%%%%%%%%%%%%%
\paragraph{Stars.}
For \gGASP s on stars we provide a reduction from the NP-complete problem {\sc Minimum Maximal 
Matching} (MMM). Given a graph $G$ and a positive integer $k \leq |E|$, MMM asks whether 
$G$ admits a maximal matching with at most $k$ edges. The problem remains NP-complete for 
bipartite graphs \citep{Demange2008}.

\begin{theorem}\label{thm:NPhardness:star:NS}
Given an instance of \gGASP\ whose underlying graph is a star, 
it is {\em NP}-complete to determine whether it has a Nash stable feasible assignment.
\end{theorem}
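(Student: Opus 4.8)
The plan is to establish NP-completeness by the usual two-step argument. Membership in NP is immediate: given a candidate feasible assignment $\pi$ for a \gGASP\ on a star, one checks feasibility and then, for each player and each of the at most $p+1$ activities, tests in constant time whether that player has a feasible NS-deviation.

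For hardness I would reduce from \textsc{Minimum Maximal Matching} (MMM), which remains NP-complete on bipartite graphs. The structural fact driving the construction is that on a star with centre $c$, every feasible assignment has at most one coalition of size $\ge 2$ --- the one containing $c$ --- while every other player is alone in some activity or idle; hence an assignment amounts to a choice of ``central coalition'' $S \ni c$ with an activity, plus a private-activity allocation for $N\setminus S$, and a player outside $S$ can only deviate to a currently unused activity (as a singleton) or into $S$.

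Given a bipartite graph $G=(U\cup W,E)$ and a bound $k$, I would build a star whose leaves comprise a vertex player for each $v\in V$, a small constant number of edge players for each $e\in E$, and a block of ``garbage collector'' players, together with a private activity $a_e$ for each edge (plus a few auxiliary activities, including one activity $a^\ast$ that only $c$ and a ``stalker''-type companion care about, in the spirit of Example~\ref{ex:NS:empty}). The intended correspondence is: a maximal matching $M$ with $|M|\le k$ becomes the assignment in which the players associated with each $e\in M$ occupy $a_e$ (and the activities of $e$'s endpoints are consumed), the garbage collectors absorb $|E|-k$ of the remaining edge activities, and everyone else is idle, with $c$ pushed into a harmless coalition by the stalker gadget. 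The single-use property of the edge and vertex activities is what forces the activated edges to be vertex-disjoint, i.e.\ to form a matching; a profitable NS-deviation of an uncovered edge's player onto that edge's activity is what forces the matching to be maximal; and the garbage collectors are what cap the number of activated edges at $k$. In the other direction, from any Nash stable assignment one reads off the set of activated edges and argues it is a maximal matching of size at most $k$.

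The main obstacle is the gadget design together with the ``only if'' direction. Because preferences in \gGASP\ are anonymous --- a player perceives only the \emph{size} of its group --- one cannot encode the adjacency of $G$ through who-sits-with-whom; the graph must be encoded purely through the scarcity of single-use activities and through carefully tuned size thresholds in the players' preference orders. Ruling out every spurious Nash stable outcome --- one where $c$ forms an unexpected coalition, where an edge player grabs an activity it should not, or where garbage collectors end up stuck rather than served --- is where essentially all of the work lies, and is what dictates the precise choice of auxiliary players, activities, and thresholds.
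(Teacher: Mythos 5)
You have the right source problem (MMM on bipartite graphs, as in the paper), the right NP-membership argument, and the right structural observation about stars (only the centre's coalition can have size at least two). But the actual reduction is missing: the construction is left as ``a small constant number of edge players per edge, a block of garbage collectors, a few auxiliary activities, carefully tuned thresholds,'' and you yourself defer ``essentially all of the work'' --- namely ruling out spurious Nash stable outcomes --- to this unspecified design. That verification is the content of the theorem, so as it stands this is a plan, not a proof. Moreover, the sketch you do give has concrete problems. First, ``the players associated with each $e\in M$ occupy $a_e$'' conflicts with your own structural observation: the edge players are leaves, so no two of them can share an activity, and a single-use activity $a_e$ can host at most one leaf player as a singleton; it is unclear who consumes the two endpoint activities and who sits on $a_e$. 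Second, the claim that ``the garbage collectors are what cap the number of activated edges at $k$'' is unsubstantiated: garbage collectors of the Rainbow-Matching type enforce a disjunction of the form ``every edge activity is used or every collector is placed,'' which yields an upper bound on $|M|$ only along one branch and simultaneously creates new deviation hazards in the forward direction when $|M|<k$ (unused edge activities must then be unattractive to every edge player of a covered edge, which is exactly the unspecified threshold design).

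For comparison, the paper's reduction avoids all of this with a much leaner gadget: leaves are a vertex player for each $v\in V$ (one side of the bipartition only) plus one stalker; activities are one activity $u$ per $u\in U$ plus two extras $a,b$. Adjacency is encoded simply by approval sets --- $v$ approves $(u,1)$ iff $\{u,v\}\in E$ (anonymity is no obstacle, contrary to your remark that adjacency must be encoded via scarcity and size thresholds alone) --- and the cap $|M|\le k$ is enforced not by garbage collectors but by the stalker gadget: $c$ approves $(b,1)$ and the stalker approves $(b,2)$, so the only way to stabilize them is to put $c$ in its top alternative $(a,|V|-k+1)$, which necessarily absorbs $|V|-k$ vertex players and leaves only $k$ of them free to be matched. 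Maximality then follows because an idle vertex player with an unused neighbouring activity $u$ would deviate to $(u,1)$. If you want to salvage your route, you would need to fully specify the per-edge players, their preferences, and the garbage-collector counts, and then carry out the case analysis excluding every unintended stable assignment; the paper's size-threshold trick on activity $a$ is the cleaner way to obtain the ``at most $k$'' direction for this minimization problem.
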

\begin{proof}
Clearly, our problem is in NP. To prove NP-hardness, we reduce from MMM on bipartite graphs.

\emph{Construction.}
Given a bipartite graph $G=(U\cup V,E)$ with vertex bipartition $(U,V)$ and an integer $k$, we create a star with center $c$ and 
$|V|+1$ leaves: one leaf for each vertex player $v \in V$ plus one stalker $s$. See Figure \ref{fig:star:Nash}.

\begin{figure*}[htb]
\centering
%star
\begin{tikzpicture}[scale=1, transform shape]
	\def \radius {1.5cm}
	\node[draw, circle, minimum size=22pt](c) at (0:0) {$c$};
	\node[draw, circle](s1) at ({90}:\radius) {$s$};
	\node[draw, circle,fill=gray!50](node1) at ({195}:\radius) {$v_{1}$};
	\node[draw, circle,fill=gray!50](node2) at ({232}:\radius) {$v_{2}$};
	\node[draw, circle,fill=gray!50](node3) at ({270}:\radius) {$v_{3}$};
	\node[draw, circle,fill=gray!50](node4) at ({345}:\radius) {$v_{n}$};
    
 	\draw[-, >=latex,ultra thick] (c)--(node1);
	\draw[-, >=latex,ultra thick] (c)--(node2);
	\draw[-, >=latex,ultra thick] (c)--(node3);	
	\draw[-, >=latex,ultra thick] (c)--(node4);	
	\draw[-, >=latex,ultra thick] (c)--(s1);
	\draw [dotted,thick] (0.7,-\radius+0.5) arc [radius=1.5, start angle=290, end angle= 320];
\end{tikzpicture}
\caption{
The graph constructed in the hardness proof for \gGASP s on stars. 
\label{fig:star:Nash}
}
\end{figure*}
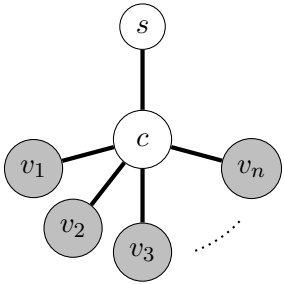

We introduce an activity $u$ for each $u \in U$, and two additional activities $a$ and $b$. 
A player $v \in V$ approves $(u, 1)$ for each activity $u$ such that $\{u, v\}\in E$ as well as  
$(a, |V|-k+1)$ and prefers the former to the latter. That is, 
$(u,1) \succ_{v}(a,|V|-k+1)$ for every $u \in U$ with $\{u, v\}\in E$; $v$ is indifferent among 
the activities associated with its neighbors in the graph, that is, 
$(u,1)\sim_{v} (u^{\prime},1)$ for all $u,u^{\prime} \in U$ such that $\{u, v\} \in E$ and $\{u^{\prime}, v\}\in E$. 
The center player $c$ approves both $(a, |V|-k+1)$ and $(b,1)$, and prefers the former to the latter, 
i.e., $(a,|V|-k+1) \succ_{c} (b,1) \succ_{c} (a_{\emptyset},1)$. 
Finally, the stalker $s$ only approves $(b,2)$.

\emph{Correctness.}
We now show that $G$ admits a maximal matching $M$ with at most $k$ edges
if and only if our instance of \gGASP\ admits a Nash stable assignment.

Suppose that $G$ admits a maximal matching $M$ with at most $k$ edges. We construct a feasible assignment $\pi$ 
by setting $\pi(v)=u$ for each $\{u, v\}\in M$, assigning $|V|-k$ vertex players and the center to $a$,
and assigning the remaining players to the void activity. Clearly, the center $c$ has no 
incentive to deviate and no vertex player in a singleton coalition wants to deviate to the coalition of the center. 
Further, no vertex $v$ has an NS-deviation to an unused activity $u$, since if $\pi$ admits such a deviation, this 
would mean that $M\cup \{u,v\}$ forms a matching, contradicting maximality of $M$. 
Finally, the stalker player does not deviate since 
the center does not engage in $b$. Hence, $\pi$ is Nash stable.

Conversely, suppose that there exists a Nash stable feasible assignment $\pi$ and let $M=\{\, \{\pi(v),v\} \mid v \in V 
\land \pi(v) \in U \,\}$. We will show that $M$ is a maximal matching of size at most $k$. By Nash stability, the stalker 
player should not have an incentive to deviate, and hence the center player and $|V|-k$ vertex players are assigned to 
activity $a$. It follows that $k$ vertex players are not assigned to $a$, and therefore 
$|M| \leq k$. Moreover, $M$ is a matching since each vertex player is assigned to at most one activity, 
and by individual rationality 
each activity can be assigned to at most one player. Now suppose towards a contradiction that $M$ is not 
maximal, i.e., there exists an edge $\{u,v\} \in E$ such that $M\cup \{u,v\}$ is a matching. 
This would mean that in $\pi$ no player is assigned to $u$, and $v$ is assigned to the void activity; 
hence, $v$ has an NS-deviation to $u$, contradicting the Nash stability of~$\pi$.
\end{proof}

%%%%%%%%%%%%%%%%%%%%%%%%%%%%%%%%%%%%%%%%%%%%
\paragraph{Small components.}
In the analysis of cooperative games on social networks one can usually assume that the social network
is connected: if this is not the case, each connected component can be processed separately.
This is also the case for \gGASP\ as long as all activities are copyable. However, if each activity
can only be used by a single group, different connected components are no longer independent,
as they have to choose from the same pool of activities. Indeed, we will now show
that the problem of finding stable outcomes remains NP-hard even if the size of each connected component 
is at most four. Our hardness proof for this problem proceeds by reduction from a restricted version of {\sc 3Sat}. 
Specifically, we consider (3,B2)-{\sc Sat}: in this version of {\sc 3Sat} each clause contains exactly $3$ literals, and each variable occurs exactly twice positively and twice negatively. This problem is known to be NP-complete \citep{Berman2003}. 

\begin{theorem}\label{thm:NPhardness:sc:NS}
Given an instance of \gGASP\ whose underlying graph has connected components whose size is bounded by $4$, 
it is {\em NP}-complete to determine whether it has a Nash stable feasible assignment.
\end{theorem}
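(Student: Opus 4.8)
The plan is to reduce from (3,B2)-\textsc{Sat}, which the excerpt has just told us is NP-complete. Membership in NP is immediate: given an assignment $\pi$, we verify feasibility (each $\pi_i$ connected) and the absence of an NS-deviation by checking, for each player $i$ and each activity $a$, whether $\pi^a\cup\{i\}$ is connected and $i$ prefers $(a,|\pi^a|+1)$ to its current alternative — all polynomial. So the work is in the construction.

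The key idea is that activities are a \emph{shared, non-copyable resource} across the many tiny components, and we exploit exactly this to transmit information about the truth assignment. For each variable $x$ I would introduce two ``literal'' activities, one for $x$ and one for $\bar x$, together with a small variable gadget (a component of size at most $4$) whose only Nash stable configurations correspond to ``grabbing'' exactly one of the two literal activities — this will play the role of choosing a truth value, and it should be arranged so that the unchosen literal activity remains available for the clause side, while the chosen one is consumed. Because each variable occurs twice positively and twice negatively, I would actually want a small supply (a bounded number of copies, since the total number of occurrences is linear) of each literal activity, or else I would feed clause gadgets with per-occurrence copies and have the variable gadget consume a dedicated ``blocking'' activity; the precise bookkeeping of how many copies of each literal activity exist, and which gadget is entitled to use them, is the part that needs care. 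For each clause $C=(\ell_1\vee\ell_2\vee\ell_3)$ I would build a clause gadget — again a union of components each of size $\le 4$ — containing a ``stalker''-type sub-component (as in Example~\ref{ex:NS:empty}) that is forced into instability \emph{unless} the gadget can park some player on one of the three literal activities $\ell_1,\ell_2,\ell_3$; this is possible precisely when at least one of those literals was left available by the variable gadgets, i.e.\ when the clause is satisfied. One must also add ``garbage collector'' activities/players, in the spirit of the path reduction, to absorb the literal activities that are \emph{not} needed, so that a satisfying assignment genuinely extends to a globally Nash stable $\pi$ rather than leaving some player with a profitable deviation.

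Concretely, the correctness argument would have two directions. For the forward direction, from a satisfying truth assignment I build $\pi$: each variable gadget grabs the activity of the \emph{false} literal (consuming it), each clause gadget parks a player on the activity of one of its \emph{true} literals, garbage collectors soak up any leftover literal activities, and everyone else stays at $a_\emptyset$; one checks gadget-by-gadget that no player has an NS feasible deviation. For the reverse direction, from a Nash stable $\pi$ I read off a truth assignment by looking at which literal activity each variable gadget holds; the stalker sub-components inside the clause gadgets, being unstable unless ``rescued,'' force each clause to have a free true literal, so the assignment satisfies every clause; here individual rationality plus the $\le 4$ component-size constraint sharply limit what $\pi$ can do inside each gadget, which is what makes the reverse direction go through.

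The main obstacle I anticipate is the resource-accounting inside the variable gadget: I need a size-$\le 4$ component (or a small constant-size union of them) whose Nash stable outcomes are \emph{exactly} ``commit to $x$'' or ``commit to $\bar x$'' — not ``commit to neither'' and not ``commit to both'' — while simultaneously making sure that (a) the committed-to literal activity is truly unavailable to clause gadgets, (b) the other one is truly available, and (c) there is no spurious stable outcome in which, say, a clause gadget steals a literal activity that its variable gadget needed. Getting all of this to hold under the hard constraint that \emph{no component has more than four vertices} — so the gadgets cannot contain long verification chains — is the delicate engineering step; stalker sub-components and a carefully chosen multiset of per-occurrence literal-activity copies (there are only $4$ occurrences per variable) plus garbage collectors are the tools I would use to pull it off.
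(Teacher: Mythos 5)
You have the right reduction source ((3,B2)-\textsc{Sat}), the right NP-membership argument, and the right architectural intuition (activities as a shared, non-copyable resource across tiny components, stalker-style sub-gadgets to force behaviour, garbage collection of leftovers), but the proposal stops exactly where the proof begins: neither the variable gadget nor the clause gadget is actually constructed, and you explicitly defer the ``delicate engineering step'' of making a size-$\le 4$ gadget whose stable outcomes are exactly ``commit to $x$'' or ``commit to $\bar x$''. Without concrete preference lists one cannot check either direction of the correctness argument, so as it stands this is a plan, not a proof. For reference, the paper resolves precisely the accounting you flag as the obstacle: for each variable it uses \emph{two} components of size two, $\{p_1(x),p_2(x)\}$ and $\{\bar p_1(x),\bar p_2(x)\}$, per-occurrence literal activities $x_1,x_2,\bar x_1,\bar x_2$, a single shared variable activity $x$, and auxiliary activities $a(x),\bar a(x)$ whose only role is to make it impossible to leave a literal player idle; since $x$ is a single activity shared by the two components, exactly one pair can take it, and the other pair is then forced onto its two literal activities, which is what encodes the truth value.

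There is also a substantive mismatch in the mechanism you sketch for clause verification. You propose that the clause gadget ``parks a player on the activity of one of its true literals'', i.e.\ the clause needs a literal activity to be \emph{available}; but with a stalker in the same component, a clause player engaged alone in a literal activity invites the stalker to join and the configuration unravels exactly as in Example~\ref{ex:NS:empty}, and you do not say how this is prevented. The paper uses the opposite (and cleaner) polarity: each clause $c$ gets its own activity $c$; the stalker $s_c$ and one leaf $c_j$ must pair up on $c$, and that leaf would deviate alone to its literal activity $\ell^c_j$ unless $\ell^c_j$ is already \emph{consumed} by the corresponding literal players --- so a consumed literal activity means a true literal, and stability of the clause component is possible iff some literal of $c$ is true. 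Your semantics (``variable gadget grabs the false literal, clause gadget uses the true one'') could perhaps be engineered to work, but since the gadgets realizing it are never exhibited, the gap is genuine: the core combinatorial content of the theorem --- concrete preferences forcing exactly-one-side commitment per variable and at-least-one-consumed-literal per clause within components of size at most four --- is missing.
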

\begin{proof}
Our problem is in NP. We reduce from (3,B2)-{\sc Sat}. 

\emph{Construction.}
Consider a formula $\phi$ with variable set $X$ and clause set $C$, 
where for each variable $x\in X$ we write $x_1$ and $x_2$ for the two positive occurrences of $x$, 
and $\bar x_1$ and $\bar x_2$ for the two negative occurrences of $x$. 
For each $x\in X$, we introduce four players $p_1(x), p_2(x), {{\bar p_1}(x)}$, and ${{\bar p_2}(x)}$. 
For each clause $c\in C$, we introduce one stalker $s_{c}$ 
and three other players $c_1,c_2$, and $c_3$.
The network $(N,L)$ consists of one component for each clause---a star with center $s_{c}$ and leaves  
$c_1$, $c_2$, and $c_3$---and of two components for each variable $x\in X$ consisting of a single edge each: $\{p_1(x), p_2(x)\}$ and $\{{\bar p_1}(x), {\bar p_2}(x)\}$. See Figure \ref{fig:sc:Nash}.
Thus,
the size of each component of this graph is at most~$4$.

\begin{figure*}[htb]
	\centering
	\begin{tikzpicture}[scale=0.9, transform shape]
	\def \radius {1.5cm}
	
	\draw[dotted, >=latex,thick] (-2.5,-1)--(-2,-1);
	\draw[dotted, >=latex,thick] (2.5,-1)--(2,-1);
	
	\node[draw, circle](c) at (0,0) {$s_c$};
	\node[draw, circle](node1) at ({232}:\radius) {$c_{1}$};
	\node[draw, circle](node2) at ({270}:\radius) {$c_{2}$};
	\node[draw, circle](node3) at ({305}:\radius) {$c_{3}$};
    
 	\draw[-, >=latex,ultra thick] (c)--(node1);
	\draw[-, >=latex,ultra thick] (c)--(node2);
	\draw[-, >=latex,ultra thick] (c)--(node3);		

	\draw[dotted, >=latex,thick] (-4,-3)--(-4.6,-3);
	\draw[dotted, >=latex,thick] (4.6,-3)--(4,-3);
			
	\node[draw,circle](x1) at (-3,-3) {$p_1(x)$};
	\node[draw,circle](x2) at (-1,-3) {$p_2(x)$};
	\draw[-, >=latex,ultra thick] (x1)--(x2);		

	\node[draw,circle](bx1) at (1,-3) {${{\bar p_1}(x)}$};
	\node[draw,circle](bx2) at (3,-3) {${{\bar p_2}(x)}$};
	\draw[-, >=latex,ultra thick] (bx1)--(bx2);				
	\end{tikzpicture}
	
	\caption{
		The graph used in the hardness proof for \gGASP s on graphs with small components. 
		\label{fig:sc:Nash}
	}
\end{figure*}
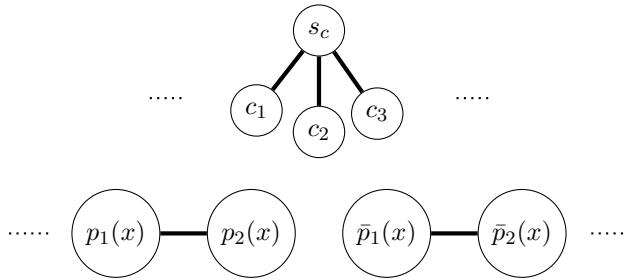

For each $x\in X$ we introduce one variable activity $x$, 
two positive literal activities $x_1$ and $x_2$, and
two negative literal activities ${\bar x_1}$ and ${\bar x_2}$, which correspond to the four occurrences of $x$; 
also, we introduce two further activities $a(x)$ and ${\bar a}(x)$. 
Finally, we introduce an activity $c$ for each clause $c \in C$. 
Thus,
\[
A^{*}=\bigcup_{x \in X}\{x,x_1,x_2,{\bar x_1},{\bar x_2},a(x),{\bar a}(x)\} \cup C.
\]
For each $x\in X$ the preferences of the positive literal players $p_1(x)$ and $p_2(x)$ are given as follows:
\begin{align*}
&p_1(x)\text{:}\,(x,2) \succ (x,1) \succ (x_1,1)\succ (x_2,2) \succ (a(x),1)\succ (a_{\emptyset},1),\\
&p_2(x)\text{:}\,(x,2)\succ (x_2,1)\succ (x_1,2) \succ (a(x),2)\succ (a_{\emptyset},1).
\end{align*}
If one of the positive literal players $p_1(x)$ and $p_2(x)$ is engaged in the void activity and the other is engaged alone in a non-void activity, this would cause the former player to deviate to another activity; thus, in a Nash stable assignment, none of the activities $a(x)$ and $a_\emptyset$ can be assigned to positive literal players. Similarly, for each $x\in X$ the preferences of the negative literal players ${{\bar p_1}(x)}$ and ${{\bar p_2}(x)}$ are given as follows:
\begin{align*}
&{{\bar p_1}(x)}\text{:}\, (x,2) \succ (x,1) \succ ({\bar x_1},1)\succ ({\bar x_2},2) \succ ({\bar a}(x),1)\succ (a_{\emptyset},1),\\
&{{\bar p_2}(x)}\text{:}\, (x,2)\succ ({\bar x_2},1)\succ ({\bar x_1},2) \succ ({\bar a}(x),2)\succ (a_{\emptyset},1).
\end{align*}
As argued above, Nash stable assignments cannot allocate activities  ${\bar a}(x)$ and $a_\emptyset$ to negative literal players. Hence, if there exists a Nash stable assignment, there are only two possible cases: 
\begin{itemize}
\item both players $p_1(x)$ and $p_2(x)$ 
are assigned to $x$, and players ${{\bar p_1}(x)}$ and ${{\bar p_2}(x)}$ are assigned to ${\bar x_1}$ and ${\bar x_2}$, 
respectively; 
\item both players ${{\bar p_1}(x)}$ and ${{\bar p_2}(x)}$ are assigned to $x$, and players $p_1(x)$ and $p_2(x)$  
are assigned to $x_1$ and $x_2$, respectively.
\end{itemize}

For players in $N_{c}$ where $\ell^c_1$, $\ell^c_2$, and $\ell^c_3$ are the literals in a clause $c$, the preferences are given by
\begin{align*}
&c_r:~(\ell^c_{r},1)\succ (c,2) \succ (a_{\emptyset},1), \qquad (r = 1,2,3)\\
&s_{c}:~(\ell^c_{1},2) \sim (\ell^c_{2},2) \sim (\ell^c_{3},2) \sim (c,2)\succ (a_{\emptyset},1).
\end{align*}
That is, players $c_1$, $c_2$, and $c_3$ prefer to engage alone in their approved literal activity, 
whereas $s_{c}$ wants to join one of the adjacent leaves whenever 
$\pi(s_c)=a_\emptyset$ and that leaf is assigned a literal activity; 
however, the leaf would then prefer to switch to the void activity. 
This means that if there exists a Nash stable outcome, 
at least one of the literal activities must be used outside of $N_{c}$, 
and some leaf and the stalker $s_{c}$ must be assigned to activity $c$.

\emph{Correctness.}
We will show that $\phi$ is satisfiable if and only if there exists a Nash stable outcome.

Suppose that there exists a truth assignment that satisfies~$\phi$. First, for each variable $x$ that is set to {\em true}, 
we assign positive literal activities $x_1$ and $x_2$ to the positive literal players $p_1(x)$ and $p_2(x)$, respectively, 
and assign $x$ to the negative literal players ${{\bar p_1}(x)}$ and ${{\bar p_2}(x)}$. For each variable $x$ that is set to {\em false}, 
we assign negative literal activities ${\bar x_1}$ and ${\bar x_2}$ to 
the  negative literal players ${{\bar p_1}(x)}$ and ${{\bar p_2}(x)}$, respectively, and assign $x$ to 
the positive literal players $p_1(x)$ and $p_2(x)$. Note that this procedure uses at least one of the literal activities $\ell^c_{1}$, 
$\ell^c_{2}$ and $\ell^c_{3}$ of each clause $c \in C$, since the given truth assignment satisfies $\phi$. Then, for each 
clause $c \in C$, we select a player $c_{j}$ whose approved activity $\ell^c_{j}$ has been assigned to some literal player, 
and assign $c_{j}$ and the stalker to $c$, and the rest of the clause players to their approved literal activity 
if it is not used yet, and to the void activity otherwise. It is easy to see that the resulting assignment $\pi$ is Nash 
stable.

Conversely, suppose that there exists a Nash stable feasible assignment $\pi$. By Nash stability, for each variable $x 
\in X$, either a pair of positive literal players $p_1(x)$ and $p_2(x)$ or a pair of negative literal players ${\bar 
x_1}$ and ${{\bar p_2}(x)}$ should be assigned to the corresponding pair of literal activities; in addition, for each clause 
$c \in C$, the stalker $s_{c}$ and one of the players $c_1$, $c_2$, and $c_3$ should engage in the activity $c$, thereby 
implying that the approved literal activity of the respective leaf
should be assigned to some literal players. Then, take the truth assignment 
that sets the variable $x$ to {\em true} if its positive literal players $p_1(x)$ and $p_2(x)$ 
are assigned to positive literal activities $x_1$ and $x_2$; 
otherwise, $x$ is set to {\em false}. This assignment can be easily seen to satisfy $\phi$.
\end{proof}

The hardness reductions for the core and individual stability are similar to the respective reductions for Nash stability; essentially, we have to replace copies of the stalker game from Example~\ref{ex:NS:empty} with copies of games with an empty core (Example~\ref{ex:core:empty}) or with no individually stable outcome (Example~\ref{ex:IS:empty}).

\begin{theorem}\label{thm:NPhardness:path:star:sc:CRIS}
Given an instance of \gGASP\ whose underlying graph is a path, a star, or 
has connected components whose size is bounded by $3$, 
it is {\em NP}-complete to determine whether it has a core stable feasible assignment, and it is {\em NP}-complete to determine whether it has an individually stable feasible assignment.
\end{theorem}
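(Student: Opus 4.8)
The plan is to prove NP-membership first and then give, for each of the three graph classes and each of the two solution concepts, a reduction obtained by reworking the corresponding Nash-stability reduction. Membership is immediate: Proposition~\ref{prop:in-core} already shows that core stability of a given feasible assignment can be verified in polynomial time, so deciding existence of a core stable assignment is in NP; for individual stability the analogous verification (check individual rationality, then for every player~$i$ and every activity~$a$ test whether $i$ has a feasible IS-deviation to~$a$, using connectivity of $\pi^a\cup\{i\}$ for feasibility) is also polynomial, so that existence problem is in NP as well. For hardness I would reduce from {\sc Path Rainbow Matching} for paths (as in Theorem~\ref{thm:NPhardness:path:NS}), from bipartite {\sc Minimum Maximal Matching} for stars (as in Theorem~\ref{thm:NPhardness:star:NS}), and from $(3,\mathrm{B}2)$-{\sc Sat} for graphs with small components (as in Theorem~\ref{thm:NPhardness:sc:NS}).

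The key idea is that in each of those three Nash reductions the copies of the stalker game of Example~\ref{ex:NS:empty} act as \emph{enforcer gadgets}: a stalker pair attached to a designated activity $\gamma$ (a colour activity, the activity $b$ of the star reduction, or the literal activities of a clause) admits a stable sub-assignment only when $\gamma$ is consumed elsewhere, and this is exactly what pins the rest of the instance down to a solution of the source problem. I would replace each two-player stalker gadget by a three-player copy of the empty-core game of Example~\ref{ex:core:empty} (for the core case), after relabelling its two internal activities so that one of them becomes the shared activity $\gamma$ and the other is private to the gadget; for individual stability I would instead use a three-player copy of the game of Example~\ref{ex:IS:empty} (which has no individually stable outcome and, as noted there, also has empty core, so it additionally serves as the clause gadget in the core reduction on small components, its three internal activities playing the roles of a clause's three literal activities). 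Since the new gadgets are paths on three vertices---equivalently, stars with two leaves---splicing them in keeps a path a path, keeps a star a star once the gadget's middle vertex is identified with the existing star centre, and lowers the component-size bound from~$4$ (the clause star around $s_c$ in Theorem~\ref{thm:NPhardness:sc:NS}) to~$3$; everything else---garbage collectors, literal players, variable edge-gadgets, the counting arithmetic---carries over verbatim.

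For correctness, in the forward direction I would, given a rainbow matching / maximal matching / satisfying assignment, build the assignment exactly as in the corresponding Nash proof on all non-gadget players, and on each enforcer gadget use the sub-assignment that is core stable (resp.\ individually stable) precisely because the shared activity has been consumed outside the gadget---this is one of the ``one activity removed'' restrictions of Example~\ref{ex:core:empty} (resp.\ Example~\ref{ex:IS:empty}), which are easily checked to be stable. In the backward direction, in any core stable (resp.\ individually stable) feasible assignment every enforcer gadget must have its shared activity used outside the gadget: otherwise its players would have unrestricted access to all of the gadget's activities, and then by Example~\ref{ex:core:empty} (resp.\ Example~\ref{ex:IS:empty}) the induced sub-assignment could not be stable, a contradiction. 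Reading off which shared activities are used outside and where, the counting/structure argument of the original Nash proof then reconstructs a solution of the source problem.

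The main obstacle is the backward direction, specifically controlling the interaction between the richer three-player gadgets and the rest of the instance. I would need to check that no coalition mixing gadget players with outside players forms a feasible strong block (resp.\ that no gadget player has a feasible IS-deviation into the rest of the instance): this amounts to verifying, using the path/star/bounded-component structure, that any such coalition is disconnected, or that some member strictly prefers its current alternative---typically because joining the shared activity would push its group size past the unique size that player approves, or because a garbage collector only approves singleton activities---and also that the ``cooperative'' sub-assignment chosen inside each gadget really is stable in the global picture. The star case is the most delicate point, because the three-player empty-core gadget is forced to share the star's centre, so the centre's preferences must simultaneously enforce the maximal-matching structure and the empty-core behaviour; I expect this requires redesigning the centre's preference list from scratch rather than a literal splice. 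A secondary (routine but lengthy) subtlety for individual stability is that Example~\ref{ex:IS:empty} uses three internal activities, so the activity set and all preference lists must be extended consistently and the extra individually rational assignments this creates re-examined, exactly as in the proof of Theorem~\ref{thm:NPhardness:sc:NS}.
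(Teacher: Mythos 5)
Your high-level plan is exactly the paper's: NP-membership via Proposition~\ref{prop:in-core} (plus the easy IS check), and hardness by re-using the three Nash constructions with the stalker pairs replaced by copies of the empty-core game (Example~\ref{ex:core:empty}) and of the no-IS game (Example~\ref{ex:IS:empty}); the paper even exploits, as you do, that Example~\ref{ex:IS:empty} has strict preferences and hence also an empty core, so a single construction can prove a three-way equivalence (IS exists iff core exists iff the source instance is a yes-instance). Two of your choices diverge from the paper, though. For the star/core case the paper does not adapt the \textsc{Minimum Maximal Matching} reduction at all: it simply invokes Theorem~\ref{thm:NP:star:core} (the X3C reduction, which gives core-hardness on stars with only two activities) and only reworks the MMM reduction for individual stability, by splicing the cyclic preferences over $x,y,z$ into the centre's list together with two extra leaves $s_1,s_2$. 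Your plan to build the core case by putting an Example~\ref{ex:core:empty} gadget at the star centre is plausible, but you yourself flag it as requiring a from-scratch redesign of the centre's preferences, and you never carry it out; the paper's route avoids this entirely.

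The concrete gap is in the small-components case, where you assert that the variable gadgets and the counting ``carry over verbatim.'' They do not, and the paper rebuilds them. The Nash argument that activities $a(x)$ and $a_\emptyset$ cannot be held by literal players is an NS-deviation argument: a void player unilaterally joins her partner who is engaged alone. Under individual stability that partner can refuse (e.g.\ $p_1(x)$ does not approve $(a(x),2)$ or $(x_1,2)$), and under core stability the corresponding coalition need not be strictly improving for all members or may be forced to contain a disconnected $\pi^a$; consequently the clean dichotomy ``either $p_1(x),p_2(x)$ on $x$, or on $x_1,x_2$'' no longer follows, and configurations such as $p_1(x)$ alone on $a(x)$ with $p_2(x)$ void (with $x_1,x_2$ consumed inside clause gadgets) can be individually rational and immune to IS-deviations and strong blocks. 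Your backward direction, which claims the original counting/structure argument ``reconstructs a solution,'' therefore does not go through as stated: you would at minimum need a modified truth-assignment extraction and a case analysis showing positive and negative literal activities of the same variable cannot both be held by literal players. The paper instead redesigns the gadget: each variable gets triples of positive and negative literal players (approving $(x,3)$), an additional three-player cyclic gadget $v_1(x),v_2(x),v_3(x)$ over new activities $y(x),z(x)$ that forces $x$ to be consumed, and extra activities $b(x),c(x),\bar b(x),\bar c(x)$, which restores the forced dichotomy for both core and individual stability. Until you either adopt such a redesign or supply the repaired extraction argument, the small-components part of your proof is incomplete.
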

\begin{proof}
Clearly, our problems are contained in NP for any social network due to Proposition \ref{prop:in-core} and the fact that we can easily check whether a given assignment is individually stable. The hardness proofs can be found in the appendix. 
\end{proof}

\section{Few Activities}
In the instances of \gGASP\ that are created in our hardness proofs, the number of activities $p$
is unbounded. In reality, however, there are many settings where this parameter can be very small. For instance, when organizing social events for a workshop, the number of activities one could organize is often restricted, due to a limited number of facilities for sports competition or a limited number of buses for a bus trip.  
It is thus natural to 
wonder what can be said when there are few activities to be assigned. It turns out that 
for each of the restricted families of graphs considered in the previous section, 
finding some stable assignments in \gGASP\ is fixed parameter tractable 
with respect to the number of activities. The basic idea behind each of the algorithms we present
is that we fix a set of activities that will be assigned to the players, 
and for each possible subset $B\subseteq A^*$ of activities we 
check whether there exists a stable assignment using the activities
from that subset only. 

\subsection{Small components}
We first present an algorithm for small components based on dynamic programming, allowing us to build up the set $B$ step-by-step. We order the components, and, for each prefix of that ordering, 
we check if a given subset of activities can be assigned to that
prefix in a stable way.
Within each component, we have enough time to consider all possible assignments, 
and each potential deviation involves at most one component. 
The resulting algorithm is FPT with respect to the combined parameter $p+c$, 
where $c$ is a bound on the size of the components of the network.

\begin{theorem}
	\label{thm:FPT:smallcomponents:NSIS}
	There exists an algorithm that given an instance of \gGASP\ checks whether it has a Nash stable or an individually stable, finds one if it exists, 
	and runs in time $O(8^p p^{c+1}c^2 n)$ where $p$ is the number of activities and $c$ is the maximum size of connected components.
\end{theorem}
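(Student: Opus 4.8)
The plan is to use dynamic programming over the connected components of the social network, tracking which subset $B\subseteq A^*$ of activities has been used so far. Fix an ordering $K_1,K_2,\ldots,K_m$ of the connected components, each of size at most $c$. For a prefix $K_1,\ldots,K_\ell$ and a subset $B\subseteq A^*$, I will define a Boolean table entry $\textbf{Feas}[\ell,B]$ that is \emph{true} iff there is a feasible assignment of the players in $K_1\cup\cdots\cup K_\ell$ that uses exactly the activities in $B$ (each at most once), is individually rational, and admits no NS-deviation (respectively IS-deviation) by any player in those components. The key structural observation is that since activities are non-copyable but a deviation by a single player $i$ to activity $a$ only affects the group $\pi^a$, and $\pi^a$ lies entirely within one component, whether a player has a (feasible) NS- or IS-deviation depends only on the assignment restricted to $i$'s own component together with the \emph{set} of activities that are currently unused globally. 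This is what makes the prefix decomposition sound.

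First I would handle a single component. Since $|K_\ell|\le c$, there are at most $(p+1)^c$ feasible assignments of $K_\ell$ to $A^*\cup\{a_\emptyset\}$ (we need only consider assignments in which every non-void activity is used by a connected subgroup, which one can enumerate directly); for each such local assignment we record the set $B_\ell\subseteq A^*$ of activities it uses. For the transition, $\textbf{Feas}[\ell,B]$ is \emph{true} iff there exist $B'\subseteq B$ and a local assignment of $K_\ell$ using exactly $B\setminus B'$ such that $\textbf{Feas}[\ell-1,B']$ is \emph{true} and, crucially, no player in $K_\ell$ has a feasible NS- (resp.\ IS-) deviation given that the globally unused activities are $A^*\setminus B$. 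We also need to guarantee that players in earlier components remain stable, but since an unused activity stays unused, and the only way a later component could destabilise an earlier one is by using up an activity that an earlier player wanted to deviate to, it suffices to insist in the definition of $\textbf{Feas}[\ell-1,B']$ that no player in $K_1\cup\cdots\cup K_{\ell-1}$ wants to deviate to any activity in $A^*\setminus B'$ \emph{that is not in the set eventually used}; the clean way to do this is to strengthen the invariant so that $\textbf{Feas}[\ell,B]$ requires stability against all activities in $A^*\setminus B$, and then only allow the transition to extend $B$ by activities genuinely consumed by $K_\ell$. The base case is $\textbf{Feas}[0,\emptyset]=\textit{true}$. The instance has the desired stable outcome iff $\textbf{Feas}[m,B]$ is \emph{true} for some $B\subseteq A^*$, and a witnessing assignment is recovered by standard back-pointers.

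For the running time: the table has $m\cdot 2^p \le n\cdot 2^p$ entries (indexing $\ell$ by the component costs at most $n$). For each entry $\textbf{Feas}[\ell,B]$ we iterate over subsets $B'\subseteq B$ (at most $2^p$ of them, but more tightly over the $2^{|B|}$ subsets, giving $\sum_B 2^{|B|}=3^p$ over all $B$), and for each we enumerate the $O((p+1)^c)=O(p^c)$ local assignments of $K_\ell$, checking individual rationality and the absence of a feasible NS/IS deviation; the latter check, by the argument in the proof of Proposition~\ref{prop:in-core}, costs $O(p\cdot c^2)$ per local assignment (for each player, each of $O(p)$ target activities, test feasibility/connectivity of the augmented group in $O(c^2)$ time). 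Multiplying, the bottleneck is $n\cdot 3^p\cdot$(extra $2^p$ from not summing tightly, or folded in)$\cdot p^c\cdot p\cdot c^2$, which one bounds by $O(8^p\,p^{c+1}\,c^2\,n)$ as claimed. The main obstacle I expect is getting the invariant exactly right so that the prefix decomposition is genuinely sound — i.e.\ ensuring that committing to never use the activities in $A^*\setminus B$ for the suffix is both necessary and sufficient to preserve stability of earlier components, and that IS-deviations (which also require the consent of the incumbent group, entirely inside one component) are handled by the same bookkeeping; everything else is a routine enumeration-and-counting argument.
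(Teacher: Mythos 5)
Your overall plan (dynamic programming over the connected components, tracking which activities have been consumed, with a per-component brute force of the at most $(p+1)^c$ local assignments) is the same as the paper's, and your running-time accounting is essentially the paper's. However, there is a genuine gap in the invariant you settle on. Your table $\textbf{Feas}[\ell,B]$ carries only the set $B$ of activities used by the prefix, and your ``clean'' strengthening requires every prefix player to be stable against solo deviations to \emph{all} activities in $A^*\setminus B$. This is sound but not complete: in a stable global assignment, a player in an early component may strictly prefer $(a,1)$ to her own alternative for some activity $a$ that is used by a \emph{later} component --- she has no feasible deviation there, because joining a group in another component is disconnected. Your strengthened invariant rejects exactly these assignments, since at the time the early component is processed, $a$ is still in $A^*\setminus B$ and the check demands she not covet $(a,1)$. (Your intuition that ``the only way a later component could destabilise an earlier one is by using up an activity'' is backwards: consuming an activity only removes deviation opportunities; the danger is an activity that is left unused globally.) Note that this failure is not hypothetical: in the paper's own hardness construction for small components (Theorem~\ref{thm:NPhardness:sc:NS}), clause players covet literal activities that are consumed inside variable components, so for an unlucky component ordering your DP would answer ``no'' on satisfiable instances.

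The paper's fix, which your state space is missing, is to add a second set to the table: guess up front the set $B\subseteq A^*$ of activities used by the \emph{entire} final assignment, and let the DP state be $[\,i,B,B'\,]$ with $B'\subseteq B$ the activities consumed by the first $i$ components. Stability of prefix players is then required only against activities in $B'$ (joining groups inside their own component) and against solo deviations to $A^*\setminus B$ (activities that will never be used); activities in $B\setminus B'$, earmarked for later components, are exempt. The answer is read off from $[\,k,B,B\,]$ for some $B$. With this three-index table the per-entry work is one $2^p$ bipartition loop plus the $O(p^{c+1}c^2)$ single-component search, and the table has $O(4^p n)$ entries, giving the claimed $O(8^p p^{c+1}c^2 n)$ bound; your two-index version cannot be repaired merely by reordering components, since the covet relations between components can be cyclic.
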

\begin{proof}
	We give a dynamic programming algorithm. Suppose our graph $(N,L)$ has $k$ connected components $(N_{1},L_{1}), 
	(N_{2},L_{2}), \ldots, (N_{k},L_{k})$ and each component has size at most $c$, i.e., $|N_i| \leq c$ for all $i \in [k]$. For each $i\in[k]$, each set $B \subseteq A^{*}$ of activities 
	assigned to $N$, and each set $B^{\prime} \subseteq B$ of activities assigned to $\bigcup^{i}_{j=1}N_{j}$, we let 
	$\NS[i,B,B^{\prime}]$ and $\IS[i,B,B^{\prime}]$ denote whether there is such an assignment that gives rise to a Nash stable outcome and an individually stable outcome, respectively. 
	Specifically, $\NS[i,B,B^{\prime}]$ (respectively, $\IS[i,B,B^{\prime}]$) is \emph{true} if there exists an individually rational feasible 
	assignment $\pi:\bigcup^{i}_{j=1}N_{j} \rightarrow A$ such that
	\begin{itemize}
		\item the set of activities assigned to $\bigcup^{i}_{j=1}N_{j}$ is exactly $B^{\prime}$; and
		\item no player in $\bigcup^{i}_{j=1}N_{j}$ has an NS (respectively, IS) feasible deviation to 
		an activity in $B^{\prime}$ or to an activity in $A^*\setminus B$.
	\end{itemize}
	Otherwise, $\NS[i,B,B^{\prime}]$ (respectively, $\IS[i,B,B^{\prime}]$) is \emph{false}. 	
	
	For $i=1$, each $B \subseteq A^{*}$, and each $B^{\prime}\subseteq B$, we compute the value of $\NS[1,B,B^{\prime}]$ (respectively, $\IS[1,B,B^{\prime}]$)
	by trying all possible mappings $\pi:N_1 \rightarrow B^{\prime} \cup \{a_{\emptyset}\}$, and checking whether it is an 
	individually rational feasible assignment using all activities in $B^{\prime}$ and such that no player in $N_{1}$ has 
	an NS (respectively, IS) feasible deviation to a used activity in $B^{\prime}$ or an unused activity in $A^*\setminus B$.

	For $i=2,3,\ldots,k$, each $B \subseteq A^{*}$, and $B^{\prime} \subseteq B$, we set 
	$\NS[i,B,B^{\prime}]$ (respectively, $\IS[i,B,B^{\prime}]$) to {\em true} if there exists a bipartition of $B^{\prime}$ into $P$ and $Q$ such that $\NS[i-1,B,P]$ 
	is {\em true} and there exists a mapping $\pi:N_{i} \rightarrow Q\cup \{a_{\emptyset}\}$ such that $\pi$ is an individually 
	rational feasible assignment using all the activities in $Q$, and no player in $N_{i}$ has an NS (respectively, IS) feasible deviation to a used 
	activity in $Q$ or an unused activity in $A^{*}\setminus B$. 
	This property holds because players cannot deviate to an activity played within a different connected component by feasibility.

	It is not difficult to see that a Nash stable (respectively, individually stable) solution 
	exists if and only if $\NS[k,B,B]$ (respectively, $\IS[k,B,B]$) is {\em true} for some $B \subseteq A^{*}$. If this is the case, such a stable feasible 
	assignment can be found using standard dynamic programming techniques. The size of the dynamic programming table $\NS[i,B,B^{\prime}]$ (respectively, $\IS[i,B,B^{\prime}]$) is at most $4^p n$. For each entry, we consider $O(2^p)$ bipartitions of $B^{\prime}$ and the time required to find a stable assignment for a single component is $O(p^{c+1}c^2)$. Thus, each entry can be filled in time $O(2^p p^{c+1}c^2)$. This completes the proof.
\end{proof}

The FPT result for graphs with small connected components can also be adapted to the core. The proof can be found in the appendix.

\begin{theorem}
	\label{thm:FPT:smallcomponents:CR}
	There exists an algorithm that given an instance of \gGASP\ checks whether it has a core stable feasible assignment, finds one if it exists, 
	and runs in time $O(8^p p^{c+1} c^3 n)$ where $p$ is the number of activities and $c$ is the maximum size of connected components.
\end{theorem}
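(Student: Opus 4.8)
The plan is to adapt the dynamic-programming scheme of Theorem~\ref{thm:FPT:smallcomponents:NSIS}, tracking for each prefix of the connected components which set of activities has been used so far, and which set of activities is globally in use (so that we can correctly detect blocking coalitions that want to grab a currently unused activity). The key new difficulty relative to Nash/individual stability is that a strongly blocking coalition together with an activity $a$ can consist of players who are all already engaged in $a$ under $\pi$ plus some players who want to join; since a blocking coalition must be connected, it still lies within a single component, so deviations remain ``local'' to one component --- this is what keeps the algorithm FPT in $p+c$.

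First I would define, for each $i\in[k]$, each $B\subseteq A^*$ (the activities used by all of $N$), and each $B'\subseteq B$ (the activities used by $\bigcup_{j=1}^i N_j$), a Boolean $\CR[i,B,B']$ that is \emph{true} iff there is an individually rational feasible assignment $\pi:\bigcup_{j=1}^i N_j\to A$ whose image on non-void activities is exactly $B'$, such that no connected coalition contained in $\bigcup_{j=1}^i N_j$ together with an activity in $B'\cup(A^*\setminus B)$ strongly blocks $\pi$. The base case $i=1$ is handled by brute force: enumerate all $\pi:N_1\to B'\cup\{a_\emptyset\}$ (there are at most $(p+1)^c$ of them), check individual rationality and that the used set is exactly $B'$, and then — using the $O(c^2)$ connectivity-test idea from the proof of Proposition~\ref{prop:in-core}, but restricted to the single component of size $\le c$ — verify for every activity $a\in B'\cup(A^*\setminus B)$ and every target size $s\in[|N_1|]$ that the set of players in $N_1$ strictly preferring $(a,s)$ has no connected component of size $\ge s$ containing $\pi^a$. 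For the inductive step, I would set $\CR[i,B,B']$ to \emph{true} iff $B'$ can be split into $P\uplus Q$ with $\CR[i-1,B,P]$ true and some $\pi:N_i\to Q\cup\{a_\emptyset\}$ that is individually rational, has image exactly $Q$ on non-void activities, and admits no strongly blocking coalition inside $N_i$ using an activity in $Q\cup(A^*\setminus B)$. The answer to the decision problem is \emph{yes} iff $\CR[k,B,B]$ is true for some $B\subseteq A^*$, and an assignment can be recovered by the usual backpointer bookkeeping.

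The correctness argument has two directions. For soundness, given witnessing assignments for the pieces one glues them into a global $\pi$; a strongly blocking pair $(S,a)$ would have $S$ connected, hence inside some component $N_j$, and $\pi^a\subseteq S$ forces $a$ to be either unused globally (i.e.\ $a\in A^*\setminus B$) or used only within $N_j$ (i.e.\ $a$ lies in the local ``$Q$'' for that component) — in either case this contradicts the corresponding stored-state condition. For completeness, a core stable $\pi$ restricts to each prefix and induces the splits $B'=\bigcup_{j\le i}\pi(N_j)\cap A^*$; the absence of any strongly blocking coalition for $\pi$ in particular rules out blocking coalitions confined to a prefix and using an activity in the allowed pool, so each $\CR$-entry along the way is genuinely true. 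Here I would stress the point that makes the ``$A^*\setminus B$'' coordinate necessary: a coalition in component $N_j$ might want to deviate to an activity that is entirely unused in $\pi$, and the DP must forbid this at the component level even though that activity never appears in any $B'$.

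For the running time: the table has $O(4^p n)$ entries (a choice of $B$, a choice of $B'\subseteq B$, and $i\in[k]$ with $k\le n$). Computing a base-case or inductive entry requires, for the inductive case, trying $O(2^p)$ bipartitions $B'=P\uplus Q$ and, for each, enumerating the $O(p^c)$ relevant maps on the component and spending $O(p\cdot c\cdot c^2)=O(pc^3)$ on the connectivity-based core check (over $\le p$ activities, $\le c$ sizes, each test $O(c^2)$). This gives $O(2^p\cdot p^c\cdot pc^3)=O(2^p p^{c+1} c^3)$ per entry, hence $O(8^p p^{c+1} c^3 n)$ overall, matching the claimed bound. The main obstacle is conceptual rather than computational: getting the invariant exactly right so that blocking coalitions which reuse an already-occupied activity, or which target a globally-unused activity, are both correctly excluded while still letting the DP merge components independently — once the state $(i,B,B')$ is chosen correctly, everything else is the routine bookkeeping already present in Theorem~\ref{thm:FPT:smallcomponents:NSIS} together with the core-checking subroutine of Proposition~\ref{prop:in-core}.
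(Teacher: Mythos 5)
Your proposal matches the paper's own proof essentially verbatim: the same table $\CR[i,B,B']$ with the same invariant (no blocking coalition inside the prefix using an activity in $B'\cup(A^*\setminus B)$), the same brute-force base case and bipartition-based inductive step, the same connectivity-based blocking check from Proposition~\ref{prop:in-core}, and the same $O(8^p p^{c+1} c^3 n)$ accounting. It is correct as written, and your soundness remark that a connected blocking coalition lies in a single component (so its activity is either locally used or globally unused) is exactly the key point the paper relies on.
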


\subsection{Acyclic graphs}
We will next show that computing Nash and individually stable outcomes is in FPT for arbitrary acyclic networks. Essentially, we will construct a rooted tree and check in a bottom-up manner whether there is a partial assignment that is extensible to a stable outcome. 

\begin{theorem}\label{thm:FPT:tree:NS}
The problem of deciding whether an instance of $\gGASP$ with $|A^*|=p$ 
whose underlying social network $(N, L)$ is acyclic
has a Nash stable feasible assignment is in {\sc FPT} with respect to $p$.
\end{theorem}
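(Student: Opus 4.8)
The strategy is to root the forest $(N,L)$ (processing each tree separately), guess the set $B\subseteq A^*$ of activities that actually get used — there are $2^p$ choices, which is where the FPT dependence on $p$ lives — and then run a bottom-up dynamic program over the rooted tree that decides whether the players can be assigned activities from $B$ (plus $a_\emptyset$) so that every activity of $B$ is used exactly once and no player has a feasible NS-deviation. The crucial structural observation is that in a tree, a feasible coalition is a connected subtree, and any connected coalition containing a player $i$ but no proper ancestor of $i$ lies entirely inside $\mathrm{desc}(i)$; hence the ``shape'' of the coalition of the root $r_S$ of a coalition $S$ is determined locally, and a player's NS-deviation target is a group $\pi^a$ that, together with the player, must be connected — so the deviating player is adjacent to the subtree $\pi^a$, meaning $\pi^a$ either lives in the subtree below that player or hangs off exactly one neighbour.

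\textbf{The DP table.} For a node $i$, let me process its children $j_1,\dots,j_{d}$ one at a time, maintaining a table indexed by: the child-prefix counter $c$; the activity $a\in B\cup\{a_\emptyset\}$ that $i$ is (provisionally) assigned; the current size $t$ of the connected part of $\pi_i$ built so far within $\mathrm{desc}(i,c)\cup\{i\}$; and — this is the extra ingredient compared with the copyable case in Theorem~\ref{thm:NS:copyable} — the subset $U\subseteq B$ of activities already consumed inside $\mathrm{desc}(i,c)$. The table value records whether there is a partial feasible assignment on $\mathrm{desc}(i,c)\cup\{i\}$ with these parameters such that (a) exactly the activities in $U$ are used below, together with $a$ for $i$'s own group, (b) every player strictly below $i$ that is not in $i$'s group already likes her alternative at least as much as any activity she could feasibly deviate to among $U$ or among the not-yet-used activities $A^*\setminus B$, and (c) for players in $i$'s group, we have merely recorded the provisional pair $(a,t)$ and will check their no-deviation condition once the group is finalized at $i$'s ancestor where the coalition stops growing. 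The transitions mirror the copyable proof: either absorb child $j_c$'s subtree into $i$'s coalition (when $j_c$ shares activity $a$ and the sizes add up, taking the union of the consumed-activity sets and requiring disjointness outside the shared activity), or ``separate'' $i$ from $j_c$, in which case $j_c$'s subtree has been finalized with some root-group alternative $(b,\ell)$, we charge its consumed activities to $U$, and we verify the mutual non-deviation conditions between $i$ and $j_c$, i.e. $b=a_\emptyset$ or $(a,t)\succeq_i(b,\ell+1)$ — wait, the comparison must be against the \emph{final} size of $i$'s group, so as in Theorem~\ref{thm:NS:copyable} we only compare against the parameter $(a,k)$ we are targeting rather than the running $t$, and we verify at the very end that $t$ reaches the target $k$. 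One also must forbid, at separation time, that $j_c$ (or any other finalized player in its subtree) wants to deviate into $i$'s group; but that group's final alternative $(a,k)$ is part of the table index, so this is a local check.

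\textbf{Subtlety and running time.} The main obstacle, and the place where acyclicity is really used, is handling NS-deviations whose target group $\pi^a$ straddles the boundary of a subtree: a player $i$ might want to deviate to an activity $a$ whose group sits partly in $\mathrm{desc}(i)$ and partly above $i$. Because $(N,L)$ is a tree, the group $\pi^a$ is a connected subtree, so it meets $i$'s subtree in at most one connected piece; the ``outside'' part and the ``inside'' part communicate through exactly one edge, and crucially a player in $\mathrm{desc}(i)$ can only feasibly join $\pi^a$ if $\pi^a\cup\{i\}$ is connected, which — since $i$ is not an ancestor of anything outside its subtree — forces $\pi^a$ to be (almost) entirely within $\mathrm{desc}(i)$ or to hang off $i$'s parent. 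So when finalizing node $i$ we know the full activity $a$ and the full size $|\pi^a|$ only after the subtree rooted at $\pi^a$'s root is closed off, and we can do the deviation check there; the subset-index $U$ is exactly what lets us know which activities are ``in play below'' versus ``still free.'' Putting it together: there are $n$ nodes, at most $n$ children-prefixes, $p+1$ choices of $a$, $n$ choices of $t$, and $2^p$ choices of $U$, so the table has size $O(2^p\,p\,n^3)$; each transition ranges over a split of $t$ ($O(n)$), a split of $U$ ($O(2^p)$), and a separation alternative $(b,\ell)$ ($O(pn)$), giving a per-entry cost of $O(4^p\,p\,n^2)$ and a total of $O(8^p\,\mathrm{poly}(p,n))$. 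Summing over the $2^p$ guesses of $B$ keeps everything of the form $2^{O(p)}\,\mathrm{poly}(n)$, which is FPT in $p$. Finally, a Nash stable assignment exists iff for some guessed $B$ and some target $(a,k)$ the root entry $\NS[r,d,(a,k),k,B]$ is \emph{true}, and the witnessing assignment is recovered by the usual back-pointer walk; correctness follows by induction on $\mathrm{height}(i)$, exactly parallel to the argument in Theorem~\ref{thm:NS:copyable} but tracking the additional used-activity coordinate.
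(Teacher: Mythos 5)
Your proposal is, in substance, the same algorithm as the paper's: guess the global set $B$ of used activities, root the tree, and run a bottom-up dynamic program whose state carries the target alternative $(a,k)$ of the current root's coalition, the number $t$ of its members collected so far, and the set of activities consumed in the processed part of the subtree, with all no-deviation checks done incrementally against the target $(a,k)$. (Your first idea, deferring the coalition members' checks until the coalition is ``finalized'' at an ancestor, would not work as stated, because by then the assignments of those members' neighbours have been summarized away; but you correct this yourself by indexing with $(a,k)$, exactly as in Theorem~\ref{thm:NS:copyable} and in the paper's table $\NS[i,B,B',(a,k),t]$.) The one genuinely different technical choice is how the used activities are distributed over the children: the paper enumerates ordered partitions of $B'\setminus\{a\}$ among the children's subtrees and then runs a secondary prefix DP (Claim~\ref{claim:tree}), paying a factor of order $(p+1)^p\,p!$, whereas you fold the consumed subset $U$ directly into the prefix-over-children state and split it at each child. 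This avoids the partition/permutation enumeration, is cleaner, and yields a $2^{O(p)}\poly(n)$ bound; both variants are FPT, so this is an improvement in exposition and running time rather than a different proof idea.

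Two points need repair. First, you cannot literally ``process each tree of the forest separately'': the components share one pool of activities, so independent processing neither prevents an activity from being used in two components nor determines which activities remain globally unused (and hence which size-one deviations must be checked). The paper handles forests by combining the tree algorithm with the small-components DP of Theorem~\ref{thm:FPT:smallcomponents:NSIS}; with your machinery it suffices to chain the components, carrying the same used-activity subset across them, but this step must be said. Second, the deviation checks must range over all of $A\setminus B$, not just $A^*\setminus B$: a player assigned a non-void activity who strictly prefers $(a_{\emptyset},1)$ destroys Nash stability (cf.\ the stalker game), so the comparison with the void activity, and the coalition members' comparisons with $(b,1)$ for every unused $b$, have to be built into the table conditions as in items (iv)--(v) of the paper's proof. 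Finally, your table-size count omits the factor $n$ for the target size $k$ (it should be roughly $2^p p\,n^4$ per guessed $B$ rather than $2^p p\,n^3$), which does not affect the FPT claim.
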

\begin{proof}
We will first present a proof for the case where $(N, L)$ is a tree; in the end,
we will show how to extend the result to arbitrary forests.
Fix an instance $(N, A, (\succeq_i)_{i\in N}, L)$ of \gGASP\ such that $(N, L)$ is a tree.
We choose an arbitrary node in $N$ as the root of this tree, thereby making $(N, L)$
a rooted tree; we denote by $\ch(i)$ the set of children of $i$ and by $\desc(i)$
the set of descendants of $i$ (including $i$ herself).
Intuitively, we fix the set $B \subseteq A^*$ of activities assigned to the set of all players; then, we process the nodes from the leaves to the root, and for each player $i$ and each subset $B^{\prime}$ of $B$, we check whether there is a partial assignment of $B^{\prime}$ to the descendants of $i$ that is extensible to a stable assignment. 

Formally, for each $i\in N$, each $B\subseteq A^*$, each $B^\prime\subseteq B$,
each $(a,k) \in (B^{\prime} \times [n]) \cup \{(a_{\emptyset},1)\}$, and each $t\in[k]$,
we let $\NS[i,B,B^{\prime},(a,k),t]$ be {\em true} if there is an assignment $\pi:N\to A$
where 
\begin{enumerate}
\item[$(${\rm i}$)$] the set of activities assigned to players in $\desc(i)$ is exactly $B^{\prime}$;
\item[$(${\rm ii}$)$]  player $i$ is assigned to $a$ and is in a coalition with $k$ other players;
\item[$(${\rm iii}$)$]  exactly $t$ players in $\desc(i)$ belong to the same group as $i$, so $|\desc(i)\cap \pi_i| = t$;
\item[$(${\rm iv}$)$]  the $t$ players in $\desc(i)\cap \pi_i$ 
      weakly prefer $(a,k)$ to $(b, 1)$ for each $b\in A\setminus B$, 
      and have no incentive to deviate to the other groups, 
      i.e., every player in $\desc(i)\cap \pi_i$ whose children do not belong to $\pi_i$ likes $(a, k)$ 
      at least as much as each of the alternatives she can deviate to; 
\item[$(${\rm v}$)$]  the players in $\desc(i)\setminus \pi_i$ weakly prefer their alternative under $\pi$ 
      to engaging alone in any of the activities in $A\setminus B$, 
      have no NS-deviation to activities in $B^{\prime}\setminus \{a\}$, 
      and have no incentive to deviate to $i$'s coalition, i.e., if $a \neq a_{\emptyset}$, 
      then every player $j\in \desc(i)\setminus \pi_i$ whose parent belongs to $\pi_i$ 
      likes $(\pi(j), |\pi_j|)$ at least as much as $(a,k+1)$.
\end{enumerate}
Otherwise, we let $\NS[i,B,B^{\prime},(a,k),t]$ be \emph{false}. By construction, our instance admits a Nash stable assignment
if and only if $\NS[r,B,B^{\prime},(a,k),k]$ is {\em true}
for some combination of the arguments $B, B^\prime$, and $(a, k)$,
where $r$ is the root, because by (iv) and (v) there are then no NS-deviations. Notice that the size of the dynamic programming table is at most $p 4^p n^3$.
See Figure \ref{fig:tree1} for the high-level idea of the algorithm.

To complete the proof, we show that values $\NS[i,B,B^{\prime},(a,k),t]$ can be efficiently computed in a bottom-up manner. 
%leaves
If $i$ is a leaf, we set $\NS[i,B,B^{\prime},(a,k),t]$ to \emph{true} if 
$B^{\prime}=\{a\}$, $t=1$, and $i$ weakly prefers $(a,k)$ to every alternative $(b, 1)$ 
such that $b\in A\setminus B$; otherwise, we set $\NS[i,B,B^{\prime},(a,k),t]$ to \emph{false}. 

%internal vertices
Now, consider the case when $i$ is an internal vertex. We order the children of $i$ and let $\ch(i)=\{j_1,j_2,\ldots,j_{|\ch(i)|}\}$. Observe that in each feasible assignment $\pi:\desc(i) \rightarrow B^{\prime}$ the coalition of players that engage in the same activity $b \in B^{\prime}\setminus \{a\}$ is fully contained in a subtree of some child of $i$. This induces a partition $\calP=\{P_1,P_2,\ldots, P_{|\calP|}\}$ of the activity set $B^{\prime}\setminus \{a\}$ and a permutation/indexing of these activity sets 
%aligned in increasing order with respect to the indexes of their matched subtrees. 
so that the activities in $P_i$ are assigned to the subtree of $j_i$.
We will thus go through all possible partitions of $B^{\prime}\setminus \{a\}$ and their permutations, and try to find a feasible assignment compatible with it. Figure \ref{fig:tree2} illustrates how each subtree is matched to a different activity set. 

%IR condition
We first check whether $i$ strictly prefers some alternative $(b,1)$ such that $b\in A\setminus B$ to $(a,k)$; 
if so, we set $\NS[i,B,B^{\prime},(a,k),t]$ to \emph{false}. 
Otherwise, we proceed and guess a partition and indexing $\calP=\{P_1,P_2,\ldots,P_{|\calP|}\}$ of $B^{\prime}\setminus \{a\}$. 
%Without loss of generality, we consider an ordering $P_1,P_2,\ldots,P_{|\calP|}$ and seek to assign each activity set to the subtrees in that order. 
Now we will show that we can determine in polynomial time whether there exists of a stable assignment compatible with this indexed partition.

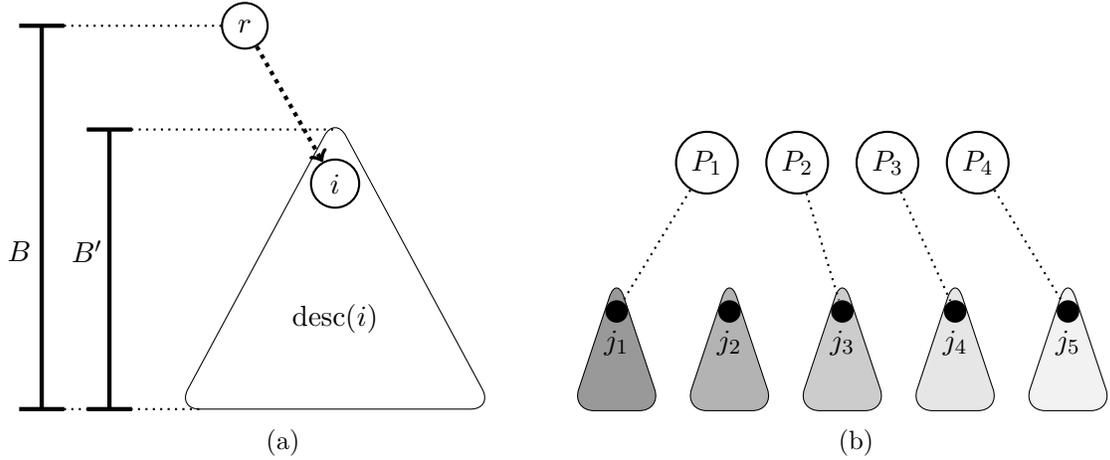
\begin{figure*}[tb]
\centering
\begin{subfigure}[t]{0.5\textwidth}
\begin{tikzpicture}[scale=0.6]

		\node[circle,thick,draw] (r) at (-1.5,4.5) {$r$};
		\node[circle,thick,draw] (0) at (0.5,1) {$i$};
		\node at (0.5,-2) {$\desc(i)$};
		\draw[->,ultra thick, dotted] (r)--(0);
		
		\draw [rounded corners=3mm]  (-3,-4)--(4,-4)--(0.5,2.5)--cycle;
				
		\draw[-, thick,dotted] (-6,4.5)--(-2,4.5);
		\draw[-, thick,dotted] (-5,2.2)--(0.5,2.2);
		\draw[-, thick,dotted] (-6,-4)--(-2.5,-4);
		
		\draw[-,ultra thick] (-6,4.5)--(-6,-4);
		\draw[-,ultra thick] (-6.5,4.5)--(-5.5,4.5);
		\draw[-,ultra thick] (-6.5,-4)--(-5.5,-4);	
		\node at (-6.5,-0.5) {$B$};
		
		\draw[-,ultra thick] (-4.5,2.2)--(-4.5,-4);
		\draw[-,ultra thick] (-5,2.2)--(-4,2.2);
		\draw[-,ultra thick] (-5,-4)--(-4,-4);
		
		\node at (-5,-0.5) {$B^\prime$};

		\end{tikzpicture}
	\caption{}
	\label{fig:tree1}
\end{subfigure}%
\begin{subfigure}[t]{0.5\textwidth}
\begin{tikzpicture}[scale=0.6]

		\node[circle,thick,draw,inner sep=3pt] (p1) at (3,1.5) {$P_1$};
		\node[circle,thick,draw,inner sep=3pt] (p2) at (5,1.5) {$P_2$};
		\node[circle,thick,draw,inner sep=3pt] (p3) at (7,1.5) {$P_3$};
		\node[circle,thick,draw,inner sep=3pt] (p4) at (9,1.5) {$P_4$};

		\draw [rounded corners=3mm,fill=gray!80]  (0,-4)--(2,-4)--(1,-1)--cycle;
		\draw [rounded corners=3mm,fill=gray!60]  (2.5,-4)--(4.5,-4)--(3.5,-1)--cycle;
		\draw [rounded corners=3mm,fill=gray!40]  (5,-4)--(7,-4)--(6,-1)--cycle;
		\draw [rounded corners=3mm,fill=gray!20]  (7.5,-4)--(9.5,-4)--(8.5,-1)--cycle;
		\draw [rounded corners=3mm,fill=gray!10]  (10,-4)--(12,-4)--(11,-1)--cycle;
		
		\node[fill=black, circle,inner sep=3pt] (1) at (1,-1.8) {};
		\node[fill=black, circle,inner sep=3pt] (2) at (3.5,-1.8) {};
		\node[fill=black, circle,inner sep=3pt] (3) at (6,-1.8) {};
		\node[fill=black, circle,inner sep=3pt] (4) at (8.5,-1.8) {};
		\node[fill=black, circle,inner sep=3pt] (5) at (11,-1.8) {};
		
		\node at (1,-2.5) {$j_1$};
		\node at (3.5,-2.5) {$j_2$};
		\node at (6,-2.5) {$j_3$};
		\node at (8.5,-2.5) {$j_4$};
		\node at (11,-2.5) {$j_5$};
		
		\draw[thick,dotted] (p1)--(1) (p2)--(3) (p3)--(4) (p4)--(5);
		\end{tikzpicture}
	\caption{}
	\label{fig:tree2}
\end{subfigure}
\caption{Figure \ref{fig:tree1} describes the high-level idea of the algorithm: at each subtree, the algorithm checks whether the descendants $\desc(i)$ can be assigned to the activity set $B^{\prime}$, when the whole tree only uses the activity set $B$. Figure \ref{fig:tree2} illustrates how each subtree is matched to a different activity set.}
\label{fig:tree}
\end{figure*}

\begin{claim}\label{claim:tree}
There exists an algorithm that given a partition $\calP=\{P_1,P_2,\ldots,P_{|\calP|}\}$ of $B^{\prime} \setminus \{a\}$, checks whether there exists a feasible assignment $\pi:N \rightarrow A$ such that the conditions $(${\rm i}$)$ to $(${\rm v}$)$ hold, and each activity set in $\calP$ is assigned to the players in $\desc(j)$ for some $j \in \ch(i)$ in such a way that for every $q \in [|\calP|]$ and $c \in [|\ch(i)|]$, $P_q$ is assigned to $\desc(j_c)$ only if the prefix $P_1,P_2,\ldots,P_{q-1}$ are assigned to the subtrees $\desc(j_1),\desc(j_2),\ldots, \desc(j_{c-1})$, finds one if it exists, and runs in time $O(p^2n^3)$ with respect to the number of activities $p$.
\end{claim}
\begin{proof}
We give a dynamic programming algorithm. Observe that a stable assignment can allocate only activity $a$ or nothing to players before allocating the activity set $P_1$ to some subtree. For convention, we thus denote $P_0$ by the empty activity set, which may be assigned to the first $c$ subtrees where the subtree $\desc(j_{c+1})$ is assigned to the activity set $P_1$. We will now determine for each $c \in [|\ch(i)|]$ and $q \in [0,|\calP|]$ whether the activity sets $P_0,P_1,\ldots,P_{q}$ can be assigned to the subtrees 
rooted at $j_1,j_2,\ldots,j_c$, and exactly $\ell$ players can be assigned 
to the activity $a$; we refer to this subproblem by $T[j_c,P_q,\ell]$.

%initialization
We initialize $T[j_1,P_q,\ell]$ to {\em true} if
\begin{itemize}
\item the empty activity set $P_0$ can be allocated to the first subtree $\desc(j_1)$, i.e., $q=0$, $\ell=0$, and $\NS[j_1,B,\emptyset,(a_{\emptyset},1),1]$ is {\em true}, and ($a=a_{\emptyset}$ or $j$ weakly prefers $(a_{\emptyset},1)$ to $(a,k+1)$); or
\item only the activity $a$ can be assigned to $\ell$ players in $\desc(j_1)$, i.e., $q=0$, $\ell\geq 1$, and $\NS[j_1,B,\{a\},(a,k),\ell]$ is {\em true};
\item only the activity set $P_1$ can be assigned to players in $\desc(j_1)$, i.e., $q=1$, $\ell=0$, and there exists an alternative $(b,x)\in P_1 
\times [n]\cup \{(a_{\emptyset},1)\}$ such that $\NS[j_1,B,P_1,(b,x),x]$ is {\em true}, $b=a_{\emptyset}$ or $i$ weakly prefers $(a,k)$ to $(b,x+1)$, and ($a=a_{\emptyset}$ or $j_1$ weakly prefers $(b,x)$ to $(a,k+1)$); or
\item $P_1$ can be assigned to players in $\desc(j_1)$, and activity $a$ can be assigned to $\ell$ players from $\desc(j_1)$, i.e., $q=1$, $\ell\geq 1$, and $\NS[j_1,B,P_1 \cup \{a\},(a,k),\ell]$ is {\em true}.
\end{itemize}
We set $T[j_1,P_q,\ell]$ to {\em false} otherwise.  

Then, we iterate through $j_1,j_2,\ldots,j_{|\ch(i)|}$ and $P_0,P_1,\ldots,P_{|\calP|}$, and update $T[j_c,P_q,\ell]$: for each $c \in [|\ch(i)|]$, for each $q\in [0,|\calP|]$, and for each $\ell \in [0,t]$, we set $T[j_c,P_q,\ell]$ to {\em true} if one of the following statements holds:
\begin{itemize}
\item $P_1, P_2, \ldots, P_{q}$ can be assigned to the subtrees $\desc(j_1), \desc(j_2),\ldots, \desc(j_{c-1})$ with $\ell$ players from the subtrees being assigned to the activity $a$, and the void activity can be assigned to the subtree $\desc(j_c)$, i.e., both  $T[j_{c-1},P_q,\ell]$ and $\NS[j_c,B,\emptyset,(a_{\emptyset},1),1]$ are {\em true}, 
and $a=a_{\emptyset}$ or player $j_c$ weakly prefers $(a_{\emptyset},1)$ to $(a,k+1)$;
\item $P_1, P_2, \ldots, P_{q}$ can be assigned to the subtrees $\desc(j_1), \desc(j_2),\ldots, \desc(j_{c-1})$ while the activity $a$ is assigned to $x$ players from the subtrees $\desc(j_1), \desc(j_2),\ldots, \desc(j_{c-1})$ and to $\ell - x$ players from the subtree $\desc(j_c)$, i.e., 
$\ell \geq 2$, and there exists an $x \in [\ell-1]$ such that $T[j_{c-1},P_q,\ell-x]$ and $\NS[j_c,B,\{a\},(a,k),x]$ are {\em true}; 
\item $P_1, P_2, \ldots, P_{q-1}$ can be assigned to the subtrees $\desc(j_1), \desc(j_2),\ldots, \desc(j_{c-1})$ with $\ell$ players from the subtrees being assigned to the activity $a$, and only the activity set $P_q$ can be assigned to the subtree $\desc(j_c)$, i.e., $T[j_{c-1},P_{q-1},\ell]$ is {\em true}, and there exists an alternative $(b,x)\in P_q 
\times [n]\cup \{(a_{\emptyset},1)\}$ such that $\NS[j_c,B,P_q,(b,x),x]$ is {\em true}, ($b=a_{\emptyset}$ or $i$ weakly prefers $(a,k)$ to $(b,x+1)$), and ($a=a_{\emptyset}$ or $j_c$ weakly prefers $(b,x)$ to $(a,k+1)$);
\item $P_1, P_2, \ldots, P_{q-1}$ can be assigned to the subtrees $\desc(j_1), \desc(j_2),\ldots, \desc(j_{c-1})$ and $P_q$ can be assigned to the subtree $\desc(j_c)$ while the activity $a$ is assigned to $x$ players from the subtrees $\desc(j_1), \desc(j_2),\ldots, \desc(j_{c-1})$ and to $\ell - x$ players from the subtree $\desc(j_c)$, i.e., $\ell \geq 2$, and there exists an $x \in [\ell-1]$ such that both $T[j_{c-1},P_{q-1},\ell-x]$ and $\NS[j_c,B,P_q \cup \{a\},(a,k),x]$ are {\em true}.
\end{itemize}
We set $T[j_c,P_q,\ell]$ to {\em false} otherwise. 

Clearly, a desired assignment exists if and only if $T[j_{|\ch(i)|}, P_{|\calP|}, t-1]$ is {\em true}. The size of the dynamic programming table is at most $pn^2$ and each entry can be filled in $O(pn)$ time. This prove the claim. 
\end{proof}

The size of the dynamic programming table $\NS[i,B,B^{\prime},(a,k),t]$ is at most $4^p p n^3$. For each entry, we consider at most $(p+1)^p$ partitions of $B^{\prime}\setminus \{a\}$, and for each partition at most $p!$ permutations; further, for each permutation we have shown that there exists an $O(p^2n^3)$ time algorithm that checks the existence of a stable assignment compatible with it. Hence we conclude that our problem for trees can be solved in $O(4^p (p+1)^p p! p^3 n^6)$ time.

Now, if $(N, L)$ is a forest, we can combine the FPT algorithm described above with the algorithm for graphs with small connected components in the proof of Theorem~\ref{thm:FPT:smallcomponents:NSIS}.
The running time of the latter algorithm is a product of the time required to find a Nash stable assignment
for a single connected component and the time required to combine solutions for different components;
As we have seen, the former is $O(p^{c+1}c^2)$, where $c$ is the maximum component size and the latter is $O(2^pn)$.
In our case, each connected component is a tree, so instead of the $O(p^{c+1}c^2)$ algorithm for general graphs
we can use our FPT algorithm for trees. This shows that our problem is in FPT for arbitrary forests.
\end{proof}

The proof for individual stability is similar and can be found in the appendix.

\begin{theorem}\label{thm:FPT:tree:IS}
The problem of deciding whether an instance of $\gGASP$ with $|A^*|=p$ 
whose underlying social network $(N, L)$ is acyclic has an individually stable feasible assignment is in {\sc FPT} with respect to $p$.
\end{theorem}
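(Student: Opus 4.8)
The plan is to reuse the dynamic-programming algorithm from the proof of Theorem~\ref{thm:FPT:tree:NS} almost verbatim, changing only how deviations are detected. As there, I would first treat the case where $(N,L)$ is a tree, root it at an arbitrary vertex, and process the vertices bottom-up; the forest case is then reduced to the tree case exactly as in Theorem~\ref{thm:FPT:tree:NS}, by using the combination scheme from the proof of Theorem~\ref{thm:FPT:smallcomponents:NSIS} (the $O(2^pn)$ part that stitches components together) but replacing the per-component solver with the tree algorithm, since every component of a forest is a tree. The overall structure stays the same: guess the set $B\subseteq A^*$ of activities that will actually be used, and compute, from the leaves up, for every node $i$, every $B'\subseteq B$, every alternative $(a,k)\in (B'\times[n])\cup\{(a_\emptyset,1)\}$ and every $t\in[k]$, a Boolean value $\IS[i,B,B',(a,k),t]$ telling whether there is a partial assignment of $B'$ to $\desc(i)$ in which $i$ sits in a coalition of final size $k$ engaged in $a$, with exactly $t$ of that coalition's members in $\desc(i)$, and which extends to an individually stable assignment of $B$ to $N$. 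The same observation as for Nash stability applies --- the players engaged in an activity $b\in B'\setminus\{a\}$ lie entirely inside one child-subtree of $i$ --- so one again guesses an indexed partition $\calP$ of $B'\setminus\{a\}$ and matches its blocks to the child-subtrees left to right, solving the matching subproblem with the dynamic program of Claim~\ref{claim:tree}.

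The one real difference from the Nash case is that, for individual stability, a player can deviate into a coalition only if every current member of that coalition accepts the newcomer, i.e.\ weakly prefers the incremented size to the current one. To accommodate this I would additionally guess, for each activity $b\in B$, one bit $\mathrm{acc}_b\in\{0,1\}$ recording whether the unique coalition that will engage in $b$, whatever its eventual size $s_b$, satisfies $(b,s_b+1)\succeq_\ell(b,s_b)$ for all of its members $\ell$; this costs only an extra factor of $2^p$. The dynamic program is then run exactly as for Nash stability, except that every clause that previously forbade an NS-deviation \emph{into} a coalition engaged in $b$ is now imposed only when $\mathrm{acc}_b=1$ --- a deviation into a coalition that would reject the newcomer is not an IS-deviation and must be tolerated. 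Deviations to an unused activity in $A\setminus B$ (or to a leftover copy of an activity), which produce singleton coalitions that trivially accept their sole member, are always feasible and are handled precisely as in Theorem~\ref{thm:FPT:tree:NS}. Whenever a coalition is ``closed off'' --- either because the block of $\calP$ containing its activity has been allotted to a child-subtree, or, for the coalition containing the root, at the root --- its final size and membership are fully determined, and I would then check that the guessed bit $\mathrm{acc}_b$ is consistent with the actual members' preferences, discarding any branch that fails this check. A routine argument gives: some guess $(B,(\mathrm{acc}_b)_{b\in B})$ together with some choice of partitions leads to a positive entry $\IS[r,B,B,(a,k),k]$ if and only if an individually stable feasible assignment exists; and a witnessing assignment is recovered by the usual traceback.

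I expect the only real difficulty to be bookkeeping: verifying that \emph{every} flavour of IS-deviation is covered with the right acceptance guard --- a coalition member leaving for another used activity, a non-member joining her parent's coalition higher up in the tree (the IS-analogue of condition~(v) in the proof of Theorem~\ref{thm:FPT:tree:NS}), a child of a coalition member joining that coalition, and deviations to unused activities --- and that the consistency test on $\mathrm{acc}_b$ is applied at exactly the instant the coalition becomes fully known, so that neither a spurious assignment is accepted nor an individually stable one is rejected. Once the guards are placed correctly, the running-time analysis mirrors that of Theorem~\ref{thm:FPT:tree:NS} up to the additional $2^p$ factor, so the problem is in {\sc FPT} with respect to $p$.
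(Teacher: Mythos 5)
Your overall architecture---guess the used activity set $B$, run the tree dynamic program of Theorem~\ref{thm:FPT:tree:NS} bottom-up with the partition-and-permutation device of Claim~\ref{claim:tree}, handle forests via the component-combination scheme of Theorem~\ref{thm:FPT:smallcomponents:NSIS}, and turn NS-guards into IS-guards by conditioning each ``no one wants to join coalition $b$'' clause on whether that coalition accepts newcomers---is sound, and your soundness/completeness reasoning for the guessed bits $\mathrm{acc}_b$ is fine as far as it goes. The genuine gap is the consistency check for a guessed $\mathrm{acc}_b=0$: you assert that when a coalition is closed off ``its final size and membership are fully determined,'' but the NS-style table stores only a Boolean per tuple (node, $B$, $B'$, alternative, count); it remembers \emph{how many} descendants join the coalition, not \emph{which ones}. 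Whether there exists a completion in which the coalition additionally contains a ``rejector''---a member $\ell$ with $(b,s_b)\succ_\ell (b,s_b+1)$---is a strictly stronger existential statement than the one the table answers, so the test you defer to closure time cannot be evaluated from the information the forward DP retains; moreover, for the coalition containing the current subtree root the prospective rejector may still lie above the current node, so this information must be propagated upward rather than checked locally.

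The missing ingredient is an extra piece of state recording whether the open coalition already contains such a rejector (this is decidable player by player, since the final size $k$ is part of the state). That is exactly how the paper's proof proceeds: it maintains three variants of each entry---\emph{possibly stable} (no rejector yet, would-be joiners tolerated for now), \emph{definitely stable} (a rejector is already inside, so every future join attempt is blocked), and \emph{weakly stable} (no outside descendant wants to join, i.e.\ the NS-style guard holds)---and at every point where a coalition closes it requires ``definitely stable or weakly stable (with the appropriate guards).'' Once you add that flag your global guess of acceptance bits becomes essentially redundant (the paper needs no such guess and pays no extra $2^p$ factor), and your construction collapses into the paper's; without it, the algorithm as described either cannot execute the consistency test at all or risks accepting assignments in which an outsider has a genuine IS-deviation into a coalition that in fact would welcome her.
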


Core stability turns out to be more 
computationally challenging than Nash stability and individual stability
when the number of activities is small and the graph is a star: we will now show that 
core stable assignments are hard to find even 
if there are only two activities 
and the underlying graph is a star (and thus one cannot expect an FPT result
with respect to the number of activities for this setting).
Later, we will see that this hardness result can be extended to the case where $(N, L)$ is a clique, 
i.e., to standard \GASP, thereby solving a problem left open in the work
of \citet{Darmann2015}.

\begin{theorem}\label{thm:NP:star:core}
It is {\em NP}-complete to determine whether an instance of \gGASP\ 
admits a core stable assignment even when the underlying 
graph is a star and the number of non-void activities is~$2$.
\end{theorem}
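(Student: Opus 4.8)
The plan is to dispose of membership first and then devote everything to the reduction. Membership in NP is immediate from Proposition~\ref{prop:in-core}: given a candidate assignment we verify core stability in polynomial time (and with only two non-void activities this check is even simpler). So the content is entirely the NP-hardness proof, which I would obtain by reduction from a suitably restricted version of \textsc{3Sat} (for concreteness, the (3,B2)-\textsc{Sat} variant already used in the proof of Theorem~\ref{thm:NPhardness:sc:NS}). From a formula I would build a star with center $c$ whose leaves consist of one or two ``literal leaves'' per variable, one ``clause leaf'' per clause, and a controlled number of ``padding leaves'' used only to tune coalition sizes. The two non-void activities are $a$ and $b$, and the design exploits the fact that on a star any feasible assignment is essentially described by the set $S$ of leaves sharing an activity with the center together with the identity of the at most one leaf occupying the other activity: I would make $a$ the unique activity the center can ever join in a large group, so that $b$ is forced to be either empty or occupied by a single leaf.

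The heart of the construction is to calibrate the size-dependent preferences so that in every core-stable assignment the center sits in $a$ in a group whose membership $S$ encodes a truth assignment — $S$ picks exactly one literal leaf from each complementary pair, and additionally contains the clause leaf of $C_j$ exactly when $C_j$ has a true literal — with the padding leaves forcing $|S|$ to take the one value at which the center is individually rational. From a satisfying assignment one then reads off $S$, routes a ``spare'' leaf of each satisfied clause to $b$ or to the void activity, and checks directly that nothing strongly blocks. Conversely, from a core-stable assignment one extracts $S$, uses the forced structure to see that it is a legal truth assignment, and uses the absence of a blocking coalition of the shape ``clause leaf plus center via $b$'' (and ``center plus padding leaves via $a$'') to conclude that every clause is satisfied.

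The main obstacle — and where almost all the work sits — is that, unlike the Nash-stability reductions earlier in the paper, here the blocking set may be an \emph{arbitrary} connected subset, which on a star means either a lone leaf craving $(b,1)$ or the center together with an arbitrary subset of leaves craving some $(a,\cdot)$ or $(b,\cdot)$. Ruling all of these out simultaneously is what forces the preference lists to be built around careful size arithmetic: the center must strictly prefer $(b,\ell)$ to its $a$-alternative for exactly one critical value of $\ell$; the literal and clause leaves must be ``hungry'' for $b$ at precisely that size, and only when the relevant clause is unsatisfied; and the padding leaves must at once be willing to sit in $a$ at the target size yet unwilling to trigger any larger-$a$ deviation. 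Making these thresholds mutually consistent, so that the \emph{only} way to kill every blocking coalition is to have chosen $S$ from a satisfying assignment, is the crux; once the gadget is tuned, both directions reduce to a short case analysis over the few shapes a feasible assignment on a star with two activities can take. Finally, I would remark that lifting the result to a clique (Theorem~\ref{thm:NP:clique:core}) is not automatic, since making leaf--leaf coalitions feasible introduces new blockers; a couple of extra activities and some additional gadgetry are needed there, but the core of the size calibration carries over, which is how standard \GASP{} with a constant number of activities inherits the hardness.
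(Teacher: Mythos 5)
Your NP-membership argument is fine and matches the paper (Proposition~\ref{prop:in-core}). The hardness part, however, is a plan rather than a proof: the entire content of the theorem is deferred to ``calibrating'' size thresholds, and no gadget is actually exhibited. One missing mechanism is what forces a core stable assignment to have your intended macro-structure at all (center engaged in $a$ in a coalition of one critical size, $b$ occupied in the right way). Individual rationality does not force the center to participate, and with only two activities you cannot simply declare the degenerate assignments unstable; you need a sub-instance whose players can only be pacified by that one configuration. The paper achieves this by embedding the three-player empty-core instance of Example~\ref{ex:core:empty} on the center $c$ and two special leaves $x_1,x_2$ (restricted to sizes $1$--$3$), so that every assignment in which $c$ is not in $a$ with exactly $k$ set-players and $x_2$ is not alone in $b$ is strongly blocked by one of $\{c,x_2\}\to a$, $\{c,x_1,x_2\}\to a$, $\{c,x_1\}\to b$, or $\{x_2\}\to b$. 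Nothing in your sketch plays this role.

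The more serious gap is the encoding itself. You want the anonymous coalition $\pi^a$ to ``pick exactly one literal leaf from each complementary pair and the clause leaf of each satisfied clause,'' but preferences in \gGASP\ depend only on (activity, size), so such membership constraints can only be enforced indirectly, through which leftover leaves can assemble into a strongly blocking coalition via $b$. You acknowledge this is the crux but give no mechanism for per-variable consistency (what blocks an $a$-coalition of the correct cardinality that contains both, or neither, of $x$ and $\bar x$ and makes up the count with padding leaves?), nor for making a leaf's appetite for $b$ depend on whether ``its'' clause is satisfied, which is not a property a preference list can reference directly. The paper's proof sidesteps exactly this difficulty by reducing from \textsc{Exact Cover by 3-Sets} instead of SAT: each element $v_i$ gets its own distinct target size $\beta(v_i)$ and a set $\Dummy(v_i)$ of dummies approving only $(b,\beta(v_i))$, so the unique candidate blocking coalition for $v_i$ is $\{c,x_2\}\cup\calS(v_i)\cup\Dummy(v_i)$, and it is neutralized precisely when some covering set $S_j\ni v_i$ sits in $\pi^a$ at its top alternative $(a,k+1)$. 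Until you supply an analogous composition-pinning device for your SAT gadget (distinct target sizes per constraint, padding players approving a single size each, and a verification that no unintended coalition of the same cardinality blocks, and that the intended ones do), the reduction is not established: the proposal names the obstacle but does not overcome it.
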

\begin{proof}
Our problem is in NP by Proposition~\ref{prop:in-core}.
To establish NP-hardness, we reduce from the NP-complete problem {\sc Exact Cover by $3$-Sets} (X3C)~\cite{gj}. 
Given a finite set $V=\{v_1,v_2,\ldots,v_{3k}\}$ and a collection $\calS=\{S_1,S_2,\ldots,S_{m}\}$ of $3$-element subsets of $V$, X3C asks whether $\calS$ admits an {\em exact cover} $\calS^{\prime}\subseteq \calS$, i.e., $|\calS^{\prime}|=k$ and $V=\bigcup_{S_i \in \calS^{\prime}}S_i$.

Given an instance of X3C, we construct an instance of \gGASP\ as follows.
We define the set of activities to be $A^{*}=\{a,b\}$. 
We introduce a center player $c$, two players $x_1$ and $x_2$, and a player $S_j$ for each $S_j \in \calS$.

\emph{Idea.}
We will construct the dummies of each $v_i \in V$ and the preferences of players as follows:
\begin{enumerate}
\item The players $c$, $x_1$, and $x_2$ form an empty core instance; thus, in order to stabilize them, the center $c$ needs to be assigned to activity $a$, together with $k$ players from $\calS$, and $x_2$ must be assigned to activity $b$. 
\item If the dummies of $v_i$ and the players corresponding to the sets including $v_i$ (denoted by $\calS(v_i)$) are assigned to the void activity, then these players together with the center $c$ and $x_2$ will form a blocking coalition of size $\beta(v_i)$ and deviate to activity $b$.
\end{enumerate}

\emph{Construction details.}
For each $v_i \in V$, we let $\calS(v_i)=\{\, S_j \mid v_i \in S_j \in \calS \,\}$ and $\beta(v_i)=i+3k+1$ 
and create a set $\Dummy(v_i)=\{d^{(1)}_i,d^{(2)}_i,\ldots,d^{(\beta(v_i)-|\calS(v_i)|-2)}_i\}$ of dummy players. 
We attach each of the dummies to the center $c$. Intuitively, for each $i\in[3k]$
the number $\beta(v_i)$ is the target coalition size when all players 
in $\Dummy(v_i)$ are engaged in activity $b$, together with $c$, $x_2$, and the players in $\calS(v_i)$. Note that the values $\beta(v_i)$ are at least $3$ and distinct over $i\in[3k]$. We then attach to the center, the players $x_1$, $x_2$, all the players in $\calS$, and all the dummy players. The illustration of the graph is presented in Figure \ref{fig:star:core}. 

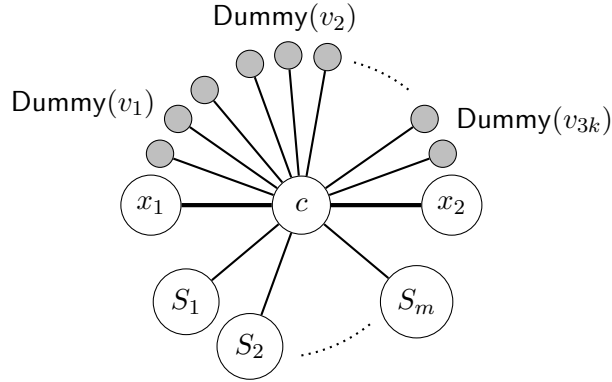
\begin{figure*}[htb]
\centering
%star
\begin{tikzpicture}[scale=1, transform shape]
	\def \radius {2cm}
	\node[draw, circle, minimum size=22pt](c) at (0:0) {$c$};
	\node[draw, circle](x1) at ({180}:\radius) {$x_1$};
	\node[draw, circle](x2) at ({0}:\radius) {$x_2$};
	
	\node[draw, circle](node1) at ({220}:\radius) {$S_{1}$};
	\node[draw, circle](node2) at ({250}:\radius) {$S_{2}$};
	\node[draw, circle](node3) at ({320}:\radius) {$S_{m}$};
 	
	\node[draw, circle,fill=gray!50](node4) at ({160}:\radius) {};
	\node[draw, circle,fill=gray!50](node5) at ({145}:\radius) {};
	\node[draw, circle,fill=gray!50](node6) at ({130}:\radius) {};
	\node at ({155}:3.2) {$\Dummy(v_1)$};
	
	\node[draw, circle,fill=gray!50](node7) at ({110}:\radius) {};
	\node[draw, circle,fill=gray!50](node8) at ({95}:\radius) {};
	\node[draw, circle,fill=gray!50](node9) at ({80}:\radius) {};
	\node at ({95}:2.5) {$\Dummy(v_2)$};
	
	\draw [dotted,thick] (0.7,1.9) arc [radius=2, start angle=75, end angle= 50];
	\node[draw, circle,fill=gray!50](node10) at ({35}:\radius) {};
	\node[draw, circle,fill=gray!50](node11) at ({20}:\radius) {};
	\node at ({20}:3.3) {$\Dummy(v_{3k})$};
    
 	\draw[-, >=latex,  thick] (c)--(node1);
	\draw[-, >=latex,  thick] (c)--(node2);
	\draw[-, >=latex,  thick] (c)--(node3);	
	\draw[-, >=latex,  thick] (c)--(node4);
	\draw[-, >=latex,  thick] (c)--(node5);
	\draw[-, >=latex,  thick] (c)--(node6);
	\draw[-, >=latex,  thick] (c)--(node7);
	\draw[-, >=latex,  thick] (c)--(node8);
	\draw[-, >=latex,  thick] (c)--(node9);
	\draw[-, >=latex,  thick] (c)--(node10);
	\draw[-, >=latex,  thick] (c)--(node11);

	\draw[-, >=latex,ultra thick] (c)--(x1);
	\draw[-, >=latex,ultra thick] (c)--(x2);
	\draw [dotted,thick] (0,-\radius) arc [radius=2, start angle=280, end angle= 310];
\end{tikzpicture}
\caption{
The graph constructed in the hardness proof for Theorem \ref{thm:NP:star:core}. 
\label{fig:star:core}
}
\end{figure*}

%preferences
Now the players' preferences are defined as follows. For each $S_j \in \calS$, we let 
$B_j=\{b\}\times \{\, \beta(v_i)\mid v_i \in S_j \,\}$; also, set $B = \bigcup_{S_j\in \calS}B_j$.
The preferences of each player $S_j \in \calS$ are given by
\begin{align*}
&S_j:~(a,k+1)\succ B_j \succ (a_{\emptyset},1).
\end{align*}
For each $i\in[3k]$ the dummy players in  $\Dummy(v_i)$ only approve the alternative $(b,\beta(v_i))$.

Finally, the preferences of  $x_1$, $c$, and $x_2$ are given by
\begin{align*}
&x_{1}:~ (b,2)  \succ (a,3) \succ (a_{\emptyset},1)\\
&c:~(a,2) \succ (b,2) \succ (a,3) \succ B \succ (a,k+1)\succ (a_{\emptyset},1)\\
&x_{2}:~(a,3) \succ B \succ (b,1)\succ (a,2) \succ (a_{\emptyset},1).
\end{align*} 
Note that the preferences of $c$, $x_1$ and $x_2$, when restricted to $A\times[1, 2, 3]$,
form an instance of \gGASP\ with an empty core (Example~\ref{ex:core:empty}).

\emph{Correctness.}
We will show that $(V, \calS)$ admits an exact cover if and only if there exists a core stable feasible assignment.

Let $\calS^{\prime}$ be an exact cover in $(V, \calS)$.
Then, we construct a feasible assignment 
$\pi$ by assigning activity $a$ to the player $c$ and the players 
$S_j \in \calS^{\prime}$, assigning $b$ to the player $x_2$, and assigning the void activity to the remaining players.
Clearly, $\pi$ is individually rational since $|\calS^{\prime}|=k$ and hence $|\pi^a|=k+1$. Further, notice that no connected 
subset $T$ together with activity $a$ strongly blocks $\pi$: by the definition of a blocking coalition, every such subset has to contain the players in $\calS^{\prime}$, who  
are currently enjoying one of their top alternatives. 
It remains to show that no connected subset $T$ together with activity $b$ strongly blocks $\pi$. Suppose 
towards a contradiction that such a subset $T$ exists; 
as $x_2 \in T$, it must be the case 
that $|T|=\beta(v_i)$ for some $v_i \in V$ and hence $T$ consists of agents who approve $(b, \beta(v_i))$, i.e., 
$T=\{c,x_2\}\cup \calS(v_i) \cup \Dummy(v_i)$ for some $v_i\in V$. However, since $\calS^{\prime}$ is an exact cover, 
there is a set $S_j\in \calS^{\prime}\cap \calS(v_i)$ with $\pi(S_j)=a$, and this agent prefers $(a, k+1)$
to $(b, \beta(v_i))$, a contradiction.
Hence, $\pi$ is core stable.

Conversely, suppose that there exists a core stable feasible assignment $\pi$ and let 
$\calS^{\prime}=\{\, S_j \in \calS \mid \pi(S_j)=a \,\}$. 

We will first argue that $\pi(c)=a$. Indeed, if $\pi(c)=b$ and the only other agent to engage in $b$ is $x_1$, then $\pi^a=\emptyset$ since no player approves $(a,1)$; thus, the players $c$ and $x_2$ can deviate to $a$. If $\pi(c)=b$ and $|\pi^b|=\beta(v_i)$ for some $v_i\in V$ then $\pi^a=\emptyset$ and $c$, $x_1$ and $x_2$ can deviate to $a$. If $\pi(c)=a_{\emptyset}$, then $\pi^a=\emptyset$ and the players $c$, $x_1$, and $x_2$ would deviate to $a$. 
It follows that $\pi(c)=a$. 
Now, if $\pi^a=\{c, x_2\}$ then $\pi^b=\emptyset$ and agent $x_2$ can deviate to $b$. Similarly, if $\pi^a=\{c, x_1, x_2\}$ then $\pi^b=\emptyset$, and $c$ and $x_1$ can deviate to $b$. If follows that $|\pi^a|=k+1$ and hence $|\calS^{\prime}|= k$. Also, since there is no coalition assigned to an alternative in $B$, $x_2$ must be engaged alone in activity $b$; otherwise, the player can deviate to $b$.

Now, if $\calS^{\prime}$ is not an exact cover for $(V, \calS)$,
there exists an element $v_i\in V$ such that no player in $\calS(v_i)$ is assigned to $a$; that is, all players in $\calS(v_i)$ are assigned to the void activity.
Then the coalition $\{c, x_2\}\cup \calS(v_i) \cup \Dummy(v_i)$ 
together with the activity $b$ strongly blocks $\pi$, contradicting the stability of $\pi$.
Thus, $\calS^\prime$ is an exact cover for $(V, \calS)$.
\end{proof}

The hardness result in Theorem~\ref{thm:NP:star:core}
immediately generalizes to instances of \gGASP\
with more than two activities: we can modify the construction in our hardness reduction
by introducing additional activities that no player wants to engage in.
In contrast, checking the existence of core stable assignments in
\gGASP\ is easy if $|A^*|=1$, irrespective of the structure of the social network. 

\begin{proposition}\label{prop:core-single}
Every instance of \gGASP\ with $A=\{a, a_\emptyset\}$ 
admits a core stable assignment; moreover, 
such an assignment can be computed in polynomial time.
\end{proposition}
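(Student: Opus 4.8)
The plan is to exploit the fact that with a single non-void activity $a$, a feasible assignment $\pi$ is entirely determined by the set $S=\pi^a$ of players engaged in $a$, and feasibility is equivalent to $S$ being a connected subset of $(N,L)$ (possibly empty), with every remaining player idle. So it suffices to exhibit one connected set $S$ such that putting $S$ on $a$ and everyone else on $a_\emptyset$ is individually rational and admits no feasible strongly blocking coalition.

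First I would record a simple structural fact. Call a connected set $T\subseteq N$ \emph{good} if every member strictly approves the alternative $(a,|T|)$, i.e.\ $(a,|T|)\succ_i(a_\emptyset,1)$ for all $i\in T$; note the empty set is vacuously good. Suppose $\pi$ is individually rational with $\pi^a=S$ and a feasible coalition $T$ together with $a$ strongly blocks $\pi$. Since the deviation is feasible, $T$ is connected and $S=\pi^a\subseteq T$. For $i\in T\setminus S$ we have $(a,|T|)\succ_i(\pi(i),|\pi_i|)=(a_\emptyset,1)$; for $i\in S$ we have $(a,|T|)\succ_i(a,|S|)\succeq_i(a_\emptyset,1)$ by the blocking condition, individual rationality, and transitivity. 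In both cases $i$ strictly approves $(a,|T|)$, so $T$ is good.

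Now the crux: let $S^{*}$ be a good set of \emph{maximum} cardinality; this is well defined because $\emptyset$ is good. I claim that assigning $S^{*}$ to $a$ and all other players to $a_\emptyset$ is core stable. It is individually rational since every $i\in S^{*}$ strictly, hence weakly, approves $(a,|S^{*}|)$. If some feasible coalition $T$ and $a$ strongly blocked this assignment, then by the structural fact $T$ is good, so $|T|\le|S^{*}|$; but $S^{*}=\pi^a\subseteq T$ gives $|S^{*}|\le|T|$, hence $T=S^{*}$, and then the blocking condition would require $(a,|S^{*}|)\succ_i(a,|S^{*}|)$ for $i\in T$, a contradiction. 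Hence no feasible strongly blocking coalition exists.

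It remains to compute a maximum-cardinality good set efficiently. For each size $s\in\{1,\dots,n\}$ I would form the set $D_s=\{\,i\in N : (a,s)\succ_i(a_\emptyset,1)\,\}$ and compute the connected components of the induced subgraph $(D_s,L|D_s)$. A good set of size $s$ exists if and only if some component of $(D_s,L|D_s)$ has at least $s$ vertices: a good set of size $s$ is exactly a connected $s$-element subset of $D_s$, hence lies inside a component of size $\ge s$; conversely, from a component $C$ with $|C|\ge s$ one extracts such a set by taking a spanning tree of $C$ and repeatedly deleting leaves until $s$ vertices remain (deleting a leaf keeps a tree connected, and the surviving tree uses only edges of $L$, so its vertex set is connected in $(N,L)$ and lies in $D_s$). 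Taking the largest feasible $s$---and $S^{*}=\emptyset$ if there is none---yields a maximum-cardinality good set. Each value of $s$ costs $O(n)$ preference queries and $O(n^2)$ time for the component computation, so the whole procedure runs in polynomial time. The only mildly delicate points are the transitivity step in the structural fact above and the observation that leaf-peeling produces a set connected in the original graph rather than only in the auxiliary spanning tree; neither presents a real obstacle.
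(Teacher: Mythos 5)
Your proof is correct and follows essentially the same route as the paper's: pick the largest size $s$ for which a connected set of $s$ approvers of $(a,s)$ exists, assign such a set to $a$, and argue that any strongly blocking coalition would have to be a larger connected set of approvers, contradicting maximality. The only differences are cosmetic (you use strict approvers where the paper uses weak approvers, and you spell out the leaf-peeling extraction of an exact-size connected subset), and neither affects correctness.
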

\begin{proof}
Consider an instance $(N, (\succeq_i))_{i\in N}, A, L)$ of \gGASP\ with $A=\{a, a_\emptyset\}$, 
and let $n=|N|$. For each $s\in[n]$, let $S_s$
be the set of all players who weakly prefer $(a, s)$ to $(a_\emptyset, 1)$.
If for each $s\in[n]$ the largest connected component of $S_s$ 
with respect to $(N, L)$ contains 
fewer than $s$ agents, then no outcome in which a non-empty set 
of agents engages in $a$ is individually rational, whereas the assignment
$\pi$ given by $\pi(i)=a_\emptyset$ for all $i\in N$ is core stable.
Otherwise, consider the largest value of $s$ such that $S_s$
has a connected component of size at least $s$. Find a connected subset of $S_s$
of size exactly $s$, and assign the agents in this subset to $a$;
assign the remaining agents to $a_\emptyset$. To see that this assignment is core stable, 
note that for every deviating coalition $S$ we would have $|S|>s$, and hence 
such a coalition is either disconnected or some players in $S$ prefer $(a_\emptyset, 1)$
to $(a, |S|)$.  
\end{proof}

It is not clear if the problem of finding core stable assignments in \gGASP\ 
is in FPT with respect to the number of activities
when the underlying graph is a path. However, we can place this problem in XP
with respect to this parameter. 
In fact, we have the following more general result 
for graphs with few connected coalitions (see the work of \citet{Elkind2014}
for insights on the structure of such graphs).

\begin{proposition}\label{prop:core-xp}
Given an instance $(N, (\succeq_i))_{i\in N}, A, L)$
of \gGASP\ with $|N|=n$, $|A|=p+1$, 
such that the number of non-empty connected subsets of $(N, L)$ 
is $\kappa$, we can decide in time $O(\kappa^p)\cdot\poly(n, p)$
whether this instance admits a core stable assignment,
and find one such assignment if it exists.
\end{proposition}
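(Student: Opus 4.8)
The plan is to solve the problem by brute-force enumeration over all feasible assignments, testing each one for core stability with the polynomial-time procedure of Proposition~\ref{prop:in-core}. The starting point is the observation that a feasible assignment $\pi:N\to A$ is determined by, and in bijection with, the tuple $(\pi^a)_{a\in A^*}$ of coalitions it allocates to the $p$ non-void activities: the players assigned to $a_\emptyset$ are exactly those outside $\bigcup_{a\in A^*}\pi^a$. Feasibility of $\pi$ translates into two conditions on this tuple: the sets $\pi^a$ are pairwise disjoint, and each $\pi^a$ is either empty or connected in $(N,L)$ (since for $i$ with $\pi(i)=a\neq a_\emptyset$ we have $\pi_i=\pi^a\ni i$). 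Conversely, any tuple of pairwise disjoint subsets, each of which is empty or connected, comes from a unique feasible assignment.

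Concretely, I would first enumerate the family $\calC$ of all non-empty connected subsets of $(N,L)$; by hypothesis $|\calC|=\kappa$, and such a family can be produced with polynomial delay, hence in time $\poly(n)\cdot\kappa$. Then I would iterate over all $(\kappa+1)^p$ tuples $(C_a)_{a\in A^*}$ with $C_a\in\calC\cup\{\emptyset\}$; for each tuple, check in $\poly(n,p)$ time whether the chosen sets are pairwise disjoint, discard the tuple if not, and otherwise form the induced feasible assignment $\pi$ and run the $O(pn^3)$-time test of Proposition~\ref{prop:in-core} to decide whether $\pi$ is core stable. The algorithm outputs the first core stable assignment it encounters and reports that none exists if the enumeration is exhausted. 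Correctness is immediate from the bijection above, since every feasible assignment is inspected exactly once; and the running time is dominated by the number of tuples, $(\kappa+1)^p$, times the $O(pn^3)$ cost per tuple, i.e.\ $O(\kappa^p)\cdot\poly(n,p)$, with the $\poly(n)\cdot\kappa$ cost of building $\calC$ absorbed.

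I do not expect a genuine obstacle here. The only two points requiring a line of justification are the correspondence between feasible assignments and tuples of pairwise disjoint empty-or-connected subsets, and the appeal to a standard polynomial-delay routine for listing the $\kappa$ connected vertex subsets within the claimed time; everything else reduces directly to Proposition~\ref{prop:in-core}. The same enumeration scheme immediately yields analogous XP algorithms whenever the number of connected subsets is polynomially bounded, e.g.\ on paths, where $\kappa=O(n^2)$.
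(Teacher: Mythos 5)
Your proposal is correct and matches the paper's own argument: the paper likewise enumerates all $(\kappa+1)^p$ ways of assigning each non-void activity either a connected subset from the list or no players at all, and tests each resulting assignment with the polynomial-time core-stability check of Proposition~\ref{prop:in-core}. The extra details you supply (pairwise-disjointness filtering and polynomial-delay enumeration of connected subsets) are fine but not a different route.
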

\begin{proof}
Let $C_1, \dots, C_\kappa$ be the list of all non-empty connected subsets
of $(N, L)$. Since each assignment of players to activities
has to assign a connected (possibly empty) subset of players to each activity, 
we can bound the number of possible assignments by $(\kappa+1)^p$:
each of the $p$ non-void activities is assigned to a coalition in our list or to no player at all, 
and the remaining players are assigned the void activity.
We can then generate all these assignments one by one and check
if any of them is core stable; by Proposition~\ref{prop:in-core}, 
the stability check can be performed in time polynomial in $n$ and $p$.
\end{proof}

If the social network $(N, L)$ is a path, it has $O(|N|^2)$ connected subsets.
Thus, we obtain the following corollary.

\begin{cor}\label{cor:core-xp-path}
The problem of deciding whether a given instance
of \gGASP\ whose underlying graph is a path admits
a core stable assignment is in {\em XP} with respect to the number of activities. 
\end{cor}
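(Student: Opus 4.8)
This corollary follows almost immediately from Proposition~\ref{prop:core-xp} together with a structural fact about paths. The plan is to observe that in a path on $n$ vertices, every connected subset is simply a contiguous interval of vertices, and there are exactly $\binom{n}{2}+n = O(n^2)$ such intervals (one for each choice of left and right endpoint, plus possibly counting the empty set separately). Hence the quantity $\kappa$ in the statement of Proposition~\ref{prop:core-xp} is polynomially bounded, namely $\kappa = O(n^2)$, when the underlying graph is a path.

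Having established that $\kappa = O(n^2)$, I would plug this bound directly into the running time $O(\kappa^p)\cdot\poly(n,p)$ guaranteed by Proposition~\ref{prop:core-xp}. This yields a running time of $O(n^{2p})\cdot\poly(n,p) = n^{O(p)}$, which is exactly the form required for membership in XP with respect to the parameter $p$, the number of activities. The algorithm itself is the one from Proposition~\ref{prop:core-xp}: enumerate all assignments obtained by picking, for each non-void activity, either an interval of the path or the empty set, assign the remaining players the void activity, and test each resulting assignment for core stability using the polynomial-time check from Proposition~\ref{prop:in-core}.

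There is essentially no obstacle here; the only thing to be careful about is confirming that the connected subsets of a path are precisely its subpaths (intervals), which is immediate since removing any internal vertex of an induced subgraph of a path disconnects it. One might also want to note explicitly that we only ever need to consider feasible assignments, i.e., those in which each activity's group is connected, which is automatically the case when we restrict the candidate groups to intervals. With these observations, the corollary is an immediate instantiation of the proposition.

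\begin{proof}
If the underlying graph $(N, L)$ is a path, then a subset $S\subseteq N$ is connected if and only if $S$ is an interval of consecutive vertices along the path. The number of such intervals, including the empty set, is $\binom{|N|}{2}+|N|+1 = O(|N|^2)$, so in the notation of Proposition~\ref{prop:core-xp} we have $\kappa = O(|N|^2)$. Applying that proposition, we can decide whether the instance admits a core stable assignment (and find one if it exists) in time $O(\kappa^p)\cdot\poly(n, p) = n^{O(p)}$. Hence the problem is in XP with respect to the number of activities $p$.
\end{proof}
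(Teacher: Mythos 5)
Your proposal is correct and coincides with the paper's argument: the paper likewise observes that a path has $O(|N|^2)$ connected subsets and then invokes Proposition~\ref{prop:core-xp} to obtain the $n^{O(p)}$ bound. The minor bookkeeping difference (whether the empty set is counted in $\kappa$) is immaterial.
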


\subsection{Cliques}
It is unlikely that our FPT results can be extended to even cliques: 
our next result shows that the problem of finding a Nash stable 
outcome is W[1]-hard with respect to the number of activities
even for classic \GASP, i.e., when the social network imposes no constraints
on possible coalitions and players have approval preferences.

\begin{theorem}\label{thm:W1:activities:clique:NS}
The problem of determining whether an instance of \gGASP\ 
admits a Nash stable is {\em W[1]}-hard with respect to the number of activities,
even if the underlying graph $G=(N, L)$ is a clique.
\end{theorem}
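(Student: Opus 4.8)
The plan is to give a parameterized reduction from \MCC, which is W[1]-hard with respect to the number $k$ of color classes. Given a graph $G$ whose vertices are partitioned into $V_1,\dots,V_k$, I would build in polynomial time an instance of \gGASP\ whose social network is a clique, in which all preferences are approval preferences, and whose activity set has size $p=O(k^2)$, so that $G$ has a clique containing one vertex of each color if and only if the instance admits a Nash stable feasible assignment; since $p$ is bounded by a function of $k$ and the construction is polynomial, this is a valid parameterized reduction, and because every coalition in a clique is feasible it also settles classic \GASP. Fix a global injection $\mathrm{id}$ assigning distinct positive integers to the vertices of $G$, and a pairing function $\gamma(a,b)=aM+b$ with $M$ larger than every vertex index, which is injective and from which both coordinates can be recovered. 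The activities will be a \emph{selection activity} $\alpha_i$ for each color $i$, an \emph{edge activity} $\beta_{ij}$ for each pair $i<j$, together with a constant number of auxiliary activities used inside the gadgets below; the instance is padded with dummy players so that the coalition sizes we need (up to $|V|$ on the $\alpha_i$ and up to roughly $|V|^2$ on the $\beta_{ij}$) are attainable, keeping the number of players polynomial.

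The construction has three kinds of gadgets. First, around each $\alpha_i$ I place a gadget built from the stalker game of Example~\ref{ex:NS:empty} whose players are happy exactly when $\alpha_i$ is used by a coalition of size $\mathrm{id}(v)$ for some $v\in V_i$; any other size for $\alpha_i$ leaves a gadget player at a disapproved alternative, from which she escapes to void and triggers a destabilizing cascade. Thus in every Nash stable assignment each $\alpha_i$ ``selects'' a vertex $v_i\in V_i$. Second, for each edge $\{u,v\}\in E$ with $u\in V_i$, $v\in V_j$ I add a group of \emph{edge players} that can be jointly satisfied only as a coalition of size exactly $\gamma(\mathrm{id}(u),\mathrm{id}(v))$ engaged in $\beta_{ij}$, again guarded by a stalker gadget that forces $\beta_{ij}$ to be used by a coalition whose size is $\gamma(\mathrm{id}(u'),\mathrm{id}(v'))$ for \emph{some} edge $\{u',v'\}$ between colors $i$ and $j$. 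Third --- and this is the technical heart of the argument --- for each color $i$ and each $j\neq i$ I install a \emph{consistency gadget} that produces a profitable deviation unless the $V_i$-coordinate encoded in the size of $\beta_{ij}$ equals $\mathrm{id}(v_i)$, i.e.\ unless the edge recorded by $\beta_{ij}$ uses the vertex selected by $\alpha_i$; here one exploits that a deviating player can move only to a single activity, at a coalition size that is fully determined by the current assignment, so that a suitable approval set for one gadget player can test ``compatibility'' of the sizes of $\alpha_i$ and $\beta_{ij}$. Designing these consistency gadgets so that they enforce agreement without creating spurious (in)stabilities and without over-constraining the edge activities is the main obstacle, and is where I expect most of the work to lie.

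For correctness, from a multicolored clique $\{v_1,\dots,v_k\}$ one assigns $\mathrm{id}(v_i)$ players to each $\alpha_i$ and $\gamma(\mathrm{id}(v_i),\mathrm{id}(v_j))$ players to each $\beta_{ij}$ (possible since $\{v_i,v_j\}\in E$), parks the dummy players on void or on unavailable alternatives so that no one has a deviation, and verifies that every stalker and consistency gadget is satisfied; this yields a Nash stable assignment. Conversely, in any Nash stable assignment the $\alpha$-gadgets force each $\alpha_i$ to select some $v_i\in V_i$, the $\beta$-gadgets force each $\beta_{ij}$ to record some edge between colors $i$ and $j$, and the consistency gadgets force that edge to be exactly $\{v_i,v_j\}$; hence $v_i$ and $v_j$ are adjacent for all $i<j$, so $\{v_1,\dots,v_k\}$ is a multicolored clique in $G$. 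This establishes W[1]-hardness of Nash stability with respect to the number of activities, even on cliques.
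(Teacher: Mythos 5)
Your overall strategy (a parameterized reduction from a multicolored-clique-type problem, $O(k^2)$ activities, vertex and edge choices encoded in coalition sizes, stalker gadgets forcing certain activities to be used) is in the right spirit, but the proof has a genuine gap exactly where you say "the technical heart" lies: the consistency gadgets are never constructed, and the mechanism you sketch for them cannot work. You propose that a single gadget player with a "suitable approval set" can test whether the coordinate encoded in the size of $\beta_{ij}$ matches the size of $\alpha_i$. With approval preferences, a player's NS-deviation incentive depends only on whether she approves her current alternative and whether she approves the one target alternative $(\beta_{ij},|\pi^{\beta_{ij}}|+1)$; she can never compare the sizes of two different coalitions. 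If she approves her current alternative at all the sizes that can legitimately arise, she has no strict preference to leave, regardless of whether $\beta_{ij}$ is consistent with $\alpha_i$; if she does not, instability is triggered independently of consistency. Even with arbitrary (non-approval) preferences, a single total preorder cannot realize the required pattern: you would need, for every $v\in V_i$, that all "mismatching" sizes of $\beta_{ij}$ are ranked above $(\alpha_i,\mathrm{id}(v))$ and all "matching" ones below it, and already for two vertices $v_1\neq v_2$ this forces $(\alpha_i,\mathrm{id}(v_2))\succ(\alpha_i,\mathrm{id}(v_1))$ and $(\alpha_i,\mathrm{id}(v_1))\succ(\alpha_i,\mathrm{id}(v_2))$ simultaneously, a contradiction. (A fix in the spirit of two players with opposed orderings, as used elsewhere for the few-players parameter, would abandon your claim of approval preferences and would still require handling all unintended assignments in the no-direction, which you have not done.)

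The paper's proof avoids consistency gadgets entirely and this is where the two approaches really diverge. It reduces from {\sc Clique} on regular graphs, gives every vertex $v$ a distinct target size $\alpha(v)$ and every edge $e$ a distinct target size $\beta(e)$, and makes the edge players $e_{vu}$ approve both $(a_i,\alpha(v))$, $(a_i,\alpha(v)+1)$ and $(b_j,\beta(e))$. Consistency between the selected vertices and the used edge activities is then forced structurally: a selected vertex $v$'s coalition must contain all its dummies and hence exactly $\delta-k+1$ incident edge players, so the remaining $k-1$ incident edge players must sit on edge activities (else they deviate into $a_i$); since each edge activity can hold at most two edge players by individual rationality and there are only $k(k-1)/2$ edge activities, a counting argument shows every used edge activity is occupied by both players $e_{uv},e_{vu}$ of an edge with both endpoints selected. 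In addition, an activity $c$ of size $n-k$ absorbs the unselected vertex players and a $(d,2)$-stalker forces every vertex player off the void activity, steps that have no counterpart in your sketch. Until you exhibit concrete consistency gadgets (or replace them by a structural/counting argument of this kind) and verify the no-direction against all assignments, the reduction is not established.
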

\begin{proof}
%We present a proof for Nash stability. The proof for individual stability can be found in the appendix.
We will provide a parameterized reduction from {\sc Clique} on regular graphs. Given an undirected graph $G=(V,E)$ and a positive integer $k$, {\sc Clique} asks if $G$ admits a clique of size $k$. The problem is known to be W[1]-hard even for {\em regular graphs}, i.e., graphs where each vertex has the same degree
(see e.g. Theorem 13.4 in the book of \citet{Cygan2015}).

Given a regular graph $G=(V,E)$ and a positive integer $k$, where $|V|=n$, $|E|=m$, and each vertex of $G$ has degree $\delta \geq k-1$, 
we create an instance of \gGASP\ whose underlying graph is a clique, as follows. 
%Vertex and Edge Activities
We create one {\em vertex activity} $a_i$ for each $i \in [k]$, one {\em edge activity} $b_j$ for each $j \in [k(k-1)/2]$, and two other activities $c$ and $d$.
The set of non-void activities is given by 
\[
A^*=\{\, a_i \mid i \in [k] \,\} \cup \{\, b_j \mid j \in  [k(k-1)/2]\,\} \cup \{c,d\}.
\]
%Vertex and Edge Players
For each $v \in V$, we create one {\em vertex player} $v$, and for each edge $e=\{u,v\} \in E$, we create two {\em edge players} 
$e_{uv}$ and $e_{vu}$.

\emph{Idea.} We will create dummies of vertices and edges in such a way that a stable assignment has the following properties:
\begin{enumerate}
\item if a vertex player $v$ is assigned to a vertex activity, it forms a coalition of size $\alpha(v)$ that consists of the player $v$, its dummies, and $\delta-k+1$ edge players incident to the vertex; and
\item if an edge player $e_{vu}$ is assigned to an edge activity, it forms a coalition of size $\beta(e)$ that consists of the edge player $e_{uv}$, $e_{vu}$, and the dummies of the edge $\{u,v\}$.
\end{enumerate}

%Vertex and Edge dummy players
\emph{Construction details.} Let $P=\{2j+n\mid j\in[n]\}$, and let $\alpha:V\to P$ be a bijection that assigns a 
distinct number in $P$ to each vertex $v\in V$. Note that $u\neq v$
implies that $\alpha(u)$ and $\alpha(v)$ differ by at least $2$, i.e., $\alpha(u)+1 \neq \alpha(v)$ and $\alpha(v)+1 \neq \alpha(u)$.
Similarly, let $Q=\{2j\mid j\in[m]\}$ and 
let $\beta:E \rightarrow Q$ be a bijection that assigns a distinct number
in $Q$ to each edge $e\in E$.
For each $v\in V$ we construct a set $\Dummy(v)$ of $\alpha(v)-\delta+k-2$ {\em dummy vertex players}. 
Similarly, for each $e\in E$ we construct a set $\Dummy(e)$ of $\beta(e)-2$ {\em dummy edge players}.
Lastly, we create one stalker $s$. 
The set of players is given by 
\[
N=V \cup \bigcup_{v \in V}\Dummy(v) \cup E \cup \bigcup_{e \in E}\Dummy(e) \cup \{s\}.
\]

%Preferences for vertex and edge players
Now the preferences are given as follows.
\begin{itemize}
\item Each vertex player $v \in V$ approves the alternatives $(c,n-k)$ and $(d,1)$ as well as each alternative $(a_i,\alpha(v))$ where $i\in [k]$.
\item Each dummy in $\Dummy(v)$ of a vertex player $v$ approves the alternatives $(a_i,\alpha(v))$ and $(a_i,\alpha(v)+1)$ where $i\in [k]$.
\item Each edge player $e_{vu}$ approves the alternatives $(a_i,\alpha(v))$ and $(a_i,\alpha(v)+1)$ where $i\in [k]$ as well as the alternatives
$(b_j,\beta(e))$ where $j \in [k(k-1)/2]$.
\item  The dummies in $\Dummy(e)$ only approve the alternatives $(b_j,\beta(e))$ where $j \in [m]$.
\item The stalker $s$ approves the alternative $(d, 2)$.
\end{itemize}
All of these players are indifferent among all alternatives they approve. 
Finally, we take the underlying social network to be a complete graph.
Note that the number of activities depends on $k$, but not on $n$, and the size of our instance of \gGASP\ is bounded by $O(n^2+m^2)$.

\emph{Correctness.}
We will now argue that the graph $G$ contains a clique of size $k$ if and only if there exists a Nash stable assignment
for our instance of \gGASP. 

Suppose that $G$ contains a clique $S$ of size $k$. We construct an assignment $\pi$ as follows. 
We assign the activity $c$ to all vertex players who do not belong to $S$. The size of this coalition is $n-k$. 
We then assign the vertex activities to the rest of vertex players and its dummies:
We take some bijection $\eta$ between $S$ and $[k]$, and for each $v\in V$
we form a coalition of size $\alpha(v)$ that engages in $a_{\eta(v)}$: this coalition consists
of $v$, all players in $\Dummy(v)$, and all edge players $e_{vu}$ such that $u\not\in S$. 
Lastly, we assign the edge activities to the edge players both of whose endpoints are in $S$: we establish a bijection $\xi$ between the edge set $\{\,e=\{u,v\}\in E\mid u,v \in S\,\}$ 
and $[k(k-1)/2]$, and assign the activity $b_{\xi(e)}$ 
to the edge players $e_{uv}$ and $e_{vu}$ as well as to all players in $\Dummy(e)$. 
All other players, namely the stalker $s$ as well as the remaining edge players and dummy players, are assigned the void activity.
We will now argue that the resulting assignment $\pi$ is Nash stable. 

%assigned vertex player
Clearly, no player assigned to an activity $a_i$, $b_j$, or $c$
wishes to deviate. 
Now, consider a dummy of some vertex $v$ that is assigned to the void activity.
By construction, we have $\pi(v)=c$. The dummy only wants to join a coalition if there is a coalition which engages in an activity $a_i$
and whose size is $\alpha(v)-1$ or $\alpha(v)$; however, no such coalition
exists. 
Similarly, consider an edge player $e_{vu}$ with $\pi(e_{vu})=a_\emptyset$.
We have $v\not\in S$, and therefore $e_{vu}$ does not want to join
any of the existing coalitions because they have the wrong sizes; the same argument applies to all the dummies of~{$e$}.
Further, the stalker $s$ does not want to deviate since there is no coalition of size 
$1$ that engages in an activity $d$. 
Hence, $\pi$ is Nash stable. 

Conversely, suppose that there exists a Nash stable feasible assignment $\pi$. 
By individual rationality of $\pi$, every player engages either in the void activity or in an approved alternative.
Notice that if a vertex player $v$ were assigned to the activity $d$, the stalker $s$ would have an NS-deviation to $d$; thus, no player is assigned to $d$. If a vertex player $v$ were assigned to the void activity, then $v$ would have an NS-deviation to activity $d$.
Thus, $\pi$ cannot allocate vertex players to either $d$ or the void activity.
This means that $\pi$ must allocate the activity $c$ to $n-k$ vertex players, and allocate the $k$ vertex activities $a_i$ to the remaining $k$ vertex players $v$, each together with $\alpha(v)-1$ other players.
%%%%%
Now, let $S=\{v\in V\mid \pi(v) = a_i\text{ for some }i \in [k]\}$.
By construction, $|S|=k$. We will show that $S$ is a clique. That is, we prove the following claim. 

\begin{claim}\label{claim:clique}
For any $v \in S$, $v$ is adjacent to every other vertex in $S$, i.e., there are $k-1$ edge players $e_{vu}$ with $u \in S$.
\end{claim}
\begin{proof}
Consider a player $v\in S$, and let $\pi(v)=a_i$. Observe that by individual rationality $|\pi^{a_i}|=\alpha(v)$; thus, all players in $\Dummy(v)$ are assigned to $a_i$ since otherwise they could deviate to $a_i$.
The only other players who approve $(a_i, \alpha(v))$ are edge players $e_{vu}$.
Thus, $\delta-k+1$ such players must be assigned to~{$a_i$}, and the remaining 
$k-1$ of these players must be assigned to some activity $b_j$,
since otherwise they would deviate to~{$a_i$}. 
It remains to show that the $k-1$ edge players assigned to some $b_j$ are incident to vertices in $S$. For each $v\in S$, consider the set of players 
$E_{\pi, v} = \{e_{vu} \mid v\in S,  \pi(e_{vu})=b_j~\mbox{for some}~j \in [k(k-1)/2]\}$. We have argued that $|E_{\pi, v}| =k-1$ for each $v$.
Thus, $|\bigcup_{v\in S}E_{\pi, v}|=k(k-1)$. 

Now, let $E_{\pi}$ be the set of all edge players who are assigned to edge activities $b_j$. Clearly $\bigcup_{v\in S}E_{\pi, v} \subseteq E_{\pi}$. Note that by individual rationality, each edge activity $b_j$ can be assigned to at most two edge players simultaneously (namely, to the two edge players $e_{uv}, e_{vu}$ corresponding to edge $e$). 
Since there are $k(k-1)/2$ edge activities, it follows that $|E_{\pi}| \le k(k-1)$. Therefore, we must have
\begin{equation}\label{eq}
\bigcup_{v\in S}E_{\pi, v}=E_{\pi}. 
\end{equation}

Now, consider an edge player 
$e_{vu}$ with $v \in S$ and $\pi(e_{vu})=b_j$ for some $j \in [k(k-1)/2]$.
By individual rationality
we have $|\pi^{b_j}|=\beta(\{u, v\})$ and hence $\pi^{b_j}$ consists of
$e_{vu}$, $e_{uv}$ and all dummies of the edge $\{u, v\}$. 
By \eqref{eq}, we have $e_{uv} \in \bigcup_{v\in S}E_{\pi, v} \subseteq E_{\pi}$, and so we see that $u \in S$.
\end{proof}
Hence, we conclude that $S$ is a clique of size $k$.
\end{proof}

We note that if players have approval preferences, individually and core stable outcomes can be computed in polynomial time for any network structure: one can repeatedly find activity $a$ and a maximal unanimous coalition $S$ all of whose members approve $a$, assign $S$ to $a$, and remove the activity a and the players in $S$ from consideration. However, if we allow full preference orders, deciding the existence of an individually stable outcome is W[1]-hard with respect to the number of activities for standard \GASP\, i.e., the graph is a clique. The proof can be found in the appendix.

\begin{theorem}\label{thm:W1:activities:clique:IS}
The problem of determining whether an instance of \gGASP\ 
admits an individually stable assignment 
is {\em W[1]}-hard with respect to the number of activities,
even if the underlying graph $G=(N, L)$ is a clique.
\end{theorem}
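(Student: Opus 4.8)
The plan is to mirror the proof of Theorem~\ref{thm:W1:activities:clique:NS}: give a parameterized reduction from \textsc{Clique} on regular graphs, keeping almost the entire construction intact. We reuse the same pool of activities ($k$ vertex activities $a_i$, the $k(k-1)/2$ edge activities $b_j$, the activity $c$, plus a constant number of auxiliary activities) and the same players (vertex players, edge players, and dummy vertex/edge players whose counts encode the coalition-size targets $\alpha(v)$ and $\beta(e)$, chosen pairwise far apart). Crucially, the part of the \textsc{NS} argument that does the real combinatorial work — that individual rationality plus the distinctness of the $\alpha(v)$ and $\beta(e)$ forces every admissible assignment to \emph{lock} each active vertex activity onto a coalition $\{v\}\cup\Dummy(v)\cup\{\delta-k+1\text{ incident edge players}\}$ and each active edge activity onto $\{e_{uv},e_{vu}\}\cup\Dummy(e)$, so that the vertex players placed on vertex activities induce a $k$-clique — uses nothing about the stability notion beyond individual rationality, and so carries over to individual stability verbatim.

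The one genuinely stability-specific ingredient of the \textsc{NS} reduction is the stalker-and-$d$ gadget whose sole role is to force every vertex player onto a vertex activity or onto $c$. Since individual stability is polynomial-time solvable for approval preferences (as noted just before the theorem), this gadget must be redesigned using full preference orders. The plan is: (i) equip each vertex player with a strict preference order topped by the vertex activities and $c$ and ending with a private activity $d$, so that a vertex player currently on the void activity always has an IS-deviation — into a currently empty activity, which requires no one's consent; and (ii) attach a constant-size gadget with \emph{no} individually stable outcome in isolation — a clique-realizable variant of the three-player instance of Example~\ref{ex:IS:empty}, padded with its own private activities and dummy players — wired so that it \emph{can} be stabilized precisely when exactly $k$ vertex players sit on the $k$ vertex activities and the remaining $n-k$ sit on $c$, and so that any departure from this pattern leaves the gadget without an individually stable configuration (equivalently, yields an IS-deviation inside the gadget).

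The correctness proof then splits as usual. Forward direction: from a $k$-clique, build the intended assignment and check that no vertex, edge, dummy, or gadget player has an IS-deviation; relative to the \textsc{NS} case this additionally requires verifying, for each candidate target coalition, that its incumbents strictly prefer the enlarged coalition — a handful of extra inequalities to install in each preference list. Backward direction: from an individually stable assignment, first use the gadget to conclude the vertex-activity/$c$ split of the vertex players, then run the locking argument to read off a $k$-clique. The parameter bookkeeping is unchanged: the number of activities is $O(k^2)$ and independent of $n$, so this is a genuine parameterized reduction and the instance size stays polynomial.

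The main obstacle is the gadget. A lone stalker no longer works for individual stability, since its deviation onto an occupied activity would need the incumbent's consent; worse, the canonical empty-\emph{IS} instance of Example~\ref{ex:IS:empty} relies on two of its players being non-adjacent, which is impossible on a clique. So the technical heart of the proof is to engineer a clique-realizable gadget that simultaneously (a) admits no individually stable outcome on its own, (b) becomes stabilizable exactly under the intended vertex/$c$ configuration and otherwise still fails to stabilize, and (c) is inert with respect to the locking argument — no gadget player ever covets a vertex or edge activity, and no vertex, edge, or dummy player ever covets a gadget activity. Reconciling these three requirements, and then re-checking every consent condition in both directions, is where the bulk of the effort goes; everything else is a transcription of the \textsc{NS} proof.
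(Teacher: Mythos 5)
There is a genuine gap, and it sits exactly where you claim the argument transfers for free. The ``locking'' argument of Claim~\ref{claim:clique} is \emph{not} a pure individual-rationality argument: it repeatedly uses deviations (each dummy in $\Dummy(v)$ not assigned to $a_i$ ``could deviate to $a_i$'', and each leftover edge player $e_{vu}$ must be on some $b_j$ ``since otherwise they would deviate to $a_i$''). Under individual stability these deviations need the consent of everyone already in $\pi^{a_i}$, and with the NS construction's preferences the vertex player $v$ approves \emph{only} $(a_i,\alpha(v))$, so she vetoes any enlargement to size $\alpha(v)+1$; the coalition $\{v\}\cup\Dummy'\cup(\text{edge players})$ of size $\alpha(v)$ with some dummies and edge players parked on the void activity can then be individually stable, and the clique cannot be read off. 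The paper's actual proof has to repair precisely this: it widens the approval sets to whole intervals $(a^{(1)}_i,s)$, $s\in[\alpha(v),\alpha(v)+k]$ (and spaces the $\alpha$-values $k+3$ apart), so that consent inside a vertex coalition is automatic, and then pins the coalition size to exactly $\alpha(v)$ by a stabilizer $g$ who approves the interior sizes $[\alpha(v)+2,\alpha(v)+k]$ and would have an IS-deviation to any coalition of intermediate size, together with a counting argument excluding size $\alpha(v)+k$. None of this is in your proposal, and ``carries over verbatim'' is false.

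The second gap is that your stability-specific gadget --- the one piece you identify as the technical heart --- is never constructed; you only list requirements (a)--(c) it should satisfy. Your premise for why a new gadget is needed is also mistaken: the empty-IS instance of Example~\ref{ex:IS:empty} does \emph{not} stop working on a clique, because the only additional coalitions the clique makes feasible (those pairing the two ``outer'' players) are not individually rational, so the paper reuses those three preference lists verbatim on the complete graph. Moreover, the paper's forcing mechanism is different from the one you sketch: instead of herding all vertex players onto vertex activities or onto $c$ (the NS route, with activity $c$ and a lone stalker), it drops $c$ entirely, attaches one copy of the empty-IS gadget $N_i$ to each vertex activity $a^{(1)}_i$ (forcing that activity to be used outside the gadget), and one copy $N_g$ topped by $(d,4)$ to anchor the stabilizer $g$ on $d$; vertex players not selected simply remain on the void activity. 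So both the size-pinning machinery and the gadget wiring that your reduction would need are missing, and the steps you do spell out would fail as stated.
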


On the positive side, for \gGASP s on cliques, a Nash stable assignment
turns out be computable in polynomial time if the number of activities is a constant. However, it is open if this result
can be extended to general \gGASP s. 

\begin{theorem}\label{thm:XP:NS}
There exist an algorithm that, given an instance $(N, A, (\succeq_i)_{i\in N}, L)$
of \gGASP\ with $|N|=n$, $|A|=p+1$ such that $(N, L)$ is a clique, 
determines whether it admits a Nash stable 
assignment in time $(n+1)^{p+1}O(np(n+p))$.
\end{theorem}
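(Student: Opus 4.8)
The plan is to exploit two facts: on a clique every coalition is feasible, so every NS-deviation is feasible; and preferences are anonymous, so the only feature of an assignment $\pi$ that matters for individual rationality and for the absence of NS-deviations is its \emph{size profile} $(n_1,\dots,n_p)$, where $n_j=|\pi^{a_j}|$, together with which activity each player occupies — never the identities of that player's group-mates. Accordingly, I would enumerate all candidate profiles: one for each tuple $(n_0,n_1,\dots,n_p)\in\{0,\dots,n\}^{p+1}$ with $\sum_{j=0}^{p}n_j=n$ (there are fewer than $(n+1)^{p+1}$ of these), where $n_0$ is the number of players doing nothing. For each profile, I would decide by a single bipartite matching / flow computation whether some individually rational assignment realizing that profile is Nash stable, and accept iff some profile succeeds.

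Fix a profile $\vec n=(n_0,\dots,n_p)$. Call a pair $(i,a_j)$ with $j\in[p]$ \emph{admissible for $\vec n$} if placing player $i$ into a size-$n_j$ group doing $a_j$ violates neither individual rationality nor Nash stability for $i$, i.e.\ $(a_j,n_j)\succeq_i(a_\emptyset,1)$ and $(a_j,n_j)\succeq_i(a_{j'},n_{j'}+1)$ for every $j'\in[p]\setminus\{j\}$; call $(i,a_\emptyset)$ admissible if $(a_\emptyset,1)\succeq_i(a_{j'},n_{j'}+1)$ for every $j'\in[p]$. The crucial point is that, by anonymity, admissibility depends only on $i$, on the destination activity, and on $\vec n$, so all admissibility bits can be precomputed for a given $\vec n$ in $O(np^2)$ time. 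I would then set up a flow network with a source, a sink, a node for each player with a unit-capacity source arc, a node for each $a_j$ with a capacity-$n_j$ arc to the sink, a ``void'' node with a capacity-$n_0$ arc to the sink, and a unit-capacity arc from player $i$ to the node of $a_j$ (resp.\ to the void node) exactly when $(i,a_j)$ (resp.\ $(i,a_\emptyset)$) is admissible. Since the arcs entering the sink have total capacity $\sum_{j=0}^{p}n_j=n$, a maximum flow has value $n$ iff there is an integral assignment of players to activities and to the void activity that respects admissibility and places \emph{exactly} $n_j$ players at each $a_j$; an integral maximum flow — computed by at most $n$ augmenting-path steps on a network with $O(n+p)$ nodes and $O(np)$ arcs, in total time $O(n^2p)$ — then yields the assignment.

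For correctness I would argue both directions. Any assignment the procedure outputs is individually rational, has no NS-deviation to a non-void activity (by admissibility), has no NS-deviation from a non-void activity to $a_\emptyset$ (that would contradict individual rationality), and no ``deviation'' from $a_\emptyset$ to itself is meaningful; hence it is Nash stable. Conversely, any Nash stable feasible assignment $\pi$ has some size profile $\vec n$; every player--destination pair it uses is admissible for $\vec n$, and it places exactly $n_j$ players at each $a_j$; so the flow network built for $\vec n$ admits a flow of value $n$ and the procedure accepts. Summing over the fewer than $(n+1)^{p+1}$ profiles, each handled in $O(np^2)$ time for the admissibility table plus $O(n^2p)$ time for the flow, gives the claimed running time $(n+1)^{p+1}\,O(np(n+p))$.

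I do not expect a serious obstacle; the one thing to get right is the structural observation that makes the inner loop cheap. Because the network is complete and preferences are anonymous, ``is player $i$ content at activity $a_j$ and untempted to move'' is a predicate of $i$, $a_j$, and the global size profile alone, so once the profile is fixed the remaining question — which player occupies which slot — decouples into a degree-constrained bipartite matching problem rather than an exponential search over concrete assignments. The place where a small slip would be easiest is checking that the admissibility predicates genuinely capture \emph{all} possible NS-deviations: in particular, a deviation to a currently-empty activity $a_j$ offers the alternative $(a_j,1)$, which must be (and is) covered by the $n_j=0$ instance of the quantifier ``for every $j'\in[p]$'' above.
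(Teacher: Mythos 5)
Your proposal is correct and matches the paper's own proof essentially step for step: both enumerate the at most $(n+1)^{p+1}$ size profiles and, for each fixed profile, reduce the question of who occupies which slot to a unit-capacity flow problem whose player-to-activity arcs encode exactly the individual-rationality and no-NS-deviation conditions (including deviations to currently empty activities), accepting iff some profile admits a flow of value $n$. The running-time accounting is also the same, so nothing further is needed.
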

\begin{proof}
For every mapping $f:A \rightarrow [0, n]$ where $\sum_{a \in A}f(a)=n$, we will check if there is a Nash stable assignment $\pi$ such that 
$|\pi^a|=f(a)$ for each $a \in A^*$ and $|\{\, i \in N \mid \pi(i)=a_{\emptyset} \,\} |=f(a_{\emptyset})$. There are at most $(n+1)^{p+1}$ such mappings; hence, it remains to show that each 
check will take at most $\poly(n)$ steps.

Fix a mapping $f:A \rightarrow [0, n]$ with $\sum_{a \in A}f(a)=n$. We construct an instance of the network flow problem as follows. We introduce a source $s$, a sink $t$, a node $i$ for each player $i \in N$, and a node $a$ for each activity $a \in A$. 
We create an arc with unit capacity from the source $s$ to each player, and an arc with capacity $f(a)$ from node $a \in A$ 
to the sink $t$. Then, for each $i\in N$
we create an arc of unit capacity from player $i$ to an activity $a \in A$ if and only if 
\begin{itemize}
\item $a \neq a_{\emptyset}$, and $i$ weakly prefers $(a,f(a))$ to $(a_\emptyset, 1)$ and to all pairs of the form $(b, f(b)+1)$, where $b\in A^* \setminus \{a\}$; or 
\item $a = a_{\emptyset}$, and $i$ weakly prefers $(a_{\emptyset},1)$ to all pairs of the form $(b, f(b)+1)$, where $b\in A^*$.
\end{itemize}
It can be easily verified that an integral flow of size $n$ in this network corresponds to a 
Nash stable assignment where exactly $f(a)$ players are engaged in each activity $a \in A$. It remains to note 
that one can check in $O(|V||E|)$ time whether a given network $(V,E)$ admits a flow of a given size \citep{Orlin2013}.
\end{proof}

%The results for individual stability presented so far indicate that from the complexity perspective it is very similar to Nash stability. However, it is not clear if the XP algorithm presented in Theorem~\ref{thm:XP:NS} extends to individual stability. The difficulty is that, to determine whether an agent $i$ has an IS deviation to an activity $a$, it is not sufficient to know how many players engage in $a$: knowing their preferences is important to decide whether $i$'s deviation will be vetoed by one of the players currently assigned to $a$.

A similar argument applies to individual stability: finding an individually stable assignment can be done in polynomial time if the number of activities is a constant.
\begin{theorem}\label{thm:XP:IS}
There exist an algorithm that, given an instance $(N, A, (\succeq_i)_{i\in N}, L)$
of \gGASP\ with $|N|=n$, $|A|=p+1$ such that $(N, L)$ is a clique, 
determines whether it admits an individually stable 
assignment in time $2^p(n+1)^{p+1}O(np(n+p))$.
\end{theorem}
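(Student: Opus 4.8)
The plan is to adapt the network-flow argument from the proof of Theorem~\ref{thm:XP:NS}. As there, we enumerate all mappings $f:A\to[0,n]$ with $\sum_{a\in A}f(a)=n$ — there are at most $(n+1)^{p+1}$ of them — and for each $f$ look for an individually stable assignment $\pi$ with $|\pi^a|=f(a)$ for every $a\in A$ (writing $f(a_\emptyset)$ for the number of solitary players). The new ingredient is that, unlike for Nash stability, whether a player may IS-deviate into a group depends on whether every current member of that group would accept a newcomer. So for each non-void activity $a\in A^*$ we additionally guess a bit $o(a)\in\{\textsf{open},\textsf{closed}\}$: intuitively $o(a)=\textsf{open}$ means the group running $a$ would admit a newcomer (so we must forbid anybody from wanting to join it), while $o(a)=\textsf{closed}$ means the group contains at least one member who would veto a newcomer (so nobody can IS-deviate into $a$, but we must then guarantee that such a vetoing member is actually present). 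This gives $2^p$ guesses on top of the $(n+1)^{p+1}$ choices of $f$, which accounts for the extra factor in the running time.

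Fix $f$ and $o$. For a player $i$ and an activity $a$ call the pair $(i,a)$ \emph{admissible} if assigning $i$ to $a$ is consistent with $f$, with $o$, and with the absence of IS-deviations originating at $i$: for $a\in A^*$ this requires $(a,f(a))\succeq_i(a_\emptyset,1)$ together with $(a,f(a))\succeq_i(b,f(b)+1)$ for every $\textsf{open}$ $b\in A^*\setminus\{a\}$, and for $a=a_\emptyset$ it requires $(a_\emptyset,1)\succeq_i(b,f(b)+1)$ for every $\textsf{open}$ $b\in A^*$. Call $(i,a)$ a \emph{veto pair} if moreover $a$ is $\textsf{closed}$ and $(a,f(a))\succ_i(a,f(a)+1)$. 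Build a flow network with a source $s$, a sink $t$, one node per player with a unit-capacity arc from $s$, and, for each activity, arcs into $t$ of total capacity $f(a)$: a single node of capacity $f(a)$ for $a=a_\emptyset$ and for every $\textsf{open}$ $a$, but for a $\textsf{closed}$ activity $a$ with $f(a)\ge 1$ two nodes $a^{\mathrm{veto}}$ (capacity $1$) and $a^{\mathrm{reg}}$ (capacity $f(a)-1$), so that in a saturating flow exactly one unit is forced through $a^{\mathrm{veto}}$. Add a unit-capacity arc from player $i$ to $a$ (resp.\ to $a^{\mathrm{reg}}$) whenever $(i,a)$ is admissible, and to $a^{\mathrm{veto}}$ whenever $(i,a)$ is a veto pair; a $\textsf{closed}$ activity with $f(a)=0$ admits no valid flow and is simply discarded. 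Since the total capacity out of $s$ and into $t$ both equal $n$, an integral flow of value $n$ corresponds exactly to an assignment $\pi$ in which each activity $a$ is used by precisely $f(a)$ players $i$ with $(i,a)$ admissible and each $\textsf{closed}$ activity in use has at least one member forming a veto pair. One checks directly from the definitions that every such $\pi$ is individually rational and admits no IS-deviation (a would-be deviation into an $\textsf{open}$ activity is ruled out by admissibility of the deviator's current pair, one into a $\textsf{closed}$ activity by its veto member, and one into the void activity by the individual-rationality clause). Conversely, given any individually stable $\pi^*$, reading $f$ off its group sizes and setting $o(a)=\textsf{open}$ exactly when every member of $\pi^{*a}$ accepts a newcomer, one verifies that $\pi^*$ makes every used pair admissible and supplies a veto member for each $\textsf{closed}$ activity, hence arises from such a flow.

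For the running time, the flow network has $O(n+p)$ nodes and $O(np)$ arcs, each arc being decidable in $O(p)$ preference comparisons, so a maximum flow can be computed in $O((n+p)\cdot np)$ time \citep{Orlin2013}; multiplying by the $2^p(n+1)^{p+1}$ outer iterations yields the claimed bound $2^p(n+1)^{p+1}\,O(np(n+p))$. The only real subtlety, with no analogue in the Nash case, is the handling of $\textsf{closed}$ activities: we cannot afford to guess \emph{which} player vetoes (that would cost a factor of $n^p$), so instead we let the flow itself select the veto member and enforce its existence through the capacity split $f(a)=1+(f(a)-1)$ rather than through an explicit arc lower bound.
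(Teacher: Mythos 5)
Your proposal is correct and follows essentially the same route as the paper: guess the group-size vector $f$ plus one ``open/closed'' bit per activity, and for each guess build a flow network in which every closed activity is split into a unit-capacity veto node and a node of capacity $f(a)-1$, exactly as the paper does with its nodes $x(a)$ and $a$. The admissibility conditions, the correspondence between value-$n$ integral flows and individually stable assignments, and the running-time accounting all match the paper's argument.
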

\begin{proof}
For every mapping $f:A \rightarrow [0, n]$ and $g:A^* \rightarrow \{0, 1\}$, we will check if there is an individually stable assignment such that 
\begin{itemize}
\item[$(${\rm i}$)$] for each $a \in A^*$, $|\pi^a|=f(a)$, and $|\{\, i \in N \mid \pi(i)=a_{\emptyset} \,\} |=f(a_{\emptyset})$;
\item[$(${\rm ii}$)$] for each $a \in A^*$ with $g(a)=1$, $\pi^a$ contains a player $i$ who does not want to increase the size of a group, that is, who strictly prefers $(a, f(a))$ to $(a, f(a) + 1)$; and
\item[$(${\rm iii}$)$] for each $a \in A^*$ with $g(a)=0$, no player wants to deviate to $\pi^a$.
\end{itemize}
There are at most $2^p(n+1)^{p+1} $ such combinations of mappings. We will show that each check will take at most $\poly(n)$ steps.

Fix a mapping $f:A^* \rightarrow [0, n]$ and $g:A^* \rightarrow \{0, 1\}$.  
Let $A_0=\{\, a \in A^* \mid g(a)=0\,\}$ and $A_1=\{\, a \in A^* \mid g(a)=1\,\}$.

We construct an instance of the network flow problem as follows. We introduce a source~$s$, a sink $t$, a node $i$ for each player $i \in N$, a node $a$ for each $a \in A_0 \cup \{a_{\emptyset}\}$, and two nodes $a$ and $x(a)$ for each activity $a \in A_1$. 
We create an arc with unit capacity from the source $s$ to each player, and an arc with capacity $f(a)$ from node $a \in A_0\cup \{a_{\emptyset}\}$ to the sink $t$. For each $a \in A_1$ with $f(a) \geq 1$, we create an arc with capacity $f(a)-1$ from node $a$ to the sink $t$, and an arc with unit capacity from node $x(a)$ to the sink $t$. 
Then, for each $i\in N$ 
\begin{itemize}
\item we create an arc of unit capacity from player $i$ to a node $a \in A^*$ if and only if $i$ weakly prefers $(a,f(a))$ to $(a_\emptyset, 1)$ and to all pairs of the form $(b, f(b)+1)$, where $b\in A_0 \setminus \{a\}$; 
\item we create an arc of unit capacity from player $i$ to a node $x(a)$ where $a \in A_1$ if and only if $i$ strictly prefers $(a,f(a))$ to $(a,f(a)+1)$, and $i$ weakly prefers $(a,f(a))$ to $(a_\emptyset, 1)$ and to all pairs of the form $(b, f(b)+1)$, where $b\in A_0 \setminus \{a\}$;
\item finally, we create an arc of unit capacity from player $i$ to the void activity $a_{\emptyset}$ if and only if $i$ weakly prefers $(a_{\emptyset},1)$ to all pairs of the form $(b, f(b)+1)$, where $b\in A_0$.
\end{itemize}

It can be easily verified that an integral flow of size $n$ in this network corresponds to an individually stable assignment satisfying the conditions $(${\rm i}$)$ - $(${\rm iii}$)$. Again, one can check in polynomial time whether a given network admits a flow of a given size \citep{Orlin2013}, and hence our theorem follows.
\end{proof}

For core stability, Theorem~\ref{thm:NP:star:core} can be extended from stars to cliques;
however, our proof for cliques relies on having at least four non-void activities.
It remains an interesting open problem whether core stable outcomes of \gGASP s on cliques
can be found efficiently if the number of activities does not exceed $3$.
We conjecture that the answer is `no', i.e., the problem of computing core stable outcomes
remains NP-hard for $|A^*|=2, 3$. The proof for the theorem below can be found in the appendix.

\begin{theorem}\label{thm:NP:clique:core}
It is {\em NP}-complete to determine whether an instance of \gGASP\ admits a core stable assignment even if the underlying 
graph is a clique and the number of activities is $4$.
\end{theorem}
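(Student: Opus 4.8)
The plan is to reduce from \textsc{Exact Cover by $3$-Sets} (X3C), reusing the skeleton of the proof of Theorem~\ref{thm:NP:star:core} but redesigning the gadgets so that they survive the loss of the star structure; membership in NP is immediate from Proposition~\ref{prop:in-core}. Given an X3C instance $(V,\calS)$ with $|V|=3k$, I would introduce a small \emph{control gadget} consisting of a constant number of players and using the two activities \emph{beyond} the pair $a,b$ that carried the star reduction (this is where the extra activities enter); one \emph{set player} $S_j$ for each $S_j\in\calS$; and, for each element $v_i\in V$, a set $\Dummy(v_i)$ of dummy players whose cardinality is chosen so that $\Dummy(v_i)$ together with the players of $\calS(v_i)=\{S_j : v_i\in S_j\}$ and the fixed control players has size exactly $\beta(v_i)$, where $\beta(v_1)<\beta(v_2)<\dots<\beta(v_{3k})$ are pairwise distinct target sizes, all larger than $3k$ and spaced at least two apart. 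Activity $a$ plays the role of the ``good'' activity: in the intended outcome it is assigned to a template coalition consisting of the control players together with $k$ set players that form an exact cover; activity $b$ is the ``blocking'' activity, and the only coalitions that can profitably move to $b$ are the control players together with $\calS(v_i)\cup\Dummy(v_i)$ for some $v_i$, which blocks precisely when $v_i$ is uncovered.

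The crucial point is how to compensate for the fact that on a clique every subset of players is connected, so feasibility no longer rules out unintended coalitions. I would enforce all structural constraints through coalition sizes and approvals alone: the dummies of $v_i$ approve only $(b,\beta(v_i))$, and the set player $S_j$ approves $(b,\ell)$ only for $\ell\in\{\beta(v_i):v_i\in S_j\}$ besides the single ``good'' alternative $(a,k+1)$, so that, because the $\beta(v_i)$ are distinct and widely spaced, any coalition whose size matches an approved $b$-size must consist of exactly the dummies of one element $v_i$, all of $\calS(v_i)$, and the fixed control players --- no mixing of different elements' gadgets is size-feasible, and the same size/approval discipline keeps the only $a$-coalition a set of size $k+1$ drawn from the control players and the set players. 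The control gadget is built so that, restricted to its own players and to small coalition sizes, it reproduces an empty-core instance in the spirit of Example~\ref{ex:core:empty}, which forces activity $a$ to be used by exactly $k$ set players in any core stable outcome; the two additional activities are what give the control players enough ``private'' alternatives for this empty-core behaviour to remain robust against them being absorbed into a larger $a$- or $b$-coalition.

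For the forward direction, given an exact cover $\calS'$, I would assign $a$ to the control players plus the $k$ players of $\calS'$, give the control players their designated private options, and send everyone else to $a_\emptyset$; core stability follows because every $v_i$ is covered by some $S_j\in\calS'$ with $\pi(S_j)=a$, and such an $S_j$ strictly prefers its current alternative $(a,k+1)$ to $(b,\beta(v_i))$, so it refuses to join the unique candidate $b$-blocking coalition for $v_i$, while a size-and-approval check shows no other coalition is unanimously improving. For the converse, a core stable assignment must, by the empty-core control gadget, assign $a$ to the control players and exactly $k$ set players; letting $\calS'$ be those sets, if $\calS'$ missed some $v_i$ then all of $\calS(v_i)\cup\Dummy(v_i)$ sit at $a_\emptyset$ and, together with the control players, deviate to $b$, a contradiction, so $\calS'$ is an exact cover.

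I expect the main obstacle to be twofold. First, designing the control gadget and the private alternatives so that it simultaneously (i) has no core-stable configuration on its own, forcing the $a$-template, and (ii) cannot be destabilised by its players joining set or dummy players in a coalition of matching size; this is exactly the place where two extra activities seem indispensable, and pushing the construction down to $|A^*|=2,3$ is the open problem the paper flags. Second, verifying that \emph{no} spurious blocking coalition exists in the ``exact cover'' case requires a fairly delicate case analysis over all four activities and all candidate coalition sizes, together with the arithmetic bookkeeping that the $\beta(v_i)$ and the dummy-set sizes can be chosen consistently (nonnegative, distinct, and above the relevant thresholds); this part is routine but is where most of the write-up will go.
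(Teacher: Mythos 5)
Your high-level plan (reduce from X3C, code everything through coalition sizes $\beta(v_i)$ and approvals, use an empty-core control gadget plus the two extra activities) is the right family of ideas, but the specific architecture you commit to is the one from the star proof of Theorem~\ref{thm:NP:star:core}, and the step you yourself flag as the main obstacle is exactly where that architecture breaks on a clique — and you do not resolve it. Concretely, in your design the cover sets are rewarded with $(a,k+1)$ and the $b$-blocking coalition for an uncovered $v_i$ is $\{\text{control players}\}\cup\calS(v_i)\cup\Dummy(v_i)$. On a star, connectivity forces the centre into every nontrivial coalition, which is what pins the $a$-coalition to ``one control player plus $k$ set players'' and lets the gadget's empty-core behaviour bite. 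On a clique nothing prevents the following assignment when no exact cover exists: put some $k+1$ set players (all of whom approve $(a,k+1)$) on $a$ with no control player, and absorb the $B$-approving control players into an actual coalition of size $\beta(v_1)$ on $b$ together with $\calS(v_1)\cup\Dummy(v_1)$. The set players on $a$ and the dummies of $v_1$ are at approved alternatives; blocking via $b$ at size $\beta(v_j)$, $j\neq 1$, fails because the needed control players are indifferent between $(b,\beta(v_1))$ and $(b,\beta(v_j))$; blocking via $a$ fails because a blocking coalition must contain $\pi^a$. Unless the control gadget is engineered to destabilise precisely this ``absorbed'' configuration — which is in tension with your own requirement (ii) that the gadget not be destabilised by its players joining set/dummy coalitions of matching size — the reduction produces spurious yes-instances, so the converse direction (``core stability forces the $a$-template with exactly $k$ set players'') is unsupported.

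The paper's actual clique proof avoids this by decoupling the gadget from the size-coded X3C part rather than strengthening your template. There, $a$ is approved \emph{only} by the three gadget players $x_1,x_2,x_3$, the set players get two fresh activities: $(c_2,k)$ is the reward for the $k$ cover sets and $(c_1,m-k)$ is a ``parking'' alternative for the remaining sets, and the blocking coalitions for an uncovered $v_i$ are just $\calS(v_i)\cup\Dummy(v_i)$ with activity $b$ at size $\beta(v_i)$ — no control players participate in $b$ at all, and $\Dummy(v_i)$ has size $\beta(v_i)-|\calS(v_i)|$ accordingly. The gadget has an empty core over $\{a,c_1\}$ and over $\{a,c_2\}$, so stability forces both $c_1$ and $c_2$ to be exported to the set players, and individual rationality then pins exactly $k$ sets on $c_2$ and the rest on $c_1$; a set on $c_1$ strictly prefers $(b,\beta(v_i))$, so an uncovered element still yields a blocking coalition, while a covering set on $(c_2,k)$ vetoes it. If you want to salvage your write-up, you need either this kind of decoupling or a concrete gadget that is provably unstable whenever its players sit in a $b$-coalition with set/dummy players; as sketched, that gap is the proof.
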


\section{Few Players}
\noindent
In the previous sections, the parameter that we focused on
was the number of activities~$p$.
Although we expect this parameter to be small in many realistic settings, there
are also situations where players can choose from a large variety of possible activities. 
Assume, for instance, that a small number of scientists are gathering at the university 
of a large city which offers plenty of options for social, cultural, or sports activities.
Usually, the number of options is much larger than the number of people attending.
It is, thus, natural to ask if stability-related problems for \gGASP{}
are tractable in the number of players~$n$ is small.
In this section, we give a negative answer by showing that even classical \GASP{}
remains W[1]-hard with respect to the number of players~$n$.

We first observe that for all stability concepts considered in this paper,
the problem of finding a stable feasible assignment is in XP with respect to $n$:
we can simply guess the activity of each player (there are at most~$(p+1)^n$ possible guesses) 
and check whether the resulting assignment is feasible and stable.
\begin{observation}\label{obs:XPplayers}
The problem of deciding whether a given instance
of \gGASP\ admits a core stable, Nash stable, or individually stable assignment
is in {\em XP} with respect to the number of players $n$. 
\end{observation}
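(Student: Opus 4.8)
The plan is to use the brute-force enumeration already sketched in the paragraph preceding the statement. First I would enumerate every mapping $\pi:N\to A$; since $|A|=p+1$ there are exactly $(p+1)^n$ of them, and this quantity is at most $|I|^{O(n)}$ because both $p$ and $n$ are bounded by the size of the input~$I$. For each candidate $\pi$ I would first test feasibility: for every player $i$ the set $\pi_i$ must induce a connected subgraph of $(N,L)$, which can be checked for all activity classes at once in time polynomial in $n$ (for instance by a depth-first search inside each set $\pi^a$). Assignments failing this test are discarded.

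For each feasible $\pi$ I would then verify the relevant stability notion in time polynomial in $n$ and~$p$. For core stability this is precisely Proposition~\ref{prop:in-core}, which decides in $O(pn^3)$ time whether a feasible assignment is core stable; it is worth emphasising that the polynomial bound here relies on that proposition (exploiting anonymity), since in many related models deciding core membership is coNP-hard. For Nash stability, I would scan the players' preferences to confirm individual rationality and then, for every player $i$ and every non-void activity $a\in A^*$, test whether $\pi^a\cup\{i\}$ is connected and $(a,|\pi^a|+1)\succ_i(\pi(i),|\pi_i|)$; the assignment is Nash stable iff no such improving feasible move exists. For individual stability, I would additionally require for each such candidate move that $(a,|\pi^a|+1)\succeq_j(a,|\pi^a|)$ for all $j\in\pi^a$. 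Each of these tests examines $O(np)$ candidate deviations, each evaluated in time polynomial in $n$.

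Multiplying the two factors gives a running time of $(p+1)^n\cdot\poly(n,p)$, which is $|I|^{O(n)}$, so each of the three existence problems is in XP with respect to~$n$. There is essentially no obstacle here: the only subtle point is that the per-assignment stability check must be polynomial, which for the core is guaranteed by Proposition~\ref{prop:in-core} and for Nash and individual stability follows from anonymity, since a deviating player need only consider joining one of the at most $p$ groups currently engaged in a non-void activity.
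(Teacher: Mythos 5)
Your proposal is correct and matches the paper's argument: the paper also simply enumerates all $(p+1)^n$ assignments and checks feasibility and stability of each in polynomial time (using Proposition~\ref{prop:in-core} for the core). You merely spell out the per-assignment checks in more detail than the paper does.
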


The following theorem shows that \gGASP\ is not fixed-parameter tractable with respect to~$n$ unless $\textup{FPT} = \textup{W[1]}$.
This is somewhat surprising, because an FPT algorithm with respect to~$n$ could afford 
to iterate through all possible partitions of the players into coalitions.
Thus, the computational difficulty must arise from deciding which group engages in which activity, rather than from deciding who belongs to which group.
Indeed, in the following hardness reduction, there is a \emph{unique} 
partition of players into coalitions that can potentially lead to a stable assignment;
thus, computational hardness comes solely from assigning known
coalitions of players to activities.

\begin{theorem}\label{thm:W1players:clique:core}
The problem of determining whether an instance of \gGASP\ whose underlying graph is a clique 
admits a core stable assignment is {\em W[1]}-hard
with respect to the number of players $n$. 
\end{theorem}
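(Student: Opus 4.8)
The plan is to give a parameterized reduction from \MCC{}, which is well known to be W[1]-hard when parameterized by the number $k$ of colour classes: given a graph $H=(W,F)$ whose vertex set is partitioned into $W_1,\dots,W_k$, we must decide whether $H$ contains a clique with exactly one vertex in each $W_i$. From such an instance I would build a \gGASP{} instance whose underlying graph is a clique, using only $O(k^2)$ players but polynomially many activities; since the resulting parameter value $n=O(k^2)$ is bounded by a function of $k$ alone and the construction is computable in polynomial time, this is a valid parameterized reduction and transfers W[1]-hardness to the parameter ``number of players''. (Note that nothing about membership has to be argued: W[1]-hardness needs only the forward construction.)

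Concretely, for each colour $i$ I would introduce a \emph{vertex player} $p_i$, for each colour pair $\{i,j\}$ a small constant number of \emph{edge players}, and a constant number of \emph{anchor players} forming a copy of the empty-core instance of Example~\ref{ex:core:empty} -- exactly as in the proof of Theorem~\ref{thm:NP:star:core}, where such an anchor is used to force the ``useful'' part of every core stable outcome. The activities would be a vertex activity $a_w$ for every $w\in W$, an edge activity $b_f$ for every $f\in F$, a constant number of anchor activities, and one \emph{consistency activity} for every incident vertex--edge pair. The preferences would be designed so that in \emph{every} core stable assignment: (i) the anchor can be stabilised only if the rest of the players sit in the ``productive'' configuration in which each $p_i$ is alone in some $a_{w_i}$ with $w_i\in W_i$ and the edge players of $\{i,j\}$ jointly engage in some edge activity $b_{f_{ij}}$ with $f_{ij}\in F$ between $W_i$ and $W_j$; and (ii) a strongly blocking coalition --- built from a consistency activity together with $p_i$, $p_j$ and the edge players of $\{i,j\}$ --- becomes available precisely when $f_{ij}$ is not the edge $\{w_i,w_j\}$. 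A core stable assignment would therefore exist iff one can choose $w_i\in W_i$ and edges $f_{ij}=\{w_i,w_j\}\in F$ for all $i<j$, i.e.\ iff $H$ has a multicoloured clique. This realises the phenomenon advertised above: the partition of players into coalitions is forced, so all the difficulty lies in deciding which activity each fixed coalition engages in.

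The forward direction should be routine: from a multicoloured clique $\{w_1,\dots,w_k\}$ one reads off the productive configuration plus a stabilising choice for the anchor and checks individual rationality. The hard part will be the reverse direction, specifically the claim that the productive configuration arising from a genuine clique admits \emph{no} strongly blocking coalition: one must enumerate the candidate blocking coalitions (an anchor-based deviation; a deviation to a vertex activity; a deviation to an edge activity; a deviation to a consistency activity; and mixtures of these) and show in each case that some member fails to strictly improve. The genuinely delicate point is calibrating the preference orders so that a consistency activity is strictly attractive to the relevant players exactly in the ``inconsistent'' state and not in the ``consistent'' one, while at the same time preventing players from \emph{hiding} inside a consistency or edge activity in a way that would manufacture a spurious core stable assignment; these strictness/indifference requirements tend to generate cyclic constraints on a single player's preference order, and the role of the anchor and of the paired edge-player gadgets is precisely to carry the forcing so that the vertex players' own preferences stay acyclic. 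Making all of this fit together is where the real work of the proof lies.
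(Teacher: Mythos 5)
Your submission is a reduction \emph{plan} rather than a proof, and the piece that is missing is exactly the load-bearing one: the ``consistency activity'' whose alternatives are strictly attractive to the relevant players precisely in inconsistent states. With anonymous preferences, whether a coalition strongly blocks depends only on its members' current alternatives, so for your single vertex player $p_i$ the trigger would have to be encoded in one fixed transitive order: for every vertex $w$ of colour $i$, the punishment alternatives tied to $w$ must lie strictly above $(a_w,1)$ yet strictly below $(a_{w'},1)$ for every $w'\neq w$. Taking any two vertices $w\neq w'$ this forces $(a_{w'},1)\succ(a_w,1)$ and $(a_w,1)\succ(a_{w'},1)$ simultaneously---the very ``cyclic constraint'' you flag at the end, which you do not resolve. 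The paper's device for breaking such cycles (two players per colour with oppositely interleaved orders, as in Theorem~\ref{thm:W1players:clique:NSIS}) only yields triggers of the \emph{positive} form ``the alternative's edge is incident to the jointly selected vertex''; your \MCC{}-based plan needs the \emph{negated} trigger ``the selected edge is not incident to the selected vertex'', and additionally, for a core blocking coalition (unlike an NS/IS deviation) every member must strictly improve at once, while in your productive configuration the edge players of $\{i,j\}$ are sitting at or near their top alternative and so cannot be recruited without redesigning their preferences---which reopens the same calibration problem. So the reverse direction, which you yourself identify as the hard part, is genuinely open in your write-up, and the most natural implementations of your gadget provably cannot work.

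The paper avoids this entire difficulty by reducing from \MIS{} rather than \MCC{}: there are no edge-selecting gadgets and no consistency activities at all. Each colour $i$ contributes two players $p_1^{(i)},p_2^{(i)}$ with oppositely interleaved preferences over vertex activities of colour $i$ and the incident edge activities (taken at size $5$), plus a stabilizer $g$ and a copy of the empty-core instance of Example~\ref{ex:core:empty} forced onto activity $d$ (as in Theorem~\ref{thm:NP:star:core}). An edge activity $e$ is approved, at size $5$, by exactly five players---$g$ and the two colour pairs of $e$'s endpoints---and is strictly preferred by a colour pair exactly when $e$ is incident to the vertex that pair currently selects. Hence a strongly blocking coalition exists if and only if two selected vertices are adjacent, so core stability (absence of blocking, a universal condition) matches independence (absence of edges) directly, with only $2h+4$ players. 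In short, the key idea you are missing is to flip the source problem so that blocking coalitions certify edges rather than inconsistencies; as it stands, your proposal has a genuine gap at its central gadget.
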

\begin{proof}
We describe a parameterized reduction from the W[1]-hard \MIS{} problem (see, e.g., Corollary 13.8 in the book of \citet{Cygan2015}).
Given an undirected graph~$G=(V,E)$, a positive integer~$h\in \mathbb{N}$,
and a vertex coloring~$\phi \colon V\to [h]$, \MIS{} asks whether~$G$ admits an $h$-colored independent set, that is,
a size-$h$ vertex subset~$Q\subseteq V$ such that no pair of distinct vertices in~$Q$ is adjacent
and the vertices in~$Q$ have $h$~pairwise distinct colors.
Without loss of generality, we assume that there are exactly $q$ vertices of each color for some $q\in\mathbb N$, 
and that there are no edges between vertices of the same color.

Let $(G,h,\phi)$ be an instance of \MIS{} with $G=(V, E)$.
For convenience,
we write $V^{(i)} = \{v_1^{(i)}, \ldots, v_q^{(i)}\}$ to denote the set of vertices of color~$i$
for every $i \in [h]$. 
We construct our \gGASP\ instance as follows.
We have one \emph{vertex activity}~$v$ for each vertex~$v \in V$,
one \emph{edge activity}~$e$ for each edge~$e \in E$, and
four other activities $a$, $b$, $c$, and $d$.

\emph{Idea.}
We will have one \emph{color gadget} $\colour(i)$ for each color~$i \in [h]$ and one empty core gadget~$N_g$.
For most of the possible assignments, the empty core gadget will be unstable, unless the following holds:
\begin{enumerate}
 \item For each color~$i \in [h]$, the players from the color gadget select
       a vertex of color~$i$ (by being assigned together to the corresponding vertex activity).
 \item For each pair of colors $\{i,j\}$, the corresponding selected vertices are not adjacent.
 \item The stabilizer player $g$ forms a coalition with the players in $N_g$. 
\end{enumerate}
If all three conditions hold, then the assignment encodes an $h$-colored independent set.

\emph{Construction details.}
The color gadget~$\colour(i)$, $i\in [h]$ consists of two players~$p_1^{(i)}$ and~$p_2^{(i)}$ with
the following preferences:
 \begin{align*}
p_1^{(i)}:&~  E(v^{(i)}_1) \times \{5\}  \succ (v_1^{(i)},2)  \succ E(v^{(i)}_2) \times \{5\}  \succ  (v_2^{(i)},2) \succ \cdots \\
& \cdots \succ E(v^{(i)}_q) \times \{5\}  \succ (v_q^{(i)},2)  \succ (a_{\emptyset},1) \text{,}\\
p_2^{(i)}:&~  E(v^{(i)}_q) \times \{5\} \succ (v_q^{(i)},2) \succ  E(v^{(i)}_{q-1})\times \{5\}  \succ (v_{q-1}^{(i)},2) \succ \cdots \\
& \cdots \succ E(v^{(i)}_{1}) \times \{5\}  \succ (v_1^{(i)},2) \succ (a_{\emptyset},1) \text{,}
\end{align*}
where $E(v)$ denotes the set of activities corresponding to edges incident to vertex~$v$.

%Preferences for empty IS instances
The stabilizer $g$ approves each alternative of the form $(e, 5)$ with $e \in E$ and is indifferent among them; 
she also approves $(d,4)$, but likes it less than all other approved alternatives:
\[
g: E \times \{5\} \succ (d,4) \succ (a_{\emptyset},1).
\]

Finally,  $N_g$ consists of three players $g_1$, $g_2$, and $g_3$ with the following preferences:
\begin{align*}
g_1:&~ (d,4) \succ (b,2) \succ (a,3) \succ (a_{\emptyset},1) \text{,}\\
g_2:&~ (d,4) \succ (a,2) \succ (b,2) \succ (a,3) \succ (a_{\emptyset},1) \text{,}\\
g_3:&~ (d,4) \succ (a,3) \succ (b,1) \succ (a,2) \succ (a_{\emptyset},1) \text{.}
\end{align*}
In a core stable assignment, the activity $d$ should be assigned to the players in $N_g$ together with the stabilizer $g$; otherwise, the assignment would not be stable as we have seen in Example \ref{ex:core:empty}. 

Together, there are $2h+4$ players, namely
\[  N = \bigcup_{i\in [h]} \colour(i) \cup \{g\} \cup N_g.  \]
We take the underlying social network to be a complete graph.

\emph{Correctness.}
We will now argue that the graph $G$ admits an $h$-colored independent set
if and only if our instance of \gGASP\ admits a core stable feasible assignment.

Suppose that there exists an $h$-colored independent set~$H$.
We will construct a core stable assignment $\pi$ as follows:
We assign the players of the color gadget~$\colour(i)$ to the activity corresponding to the vertex of color~$i$ from~$H$, and
we assign the activity $d$ to players in $N_g$ and the stabilizer $g$.
This assignment is core stable:
Clearly, no player from $N_g$ wishes to deviate since they are assigned to their top alternatives. Consider the players of color gadget~$\colour(i)$.
They cannot deviate to another vertex activity, since they have completely opposed preferences over vertex activities of size $2$.
Now suppose towards a contradiction that there is a coalition $S$ that blocks $\pi$ together with some edge activity $e=\{v^{(i)}_{\ell_i}, v^{(j)}_{\ell_j}\}$. Activity $e$ is approved by exactly 5 players, and only with size 5. Thus, $S$ must consist of exactly these 5 players, namely the stabilizer $g$ and the players in $\colour(i)$ and $\colour(j)$. Now we see that the edge $e$ must be incident to the vertex activity assigned to the two players in $\colour(i)$ (otherwise one of these players would prefer their current alternative to $(e,5)$) and similarly $e$ must be adjacent to the vertex activity assigned to the players in $\colour(j)$.
However, this means that there is a pair of adjacent vertices in $H$, contradicting the fact that $H$ is an independent set.
We conclude that $\pi$ is a core stable feasible assignment.

Conversely, suppose that there exists a core stable feasible assignment~$\pi$.
Then, by core stability, the stabilizer $g$ as well as the players in $N_g$ must be assigned to the activity $d$.
Now, consider some edge activity corresponding to some edge~$e=\{\smash{v^{(i)}_{\ell_i},v^{(j)}_{\ell_j}}\}$.
Recall that~$e$ is approved by exactly five players:
two player from color gadget~$\colour(i)$, two players from color gadget~$\colour(j)$, and the stabilizer~$g$.
However, all five players approve the edge activity only of size five and~$g$ does engage in activity~$d$ and not in~$e$.
Thus, for every~$i \in [h]$, the two players from $\colour(i)$ must be assigned to some vertex activity in $V^{(i)}$.
Now, let $H=\{\, v \mid \smash{\pi(p^{(i)}_1)}=v~\mbox{for some}~i \in [h]\,\}$.
We have argued that $|H|=h$. It remains to show that $H$ is an independent set.
Suppose towards a contradiction that there are vertices $\smash{v^{(i)}_{\ell_i}}$ and $\smash{v^{(j)}_{\ell_j}}$ in $H$ that are adjacent.
Notice that the players in $\colour(i)$ strictly prefers the edge activity $e=\{\smash{v^{(i)}_{\ell_i},v^{(j)}_{\ell_j}}\}$ with
size $5$ to their vertex activity with size $2$.
Similarly, the players in $\colour(j)$ as well as the stabilizer $g$ strictly prefer the edge activity $e$ with size $5$
to their alternatives.
Together with the activity $e$, these players can block $\pi$, a contradiction to the core stability of $\pi$.
\end{proof}

The proofs for Nash and individually stability are related to the above reduction,
but technically more involved and can be found in the appendix. 

\begin{theorem}\label{thm:W1players:clique:NSIS}
The problem of determining whether an instance of \gGASP\ whose underlying graph is a clique 
admits a Nash stable or individually stable assignment is {\em W[1]}-hard
with respect to the number of players $n$. 
\end{theorem}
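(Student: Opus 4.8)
\emph{Plan.} The natural route is to adapt the reduction from \MIS\ that proves Theorem~\ref{thm:W1players:clique:core}. Its skeleton should be reused: a \emph{color gadget} $\colour(i)=\{p_1^{(i)},p_2^{(i)}\}$ for each $i\in[h]$ whose two members have (essentially) opposed preferences over the vertex activities of color $i$, a stabilizer $g$, a ``no stable outcome'' gadget $N_g$, and the vertex activities, edge activities, and auxiliary activities $a,b,c,d$. The intended correspondence is unchanged: the color gadgets select a vertex of each color, $g$ is absorbed into $N_g$ on activity $d$, and a stable assignment exists if and only if the selected vertices form an $h$-colored independent set. The first thing to redo is the stabilizer: $N_g$ is no longer the empty-core gadget of Example~\ref{ex:core:empty} but a gadget that admits \emph{no Nash stable outcome in isolation} (a padded copy of the stalker game, Example~\ref{ex:NS:empty}), respectively \emph{no individually stable outcome in isolation} (a copy of Example~\ref{ex:IS:empty}), whose members all top-rank $(d,|N_g|+1)$ while $g$ approves $(d,|N_g|+1)$ below every other alternative she approves (the alternatives $(e,5)$ for all edges $e$). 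A routine case analysis should then show that in any Nash (resp.\ individually) stable feasible assignment, $g$ together with $N_g$ engages in $d$, that $g$ has no remaining deviation, and hence that no edge activity is used.

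\emph{Two repairs.} In the core proof a two-player coalition could block, which forced each $\colour(i)$ onto a vertex activity; a single deviator cannot conjure a two-player coalition, so I would add to $p_1^{(i)}$'s preference list the alternatives $(v_\ell^{(i)},1)$, placed below all the $(v_\ell^{(i)},2)$ but above $(a_\emptyset,1)$, so that idling is never stable for $p_1^{(i)}$; a short argument then shows $p_2^{(i)}$ must join her, so that in any stable outcome $\colour(i)$ occupies exactly one vertex activity of its color at size two, and one checks this addition creates no new stable outcomes in the forward direction. The real obstacle is the adjacency test: once $g$ is pinned to $d$, the five-player edge-blocking coalition $\{p_1^{(i)},p_2^{(i)},p_1^{(j)},p_2^{(j)},g\}$ that witnessed adjacency in the core proof can no longer be assembled by unilateral moves, so a literal translation of the reduction is \emph{too easily stabilized} --- adjacency would become invisible and every vertex selection would be Nash stable. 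To recover the test I would introduce, for each unordered pair of colors $\{i,j\}$, a constant number of \emph{conflict players}; since there are only $\binom h2$ pairs, the total number of players stays $O(h^2)$, a computable function of the \MIS\ parameter, which is what W[1]-hardness in $n$ requires. Their preferences over the edge activities between colors $i$ and $j$ would be designed so that whenever $\colour(i)$ and $\colour(j)$ have selected adjacent vertices $v_{\ell_i}^{(i)}\sim v_{\ell_j}^{(j)}$, the conflict players and parts of $\colour(i),\colour(j)$ fall into a stalker-style cycle of unilateral deviations with no fixed point (mimicking Example~\ref{ex:NS:empty}), so that no stable assignment exists; whereas if all selected pairs are non-adjacent the conflict players are content to stay on $a_\emptyset$ and cause no instability.

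\emph{Where the work is.} I expect the main difficulty to be exactly the design and verification of the conflict gadget: one must choose the coalition sizes and relative rankings so that (i) the deviation cycle genuinely has no fixed point when the conflict is active, (ii) the conflict players never destabilize an assignment encoding a true independent set, so that the forward direction still goes through, and (iii) the auxiliary alternatives added above --- the size-one vertex alternatives for $p_1^{(i)}$ and the conflict-player alternatives --- do not interact to create spurious stable or unstable configurations. For individual stability there is an additional layer, which is why that case is more intricate: individually stable assignments form a superset of Nash stable ones, so every step of the conflict cycle must be realized by an \emph{IS}-deviation, that is, one the receiving coalition would accept, and one must check that restricting to accepted deviations does not accidentally hand the cycle a fixed point; Example~\ref{ex:IS:empty} (which already has strict preferences and no individually stable outcome even on a path) is the natural template for this. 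With these pieces in place one concludes that $G$ has an $h$-colored independent set if and only if the constructed \gGASP\ instance on a clique admits a Nash stable (resp.\ individually stable) feasible assignment, the number of players is $2h+O(h^2)$, and the reduction is computed in polynomial time, establishing W[1]-hardness with respect to~$n$.
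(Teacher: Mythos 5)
Your proposal correctly reuses the overall scaffolding (clique network, per-color selection gadgets, a stabilizer $g$ pinned to $d$ by a gadget with no stable outcome in isolation), and you rightly identify the crux: the five-player blocking coalition that witnessed (non-)adjacency in the core reduction has no analogue under unilateral deviations. But at exactly that point the proposal stops: the ``conflict gadget'' is not constructed, only postulated, and there is a concrete reason to doubt it can be built the way you describe. Preferences in \gGASP\ are anonymous, so a deviating conflict player can react only to the multiset of (activity, size) pairs currently in use; a vertex activity of color $i$ occupied at size $2$ looks identical to every potential deviator whether or not the color-$j$ gadget happens to sit on the adjacent endpoint. Your trigger condition is a \emph{conjunction} (``both endpoints of edge $e$ are selected''), and a single NS- or IS-deviation cannot test a conjunction of two facts about two different coalitions. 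So the heart of the theorem --- turning the adjacency test into something visible to individual deviators --- is missing, and the added size-one vertex alternatives and probing conflict players would also need a nontrivial argument that they do not destabilize the intended assignment in the forward direction, which you only flag as ``work to be done''.

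The paper resolves this differently, and the difference is essential rather than cosmetic: it reduces from \MCC{} (not \MIS), so the target condition is that every color pair \emph{is} joined by an edge. For each pair $\{i,j\}$ it introduces a three-player colorpair gadget with cyclic (empty-IS-style) preferences whose only escape is that its first two members occupy \emph{some} edge activity between colors $i$ and $j$ at size $2$; the conjunction is then decomposed into two separate single-deviation incidence checks, encoded in the color gadget itself: $p_1^{(i)}$ ranks the alternatives $E(v^{(i)}_\ell)\times[3,5]$ interleaved below $(v^{(i)}_\ell,2)$ in increasing order of $\ell$, and $p_2^{(i)}$ in decreasing order, so that if the selected edge $\hat e_{i,j}$ is not incident to the selected vertex $v^{(i)}_{\ell_i}$, one of $p_1^{(i)},p_2^{(i)}$ has an NS/IS-deviation to $\hat e_{i,j}$ (joining the two colorpair players). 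The stabilizer approves edge alternatives at sizes $[4,5]$ precisely so that color-gadget players cannot themselves hide on an edge activity (that would create a coalition of size $3$ or $4$ and attract $g$ away from $d$). If you want to salvage your \MIS-based plan you would have to invent a gadget achieving conjunction detection by unilateral moves over anonymous preferences; switching to \MCC{} and adding edge-selecting colorpair gadgets, as the paper does, is the standard way around that obstacle.
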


Note that although we showed W[1]-hardness for each of the parameters~$p$ and $n$,
parameterizing by the combined parameter~$p+n$ immediately gives fixed-parameter tractability,
since the input size is trivially upper-bounded by $n^2\cdot p$.
\section{Conclusion and Discussion}
In this paper, we have initiated the study of group activity selection problems with network structure, and found that 
even for very simple families of graphs, computing stable outcomes is NP-hard. We identified several ways 
to circumvent this computational intractability. For \gGASP s with copyable activities, we showed that there 
exists a polynomial time algorithm to compute stable outcomes.
We then investigated the parameterized complexity of computing stable
outcomes of group activity selection problems on networks, with respect to two natural parameters.
Many of our hardness results hold for the standard \GASP, where there are no constraints
on possible coalitions; however, some of our positive results only hold for acyclic graphs.

Interestingly, one of our tractability results holds for \GASP, but it is not clear
if it can be extended to \gGASP; thus, while simple networks may decrease complexity, 
allowing for arbitrary networks may have the opposite effect.
However, counterintuitively, for core stability we obtain hardness with just $2$
activities when the undelying network is simple (a star), whereas for cliques
our construction uses $4$ activities. It is not clear if this is an artefact of our proof approach,
or whether finding core stable outcomes is genuinely easier for cliques than for stars.

%\acks{The authors wish to thank ...}

%\vskip 0.2in
\bibliography{gaspref}
\bibliographystyle{theapa}

\appendix
\section*{Appendix. Proof omitted from Section $3$}
\subsection*{Proof of Theorem \ref{thm:core:copyable}}
\begin{proof}
The algorithm is similar to the one for hedonic games~\citep{Igarashi2016a}. We first give an informal description of our algorithm (Algorithm \ref{alg:is}), followed by pseudocode. 
Again, if the input graph $(N, L)$ is a forest, we can process each of its connected components separately, so we assume that 
$(N, L)$ is a tree. 
We choose an arbitrary node $r$ as the root and construct a rooted tree $(N,T)$ by orienting the edges in $L$ towards the leaves. 
We denote by $\ch(i)$ the set of children of $i$ and by $\desc(i)$ the set of descendants of $i$ (including $i$) in the rooted tree. For each $i  \in N$, we define $\height(i)=0$ if $\ch(i)=\emptyset$, and $\height(i)=1+\max \{\, \height(j)\mid j \in \ch(i)\,\}$ otherwise. We denote by $(N,T|S)$ the subdigraph induced by $S \subseteq N$, i.e., $T|S=\{\, (i,j) \in T \mid i,j \in S \,\}$.

The algorithm has two different phases: the bottom-up and top-down phases. 
\begin{itemize}
	\item \emph{Bottom-up phase}: In the bottom-up phase, we will determine {\em guaranteed} activity $a(i)$ and coalition $S(i)$ for every subroot $i$. To this end, we start with the assignment obtained by combining
	the previously constructed assignments $a(j)$ for the children $j$ of $i$
	and assigning $i$ to the void activity. We then let player $i$ join the most preferred activity among those to which she has an IS deviation. After that we keep adding  
	players to $i$'s coalition $S(i)$ as long as the resulting coalition remains feasible,
	the player being added is willing to move, 
	and such a deviation is acceptable for all members of $i$'s coalition. 
	\item \emph{Top-down phase}: In the top-down phase, the algorithm builds a feasible assignment $\pi$, by iteratively choosing a root $r^{\prime}$ of the remaining rooted trees and reassigning the activity $a(r^{\prime})$ to its coalition $S(r^{\prime})$. Since each activity is copyalbe, we can always find an activity that is equivalent to $a(r^{\prime})$ and has not been used by their predecessors.
\end{itemize}

\begin{algorithm}
\SetKwInOut{Input}{input} 
\SetKwInOut{Output}{output}
\SetKw{And}{and}
\SetKw{None}{None}
\caption{Finding individually stable assignments}\label{alg:is}
\Input{tree $(N,L)$, activity set $A=A^*\cup \{a_{\emptyset}\}$, $r\in N$, preference $\succeq_i$, $i\in N$}
\Output{$\pi:N \rightarrow A$}
	{\tt // Bottom-up phase: assign activities to players in a bottom-up manner.}\;
	make a rooted tree $(N,T)$ with root $r$ by orienting all the edges in $L$\;
	initialize $S(i)\leftarrow \{i\}$ and $a(i) \leftarrow a_{\emptyset}$ for each $i \in N$\;
	\ForEach{$t=0,\ldots,\height(r)$}{
	\ForEach{$i \in N$ with $\height(i)=t$}{
	{\tt // Let $i$ join the favourite activity to which $i$ has an IS-deviation.}
	set $C^*(i)=\{\, j \in \ch(i)\mid (a(j),|S(j)|+1) \succeq_{k} (a(j),|S(j)|)~\mbox{for all}~k \in S(j) \,\}$\;
	\If{there exists $b \in A$ such that $(b,1) \succ_i (a(j),|S(j)|+1)$ for all $j \in C^*(i)$}{
	find $b^* \in A$ such that $(b^*,1) \succeq_i(b,1)$ for all $b \in A$\;
	set $S(i)\leftarrow \{i\}$ and $a(i) \leftarrow b^*$\;
	\Else{
	find $j^* \in C^*(i)$ such that $(a(j^*),|S(j^*)|+1) \succeq_i (a(j),|S(j)|+1)$ for all $j \in C^*(i)$\;
	set $S(i) \leftarrow S(j^*)\cup \{i\}$ and $a(i) \leftarrow a(j^*)$ \label{step:children}\;
	}
	}
	{\tt // Add a player to $S(i)$ as long as the deviation is IS-feasible.}\;
	\While{$(a(i),|S(i)|+1) \succeq_k (a(i),|S(i)|)$ for all $k \in S(i)$ and there exists $j \in N \setminus S(i)$ such that $j$'s parent belongs to $S(i)$ and $(a(i),|S(i)|+1) \succ_j (a(j), |S(j)|)$\label{step:criterion}}{
	$S(i) \leftarrow S(i) \cup \{j\}$\;
	}
	}
	}
	{\tt // Top-down phase: relabel players with their predecessor's activities}\;
	set $N^{\prime} \leftarrow N$ and $A^{\prime} \leftarrow A^*$\;
	\While{$N^{\prime} \neq \emptyset$}{
	choose a root $r^{\prime}$ of some connected component of the digraph $(N^{\prime},T|{N^{\prime}})$ and find an activity $b \in A^{\prime}\cup \{a_{\emptyset}\}$ that is equivalent to $a(r^{\prime})$\;
	set $\pi(i) \leftarrow b$ for all $i \in S(r^{\prime})$\;
	set $N^{\prime} \leftarrow N^{\prime} \setminus S(r^{\prime})$ and $A^{\prime} \leftarrow A^{\prime} \setminus \{b\}$\;
}
\end{algorithm}

We will now argue that Algorithm \ref{alg:is} correctly finds an individually stable feasible assignment.
We start by observing the following lemma.
\begin{lemma}\label{lem1}
For all $i \in N$, the following statements hold:
\begin{itemize}
\item[$(${\rm i}$)$] $i$ has no incentive to deviate to an alternative of size $1$, i.e., $(a(i),|S(i)|)\succeq_i (b,1)$ for all $b \in A$,
\item[$(${\rm ii}$)$] $i$ has no IS-deviation to her children's coalitions, i.e., $(a(i),|S(i)|)\succeq_i (a(j),|S(j)|+1)$ for any $j \in C^*(i)$, and 
\item[$(${\rm iii}$)$] all players in $S(i)$ weakly prefer $(a(i),|S(i)|)$ to their guaranteed alternative $(a(j),|S(j)|)$.
\end{itemize}
\end{lemma}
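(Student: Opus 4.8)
The plan is to prove (i)--(iii) simultaneously by induction on $\height(i)$, relying on two bookkeeping facts about Algorithm~\ref{alg:is}. First, once a player $k$ has been processed in the bottom-up phase (i.e., its \textbf{while} loop has terminated), the values $a(k)$ and $S(k)$ are never written to again, since processing a player only ever updates that player's own $a(\cdot)$ and $S(\cdot)$. In particular, whenever the algorithm references $(a(j),|S(j)|)$ for a proper descendant $j$ of $i$, that pair is already frozen. Second, one checks, by the same induction, the auxiliary invariant $S(i)\subseteq\desc(i)$: it holds when $S(i)$ is initialised to $\{i\}$ or to $S(j^*)\cup\{i\}$ with $j^*\in\ch(i)$ (using $S(j^*)\subseteq\desc(j^*)$ from the inductive hypothesis), and it is preserved by the \textbf{while} loop because any player $j$ added there has its parent in $S(i)\subseteq\desc(i)$ and hence lies in $\desc(i)$. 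Thus every $j\in S(i)\setminus\{i\}$ is a proper descendant of $i$ with a frozen guaranteed alternative, so all the pairs $(a(j),|S(j)|)$ appearing in (iii) are well defined.

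For the base case, if $i$ is a leaf then $\ch(i)=\emptyset$, hence $C^*(i)=\emptyset$, the \textbf{if}-condition holds vacuously, the algorithm sets $S(i)=\{i\}$ and $a(i)=b^*$ with $(b^*,1)\succeq_i(b,1)$ for all $b\in A$, and the \textbf{while} loop adds nothing; (i)--(iii) are then immediate. For the inductive step, let $i$ be internal and assume (i)--(iii) for all children. Let $s_0$ denote the size of $S(i)$ right after initialisation. In the \textbf{if}-branch we start from $S(i)=\{i\}$, $a(i)=b^*$ (so $s_0=1$), with $(b^*,1)\succeq_i(b,1)$ for all $b\in A$ and, since the chosen witness $b$ satisfies $(b,1)\succ_i(a(j),|S(j)|+1)$ for all $j\in C^*(i)$, also $(b^*,1)\succ_i(a(j),|S(j)|+1)$ for all $j\in C^*(i)$. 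In the \textbf{else}-branch $C^*(i)\neq\emptyset$ and we start from $S(i)=S(j^*)\cup\{i\}$, $a(i)=a(j^*)$ (so $s_0=|S(j^*)|+1$), where $j^*\in C^*(i)$ satisfies $(a(j^*),|S(j^*)|+1)\succeq_i(a(j),|S(j)|+1)$ for all $j\in C^*(i)$; moreover, failure of the \textbf{if}-condition together with completeness of $\succeq_i$ gives, for every $b\in A$, some $j\in C^*(i)$ with $(a(j),|S(j)|+1)\succeq_i(b,1)$, hence $(a(j^*),|S(j^*)|+1)\succeq_i(b,1)$ for all $b\in A$. In both branches the \textbf{while} loop only enlarges $S(i)$ while keeping $a(i)$ fixed, and each enlargement is guarded by the requirement that \emph{every} current member $k\in S(i)$ weakly prefers the larger size to the current one; applied repeatedly to $k=i$ (who is in $S(i)$ throughout) this yields $(a(i),|S(i)|)\succeq_i(a(i),s_0)$.

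These facts give the three claims. For (iii): a member $j\in S(i)$ present after initialisation is either $i$ (trivial) or, in the \textbf{else}-branch, a member of $S(j^*)$; for the latter the inductive hypothesis~(iii) for $j^*$ gives $(a(j^*),|S(j^*)|)\succeq_j(a(j),|S(j)|)$, and the definition of $C^*(i)$, which requires $(a(j^*),|S(j^*)|+1)\succeq_k(a(j^*),|S(j^*)|)$ for all $k\in S(j^*)$, bumps this up to size $|S(j^*)|+1=s_0$; a member $j$ added during the \textbf{while} loop satisfies $(a(i),s)\succ_j(a(j),|S(j)|)$ at the size $s$ at which it joined, by the loop guard. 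In either case the subsequent size increases are accepted by $j$ (an existing member at each later step), so transitivity gives $(a(i),|S(i)|)\succeq_j(a(j),|S(j)|)$. For (i): in the \textbf{if}-branch $(a(i),|S(i)|)\succeq_i(b^*,1)\succeq_i(b,1)$ for all $b\in A$; in the \textbf{else}-branch $(a(i),|S(i)|)\succeq_i(a(j^*),|S(j^*)|+1)\succeq_i(b,1)$ for all $b\in A$. For (ii): in the \textbf{if}-branch $(a(i),|S(i)|)\succeq_i(b^*,1)\succ_i(a(j),|S(j)|+1)$ for each $j\in C^*(i)$; in the \textbf{else}-branch $(a(i),|S(i)|)\succeq_i(a(j^*),|S(j^*)|+1)\succeq_i(a(j),|S(j)|+1)$ for each $j\in C^*(i)$, using the choice of $j^*$.

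The main obstacle is entirely bookkeeping: making sure that every referenced pair $(a(j),|S(j)|)$ is already frozen, that $S(i)\subseteq\desc(i)$ is maintained, and that each appeal to the $C^*(i)$-condition and to a \textbf{while}-loop guard is matched with the correct member of $S(i)$ so that the chain of $\succeq_i$ (resp.\ $\succeq_j$) inequalities can be composed by transitivity. No idea beyond the inductive hypothesis and these monotone-acceptance arguments is required.
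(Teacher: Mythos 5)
Your proof is correct and follows essentially the same route as the paper: (i) and (ii) come straight from the choice of $a(i)$ and the \textbf{while}-loop stopping criterion, and your height induction for (iii), with the case split on whether $j$ entered $S(i)$ at initialisation (via $S(j^*)$ and the $C^*(i)$ acceptance condition) or during the \textbf{while} loop, is just the explicit inductive form of the paper's minimal-counterexample argument. The extra bookkeeping you record (frozen values for processed descendants, $S(i)\subseteq\desc(i)$, acceptance of each size increase by current members) is exactly what the paper uses implicitly.
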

\begin{proof}
The statements $(${\rm i}$)$ and $(${\rm ii}$)$ in Lemma \ref{lem1} immediately follow from the choice of $a(i)$ in lines $4$ -- $11$ and the stopping criterion
of the {\bf while} loop in line \ref{step:criterion}. 

We will now prove $(${\rm iii}$)$. Assume that there is a pair of players not satisfying the condition $(${\rm iii}$)$. Among such pairs, take $i \in N$ and $j \in S(i)$ with $|\height(i)-\height(j)|$ being the minimum. 
First, suppose that $j$ joins the coalition $S(i)$ when $S(i)$ is initialized in line \ref{step:children}. Then, $(a(j^*),|S(j^*)|) \succeq_j (a(j),|S(j)|)$ by the minimality but we have $(a(i),|S(i)|) \succeq_j (a(j^*),|S(j^*)|)$ by the fact that adding players to $S(i)$ does not decrease $j$'s utility, which leads to $(a(i),|S(i)|) \succeq_j (a(j),|S(j)|)$, a contradiction.
Second, suppose that the player $j$ has been added to $S(i)$ in the {\bf while} loop in line \ref{step:criterion}. This would mean that $j$ strictly prefers $(a(i),|S(i)|)$ to $(a(j), |S(j)|)$ at the time of termination, a contradiction. 
\end{proof}

Now, by $(${\rm iii}$)$ in Lemma \ref{lem1}, it can be easily verified that $(\pi(i),|\pi_i|) \succeq_i (a(i),|S(i)|)$ for all $i \in N$. Combining this with $(${\rm i}$)$, we know that at the assignment $\pi$, all players weakly prefer their alternatives to engaging alone in unused activities or the void activity. It thus remains to show that no player has an IS deviation to used activity.
Suppose towards a contradiction that there is a player $i$ who has an IS feasible deviation to the activity $\pi(j)$ of her neighbour $j$. 
If the player $j$ is a child of $i$, this would mean that all players in $S(j)$ accept $i$, i.e., $j \in C^*(i)$, and $i$ strictly prefers $(\pi(j),|\pi_j|+1)$ to her alternative, namely,
\[
(a(j),|S(j)|+1)=(\pi(j),|\pi_j|+1) \succ_i (\pi(i),|\pi_i|) \succeq_i (a(i),|S(i)|),
\]
contradicting $(${\rm ii}$)$ in Lemma \ref{lem1}. If the player $j$ is the parent of $i$ in the rooted tree $T$, $j$ must have been added to $\pi_j$ in the {\bf while} loop in line $14$, a contradiction. We conclude that no player has an IS-deviation to used activities, and hence $\pi$ is individually stable. 

It remains to analyze the running time of Algorithm~\ref{alg:is}.
Consider the execution of the algorithm for a fixed player $i$.
Let $c=|\ch(i)|$ and $d=|\desc(i)|$. Line~4
requires at most $d$ queries: no descendant of $i$ is queried more than once.
Lines~5 -- 10 require $p$ queries.
Moreover, at each iteration of the {\bf while}
loop in lines 12--14 at least one player joins $S(i)$, so there are at most 
$d$ iterations, in each iteration we consider at most $d$ candidates,
and for each candidate we perform at most $d$ queries.
Summing over all players, we conclude that the number of queries for the bottom-up phase is bounded by $O(n(n^2+p))$. It is immediate that the top-down phase can be done in polynomial time.
This completes the proof of the theorem. 
\end{proof}

\section*{Appendix. Proofs omitted from Section 4}
\subsection*{Proof of Theorem \ref{thm:NPhardness:path:star:sc:CRIS}}
\begin{proof}
We give separate proofs for the three types of networks considered: paths, stars, and graphs with small components.
	
%path%%%%%%%%%%%%%%%%%%%%%%%%%%%%%%%%%%%%%%%%%%%%
\paragraph{Paths.}
We prove hardness via a reduction from {\sc Path Rainbow Matching}. 

\emph{Construction.} 
Given an instance $(G, \calC, \phi, k)$ of {\sc Path Rainbow Matching} where 
$|\calC|=q$, we create a vertex 
player $v$ for each $v \in V$ and an edge player $e$ for each $e \in E$. 
To create the social network, we first construct the graph $(N_{G},L_{G})$ as defined in the proof for  Theorem \ref{thm:NPhardness:path:NS}. 
To the right of the graph $(N_G,L_G)$, we attach a path consisting of garbage collectors $\{g_{1},g_{2},\ldots,g_{q-k}\}$ 
and $q$ copies $(N_{c},L_{c})$ of the empty-IS instance of Example \ref{ex:IS:empty} where $N_{c}=\{c_1,c_2,c_3\}$ and
$L_{c}=\{ \{c_1,c_2\},\{c_2,c_3\}\}$ for each $c \in \calC$. 
For each color $c \in \calC$, we introduce a color activity $c$, and introduce additional activities $a(c)$ and $b(c)$.
Each vertex player $v$ approves color activities $\phi(e)$ of its adjacent edges $e$ with size $3$; each edge player $e$ 
approves the color activity $\phi(e)$ of its color with size $3$; each garbage collector $g_{i}$ approves any color 
activity $c$ with size $1$; finally, the preference for players in $N_{c}$ $(c \in \calC)$ is cyclic and given by 
\begin{align*}
c_1:&~ (b(c),2) \succ (a(c),1) \succ (c,3) \succ (c,2) \succ (c,1) \succ (a_{\emptyset},1)\\
c_2:&~ (c,3) \succ (c,2) \succ (a(c),2) \succ (b(c),2) \succ (b(c),1) \succ (a_{\emptyset},1)\\
c_3:&~ (c,3) \succ (a(c),2) \succ (a(c),1) \succ (a_{\emptyset},1)
\end{align*}

\emph{Correctness.} 
We will now prove that the following three statements are equivalent. 
\begin{enumerate}
\item[$($i$)$] There exists  an individually stable feasible assignment.
\item[$($ii$)$] There exists a core stable feasible assignment.
\item[$($iii$)$] $G$ contains a rainbow matching of size at least $k$.
\end{enumerate}

\noindent
$(${\rm i}$)$ or $(${\rm ii}$)\Longrightarrow(${\rm iii}$)$: 
We will show that if one of two stable solutions exists, then there exists a rainbow matching of size at least $k$. First, suppose that there is a core or individually stable feasible assignment $\pi:N \rightarrow A$. Let $M=\{\, e \in E \mid \pi(e) \in \calC \,\}$. We will show that $M$ is a rainbow matching of size at least $k$. To see this, notice that at $\pi$, all the color activities should be played outside $N_{c}$'s, since otherwise no core or individually stable assignment would exist as we have seen in Example \ref{ex:IS:empty}. Further, at most $q-k$ colour activities are played among the garbage collectors, which means that at least $k$ colour activities should be assigned to vertex and edge players. The only individual rational way to do this is to select triples of the form $(u,e,v)$ where $e=\{u,v\} \in E$ and assign to them their colour activity $\phi(e)$; thus, $M$ is a rainbow matching of size at least $k$.

\noindent
$(${\rm iii}$)\Longrightarrow(${\rm i}$)$ and $(${\rm ii}$)$: 
Suppose that there exists a rainbow matching $M$ of size $k$. We construct a feasible assignment 
$\pi$ where for each $e=\{u, v\}\in M$ we set $\pi(e)=\pi(u)=\pi(v)=\phi(e)$,
each garbage collector $g_{i}$, $i\in[q-k]$, is arbitrarily assigned to one of the remaining $q-k$ color activities, each pair of $c_2$ and $c_3$ $(c \in \calC)$ is assigned to $a(c)$, 
and the remaining players are assigned to the void activity. 
The assignment $\pi$ is individually stable, since every garbage 
collector as well as every edge or vertex player assigned to a color activity
is allocated their top alternative, and no remaining player has an IS feasible deviation.
Similarly, it can be easily verified that $\pi$ is core stable.

%star%%%%%%%%%%%%%%%%%%%%%%%%%%%%%%%%%%%%%%%%%%%
\paragraph{Stars.}
Again, we reduce from a restricted variant of MMM where the graph is a bipartite graph. We have seen that computing a core stable outcome is NP-hard for stars in Theorem \ref{thm:NP:star:core}; hence, we only provide a hardness proof for individual stability. 

\emph{Construction.} Given a bipartite graph $(U,V,E)$ and an integer $k$, we construct an instance of \gGASP\  on a path as follows.
We create a star with center $c$ and the $|V|+2$ leaves: one leaf for each vertex $v \in V$ plus two other players $s_{1}$ and $s_{2}$. 
We then introduce an activity $u$ for each $u \in U$, and four other simple activities $a$, $x$, $y$, and $z$. 

A player $v \in V$ approves $(u,1)$ for each $u \in U$ such that $\{u,v\}\in E$ as well as $(a,|V|-k+1)$ and prefers the former to the latter.  That is, $(u,1) \succ_{v}(a,|V|-k+1)$ for any $u \in U$ with $\{u,v\}\in E$; $v$ is indifferent among activities associated with its neighbors in the graph, that is, $(u,1)\sim_{v} (u^{\prime},1)$ for all $u,u^{\prime} \in U$ such that $\{u,v\},\{u^{\prime},v\}\in E$. The center player $c$ strictly prefers $(a,|V|-k+1)$ to any other alternative, and has the same cyclic preferences over the alternatives of $x,y$, and $z$ as in Example \ref{ex:IS:empty} together with players $s_{1}$ and $s_{2}$; explicitly, the preferences are given by
\begin{align*}
&s_{1}:~(y,2) \succ (x,1) \succ (z,3) \succ (z,2) \succ (z,1)  \succ (a_{\emptyset},1)\\
&c: (a,|V|-k+1) \succ (z,3) \succ (z,2) \succ (x,2) \succ (y,2) \succ (y,1) \succ (a_{\emptyset},1)\\
&s_{2}:~(z,3) \succ (x,2) \succ (x,1) \succ (a_{\emptyset},1).
\end{align*}

Here, $s_{1}$'s (respectively, the center $c$ and the player $s_{2}$) preference corresponds to the one for player $1$ (respectively, player $2$ and player $3$) in Example \ref{ex:IS:empty}.

\emph{Correctness.} We will show that there exists an individually stable feasible assignment if and only if $G$ contains a maximal matching of size at most $k$.

Suppose that there exists an individually stable feasible assignment $\pi$ and let $M=\{\, \{\pi(v),v\} \mid v \in V \land \pi(v) \in U \,\}$. We will show that $M$ is a maximal matching of size at most $k$. By stability, the center player and $|V|-k$ vertex players are assigned to the activity $a$; otherwise, no individually stable outcome would exist as we have seen in Example \ref{ex:IS:empty}; thus, $|M| \leq k$. Notice further that $M$ is a matching since each vertex player $v$ plays at most one activity, and by individual rationality each vertex activity should be assigned to at most one player. Now suppose towards a contradiction that $M$ is not maximal, i.e., there exists an edge $\{u,v\} \in E$ such that $u \in U$, $v \in V$, and $M\cup \{u,v\}$ is a matching. This would mean that $\pi$ assigns no player to activity $u$ and no vertex activity to player $v$. Hence, the player $v$ has an IS-deviation to the vertex activity $u$, contradicting the stability of $\pi$. 

Conversely, suppose that $G$ admits a maximal matching $M$ with at most $k$ edges. We construct a feasible assignment $\pi$ by setting $\pi(v)=u$ for each $\{u,v\} \in M$, and assigning $|V|-k$ non-matched vertex players and the center to $a$, assigning $s_{1}$ to $x$, and assigning the remaining players to the void activity. The center $c$ is allocated to her top alternative; hence no player has an IS-deviation to an activity $a$ and the center does not want to deviate to a used vertex activity activity. No vertex player $v$ has an IS-deviation to an unused vertex activity $u$, since if such a pair $\{u,v\}$ existed, this would mean that $\{u,v\}$ is not included in $M$, and hence $M\cup \{u,v\}$ forms a matching, which contradicts the maximality of $M$. Hence, $\pi$ is individually stable. This completes the proof.

%%%%%%%%%%%%%%%%%%%%%%%%%%%%%%%%%%%%%%%%%%%%
\paragraph{Small components.}
We reduce from (3,B2)-{\sc Sat}. 

\emph{Construction.} Consider a formula $\phi$ with variable set $X$ and clause set $C$, where for each variable $x \in X$ we write $x_1$ and $x_2$ for the two positive occurrences of $x$, and ${\bar x_1}$ and ${\bar x_2}$ for the two negative occurrences of $x$. We construct an instance of \gGASP\ as follows. 
%Players
For each variable $x \in X$, we introduce three variable players $v_1(x)$, $v_2(x)$, and $v_3(x)$; and we also introduce three positive variable players $p_1(x)$, $p_2(x)$, and $p_3(x)$, and three negative variable players ${\bar p_1}(x)$, ${\bar p_2}(x)$, and ${\bar p_3}(x)$. For each clause $c \in C$, we introduce three players $N_c=\{c_1,c_2,c_3\}$. 
The network consists of one component for each clause $c \in C$: a star with center $c_2$ and leaves $c_1$ and $c_3$, and of three components for each variable $x \in X$: 
\begin{itemize}
\item a star with center $v_2(x)$ and leaves $v_1(x)$ and $v_3(x)$, 
\item a star with center $p_2(x)$ and leaves $p_1(x)$ and $p_3(x)$, and 
\item a star with center ${\bar p_2}(x)$ and leaves ${\bar p_1}(x)$ and ${\bar p_3}(x)$. 
\end{itemize}
Hence, the size of each connected component of this graph is at most $3$.
%Activities
Corresponding to the variable occurrences, we introduce two positive literal activities $x_1$ and $x_2$, 
two negative literal activities ${\bar x_1}$ and ${\bar x_2}$ for each $x\in X$; we also introduce one variable activity $x$, and eight further activities $y(x),z(x)$, $a(x),b(x),c(x)$, ${\bar a}(x),{\bar b}(x)$, and ${\bar c}(x)$. 
Thus, the set of activities given by
\[
A^*=\bigcup_{x \in X} \{x_1,x_2,{\bar x_1},{\bar x_2},x, y(x),z(x),a(x),b(x),c(x),{\bar a}(x),{\bar b}(x),{\bar c}(x)\}.
\]
For each $x \in X$, the preferences of variable players are given as follows:
\begin{align*}
&v_1(x):~(y(x),2) \succ (x,1) \succ (z(x),3) \succ (z(x),2) \succ (z(x),1) \succ (a_{\emptyset},1),\\
&v_2(x):~(z(x),3) \succ (z(x),2) \succ (x,2) \succ (y(x),2) \succ (y(x),1) \succ (a_{\emptyset},1),\\
&v_3(x):~(z(x),3) \succ (x,2) \succ (x,1) \succ (a_{\emptyset},1).
\end{align*} 
Notice that the preferences are cyclic; thus, in a core or individually stable assignment, one of the activities $x, y(x)$, and $z(x)$ must be allocated outside of the variable players $v_1(x)$, $v_2(x)$, and $v_3(x)$. For each $x \in X$, the preferences of the three positive variable players $p_1(x)$, $p_2(x)$, and $p_3(x)$ are given as follows: 
\begin{align*}
&p_1(x):~(x,3) \sim (x_1,1)\succ (b(x),2) \succ (a(x),1) \succ (c(x),3) \succ (c(x),2) \succ (c(x),1) \succ (a_{\emptyset},1),\\
&p_2(x):~(x,3) \sim (x_2,1) 
\succ (c(x),3) \succ (c(x),2) \succ (a(x),2) \succ (b(x),2) \succ (b(x),1) \succ (a_{\emptyset},1),\\
&p_3(x):~(x,3) \succ (c(x),3) \succ (a(x),2) \succ (a(x),1) \succ (a_{\emptyset},1).
\end{align*} 
Similarly, for each $x \in X$, the preferences of the negative variable players ${\bar p_1}(x)$, ${\bar p_2}(x)$, and ${\bar p_3}(x)$ are given as follows:
\begin{align*}
&{\bar p_1}(x):~(x,3) \sim ({\bar x_1},1)\succ ({\bar b}(x),2) \succ ({\bar a}(x),1) \succ ({\bar c}(x),3) \succ ({\bar c}(x),2) \succ ({\bar c}(x),1) \succ (a_{\emptyset},1),\\
&{\bar p_2}(x):~(x,3) \sim ({\bar x_2},1) 
\succ ({\bar c}(x),3) \succ ({\bar c}(x),2) \succ ({\bar a}(x),2) \succ ({\bar b}(x),2) \succ ({\bar b}(x),1) \succ (a_{\emptyset},1),\\
&{\bar p_3}(x):~(x,3) \succ ({\bar c}(x),3) \succ ({\bar a}(x),2) \succ ({\bar a}(x),1) \succ (a_{\emptyset},1).
\end{align*}
Again, observe that the preferences of each triple contains a cyclic relation, and hence in a core stable assignment and an individually stable assignment, there are three possible cases: 
\begin{itemize}
\item all the three players $p_1(x)$, $p_2(x)$, and $p_3(x)$ are assigned to $x_1$, $x_2$, and $a(x)$, respectively, and players ${\bar p_1}(x)$, ${\bar p_2}(x)$, and ${\bar x}$ are assigned to activities ${\bar x_1}$ , ${\bar x_2}$, and ${\bar a}(x)$, respectively; 
\item all the three players $p_1(x)$, $p_2(x)$, and $p_3(x)$ are assigned to $x$, and players ${\bar p_1}(x)$, ${\bar p_2}(x)$, and ${\bar x}$ are assigned to activities ${\bar x_1}$ , ${\bar x_2}$, and ${\bar a}(x)$, respectively; or, 
\item all the players ${\bar p_1}(x)$, ${\bar p_2}(x)$, and ${\bar p_3}(x)$ are assigned to $x$, and players $p_1(x)$, $p_2(x)$, and $p_3(x)$ are assigned to activities $x_1$, $x_2$, and $a(x)$, respectively.
\end{itemize}
Later we will see that the last two cases can only occur in a stable assignment.

For each clause $c \in C$ where $\ell^c_1$, $\ell^c_2$, and $\ell^c_3$ are the literals in a clause $c$, the preferences for clause players $c_1$, $c_2$, and $c_3$ are again cyclic and given as follows: 
\begin{align*}
&c_1:~ (\ell^c_2,2) \succ (\ell^c_1,1) \succ  (\ell^c_3,3) \succ  (\ell^c_3,2) \succ (\ell^c_3,1) \succ (a_{\emptyset},1),\\
&c_2:~(\ell^c_3,3) \succ (\ell^c_3,2) \succ (\ell^c_1,2)\succ (\ell^c_2,2) \succ (\ell^c_2,1) \succ  (a_{\emptyset},1),\\
&c_3:~(\ell^c_3,2)\succ (\ell^c_1,3) \succ  (\ell^c_1,1) \succ (a_{\emptyset},1).
\end{align*}
If there exists a core or individually stable outcome, it must be the case that at least one of the literal activities $\ell^c_1$, $\ell^c_2$, and $\ell^c_3$ must be used outside of the three players $c_1$, $c_2$, and $c_3$; otherwise, no feasible assignment would be individually stable.

\emph{Correctness.} Now we will show that the following three statements are equivalent. 
\begin{enumerate}
\item[$($i$)$] There exists  an individually stable feasible assignment.
\item[$($ii$)$]  There exists a core stable feasible assignment.
\item[$($iii$)$] The formula $\phi$ is satisfied by some truth assignment. 
\end{enumerate}

\noindent
$(${\rm i}$)$ or $(${\rm ii}$)\Longrightarrow(${\rm iii}$)$: 
Suppose that there exists a core or individually stable feasible assignment $\pi$. By stability, for each variable $x \in X$, one of the activities $x$, $y(x)$, and $z(x)$ must be assigned outside of the variable players $v_1(x)$, $v_2(x)$, and $v_3(x)$. The only individually rational way to do this is to assign the variable activity~{$x$} to either the positive variable players $p_1(x)$, $p_2(x)$, and $p_3(x)$, or the negative variable players ${\bar p_1}(x)$, ${\bar p_2}(x)$, and ${\bar p_3}(x)$, meaning that either a pair of positive literal activities $x_1$ and $x_2$ or a pair of negative literal activities ${\bar x_1}$ and ${\bar x_2}$ should be assigned to the corresponding pair of variable players. Further, for each clause $c$, at least one of the literal activities $\ell^c_1$, $\ell^c_2$, and $\ell^c_3$ should be played outside of the clause players $c_1$, $c_2$, and $c_3$. Then, take the truth assignment that sets the variables $x$ to {\em true} if their positive variable players $p_1(x)$ and $p_2(x)$ are assigned to positive literal activities $x_1$ and $x_2$; otherwise, $x$ is set to {\em false}; this can be easily seen to satisfy $\phi$. 

\noindent
$(${\rm iii}$)\Longrightarrow(${\rm i}$)$ and $(${\rm ii}$)$: 
Suppose that there exists a truth assignment that satisfies $\phi$. We construct a feasible assignment $\pi$ as follows. First, for every variable $x \in X$, we assign $z(x)$ to the variable players $v_1(x)$, $v_2(x)$, and $v_3(x)$. Then, for each variable $x$ that is set to {\em true}, we assign positive literal activities $x_1$, $x_2$, and $a(x)$ to positive variable players $p_1(x)$, $p_2(x)$, and $p_3(x)$, respectively, and assign a variable activity $x$ to players ${\bar p_1}(x)$, ${\bar p_2}(x)$, and~${\bar p_3}(x)$. For each variable $x$ that is set to {\em  false}, we assign negative literal activities ${\bar x_1}$, ${\bar x_2}$, and ${\bar a}(x)$ to negative literal players ${\bar p_1}(x)$, ${\bar p_2}(x)$, and ${\bar p_3}(x)$, respectively, and assign a variable activity $x$ to players $p_1(x)$, $p_2(x)$, and $p_3(x)$. Note that this procedure uses at least one of the literal activities $\ell^c_1$, $\ell^c_2$ and $\ell^c_3$ of each clause $c$, since the given truth assignment satisfies $\phi$. Then, for each clause $c \in C$, we assign activities to the clause players $c_1$, $c_2$, and $c_3$ as follows.
\begin{itemize}
\item If all the activities are already assigned in a previous step, then we assign the void activity to all the clause players $c_1$, $c_2$, and $c_3$.
\item If $\ell^c_1$ is already assigned and $\ell^c_3$ is not assigned, then we assign  $\ell^c_3$ to all the clause players $c_1$, $c_2$, and $c_3$.
\item If $\ell^c_3$ is already assigned and $\ell^c_1$ is not assigned, then we assign $\ell^c_1$ to the players $c_2$, and $c_3$.
\item If both $\ell^c_1$ and $\ell^c_3$ are already assigned and $\ell^c_2$ is not assigned, then we assign $\ell^c_2$ to the player $c_1$ and $c_2$.
\end{itemize}
The resulting assignment $\pi$ of players to activities is individually stable, because no variable player wishes to change their alternative and no player of each $N_{c}$ has an IS feasible deviation. Similarly, it can be easily verified that $\pi$ is core stable.
\end{proof}

\section*{Appendix. Proofs omitted from Section 5}
\subsection*{Proof of Theorem \ref{thm:FPT:smallcomponents:CR}}
\begin{proof}
	Again, we give a dynamic programming algorithm. Suppose our graph $(N,L)$ has $k$ connected components $(N_{1},L_{1}), 
	(N_{2},L_{2}), \ldots, (N_{k},L_{k})$. For each $i\in[k]$, each set $B \subseteq A^{*}$ of activities 
	assigned to $N$, and each set $B^{\prime} \subseteq B$ of activities assigned to $\bigcup^{i}_{j=1}N_{j}$, we let $\CR[i,B,B^{\prime}]$ to be {\em  true} if there exists an individually rational feasible assignment $\pi:\bigcup^{i}_{j=1}N_{j} \rightarrow A$ such that 
\begin{itemize}
\item the set of activities assigned to $\bigcup^{i}_{j=1}N_{j}$ is exactly $B^{\prime}$; and
\item no connected subset $S \subseteq \bigcup^{i}_{j=1}N_{j}$ together with an activity in $B^{\prime}\cup(A^{*} \setminus B)$ strongly blocks $\pi$
\end{itemize}
Otherwise, $\CR[i,B,B^{\prime}]$ is {\em  false}.
	
	For $i=1$, each $B \subseteq A^{*}$, and each $B^{\prime}\subseteq B$, we compute the value of $\CR[1,B,B^{\prime}]$ by trying all possible mappings $\pi:N \rightarrow B^{\prime} \cup \{a_{\emptyset}\}$, and checking whether it is an individually rational feasible assignment using all activities in $B^{\prime}$ and such that none of the connected subsets $S \subseteq N_{1}$ together with an activity in $B^{\prime}\cup(A^{*} \setminus B)$ strongly blocks $\pi$.	

	For $i=2,3,\ldots,k$, each $B \subseteq A^{*}$, and $B^{\prime} \subseteq B$, we set $\CR[i,B,B^{\prime}]$ to {\em true} if there exists a bipartition of $B^{\prime}$ into $P$ and $Q$ such that $\CR[i-1,B,P]$ is {\em true} and there exists an individually rational feasible assignment $\pi:N_{i} \rightarrow Q\cup \{a_{\emptyset}\}$ such that each activity in $Q$ is assigned to some player in $N_i$, and none of the connected subsets $S \subseteq N_{i}$ together with an activity in $Q\cup(A^{*} \setminus B)$ strongly blocks $\pi$. 

	It is not difficult to see that a core stable solution 
	exists if and only if $\CR[k,B,B]$ is {\em true} for some $B \subseteq A^{*}$. If this is the case, such a stable feasible 
	assignment can be found using standard dynamic programming techniques. 
	The size of the dynamic programming table is at most $4^p n$. There are at most $2^p$ bipartitions of $B^{\prime}$, and for each bipartition, we consider $O(p^c)$ possible assignments; further, for each assignment, we have shown that there is an $O(pc^3)$ time algorithm to check the existence of a strongly blocking coalition. Thus, each entry can be filled in time $O(2^p p^{c+1}c^3)$. This completes the proof.
\end{proof}

\subsection*{Proof of Theorem \ref{thm:FPT:tree:IS}}
\begin{proof}
Again, given a tree $(N,L)$, we choose an arbitrary node as the root and construct a rooted tree by orienting the edges in $L$ towards the leaves; we denote by $\ch(i)$ the set of children of $i$ and by $\desc(i)$ the set of descendants of $i$ (including $i$). 

We process the nodes from the leaves to the root. For each $i \in N$, each $B \subseteq A^{*}$, each $B^{\prime} \subseteq B$, each $(a,k) \in B^{\prime} \times [n] \cup \{(a_{\emptyset},1)\}$, and each $t \in [k]$, we will again check whether there is a partial assignment of $B^{\prime}$ to $i$'s descendants that is extensible to a stable assignment, assuming that the whole players are assigned to the activities in $B$. Here, there are three cases: first, the $i$'s coalition may admit a deviation from their descendants but some player who does not want to increase the size of the coalition may join the coalition later on ({\em possibly stable}); second, there is already a player in the $i$'s coalition who does not want to increase the size of the coalition ({\em definitely stable}); third, no descendant of $i$'s coalition wants to deviate to it ({\em weakly stable}). For all three concepts, all coalitions of descendants of $i$ that do not involve $i$ are immune to any IS-deviation. 

%%%%define PS
Formally, we first let $\PS[i,B,B^{\prime},(a,k),t]$ be {\em true} if there exists a feasible assignment $\pi:N \rightarrow A$ where
\begin{enumerate}
\item[$(${\rm i}$)$]  the set of activities assigned to players in $\desc(i)$ is exactly $B^{\prime}$;
\item[$(${\rm ii}$)$]  player $i$ is assigned to $a$ and is in a coalition with $k$ other players;
\item[$(${\rm iii}$)$]  exactly $t$ players in $\desc(i)$ belong to the same group as $i$;
\item[$(${\rm iv}$)$]  the $t$ players in $\desc(i) \cap \pi_i$ weakly prefer $(a,k)$ to $(b, 1)$ for each $b\in A\setminus B$, and has no IS-deviation to the activities assigned to players in $D_i \setminus \pi_i$; and
\item[$(${\rm v}$)$]  the players in $\desc(i)\setminus \pi_i$ weakly prefer their alternative under $\pi$ to engaging alone in any of the activities in $A\setminus B$, and have no IS-deviations to activities in $B^{\prime}\setminus \{a\}$. 
\end{enumerate}
Otherwise, we let $\PS[i,B,B^{\prime},(a,k),t]$ be {\em false}. 
%%%%define DS
Second, we let $\DS[i,B,B^{\prime},(a,k),t]$ be \emph{true} if there exists a feasible assignment $\pi:N \rightarrow \bbR$ such that $(${\rm i}$)$ - $(${\rm v}$)$ hold, and $(${\rm vi}$)$ $a=a_{\emptyset}$ or some player in $\desc(i) \cap \pi_i$ strictly prefers $(a,k)$ to $(a,k+1)$; we let $\DS[i,B,B^{\prime},(a,k),t]$ be \emph{false} otherwise. 
%%%define WS
Third, we let $\WS[i,B,B^{\prime},(a,k),t]$ be \emph{true} if there exists a feasible assignment $\pi:N \rightarrow \bbR$ such that $(${\rm i}$)$ - $(${\rm v}$)$ hold, and $(${\rm vii}$)$ $a=a_{\emptyset}$ or no player in $\desc(i) \setminus \pi_i$ strictly prefers $(a,k+1)$ to their alternative under $\pi$; we let $\WS[i,B,B^{\prime},(a,k),t]$ be \emph{false} otherwise. 

By construction, our instance admits an individually stable assignment if and only if there exists a combination of the arguments $B$, $B^{\prime}$, and $(a,k)$ such that $\DS[t,B,B^{\prime},(a,k),k]$ is {\em true} or $\WS[r,B,B^{\prime},(a,k),k]$ is {\em true}, where $r$ is the root. In what follows, we will describe how to compute each entry of all the three tables in a bottom-up manner. 

\emph{Leaves.} 
%leaves
If $i$ is a leaf, we set both $\PS[i,B,B^{\prime},(a,k),t]$ and $\WS[i,B,B^{\prime},(a,k),t]$ to \emph{true} if $B^{\prime}=\{a\}$, $t=1$, and $i$ weakly prefers $(a,k)$ to any other alternative $(b, 1)$ such that $b\in A\setminus B$; otherwise, we set both $\PS[i,B,B^{\prime},(a,k),t]$ and $\WS[i,B,B^{\prime},(a,k),t]$ to \emph{false}. We set $\DS[i,B,B^{\prime},(a,k),t]$ to \emph{true} if $\PS[i,B,B^{\prime},(a,k),t]$ is true, and $a=a_{\emptyset}$ or $i$ strictly prefers $(a,k)$ to $(a,k+1)$; otherwise, we set $\DS[i,B,B^{\prime},(a,k),t]$ to \emph{false}.

\emph{Internal vertices.} 
%internal vertices
Now consider the case where $i$ is an internal vertex; we assume that all the values for $i$'s descendants have been computed.
%IR condition
We first check whether $i$ strictly prefers some alternative $(b,1)$ such that $b\in A\setminus B$ to $(a,k)$; if so, we set $\PS[i,B,B^{\prime},(a,k),t]$,  $\DS[i,B,B^{\prime},(a,k),t]$, and $\WS[i,B,B^{\prime},(a,k),t]$ to \emph{false}. 
%%%
Otherwise, we proceed and check for each partition $\calP$ of $B^{\prime} \setminus \{a\}$ and each of its permutations whether there is an allocation of each activity set $P\in \calP$ to some subtree rooted at $i$'s child that gives rise to an assignment with the conditions described before. One can show that determining the existence of a desired assignment can be computed in polynomial time. Then, similarly to the proof of Theorem \ref{thm:FPT:tree:IS}, our problem for trees is in FPT and, moreover, we can easily extend this to arbitrary forests.

Now let us fix a partition $\calP=\{P_1,P_2,\ldots,P_{|\calP|}\}$ of $B^{\prime}\setminus \{a\}$. Again, without loss of generality, we consider an ordering $P_1,P_2,\ldots,P_{|\calP|}$ and seek to assign each activity set to the subtrees in that order. It remains to show that there exists a polynomial time algorithm that determines whether there exists a feasible assignment $\pi:N \rightarrow A$ such that the conditions $(${\rm i}$)$ - $(${\rm v}$)$ (respectively, $(${\rm i}$)$ - $(${\rm vi}$)$, and $(${\rm i}$)$ - $(${\rm v}$)$, $(${\rm vii}$)$) hold, and each activity set in $\calP$ is assigned to the players in $\desc(j)$ for some $j \in \ch(i)$ in such a way that for each $q \in [|\calP|]$ and each $c \in [|\ch(i)|]$, $P_q$ is assigned to $\desc(j_c)$ only if the prefixes $P_1,P_2,\ldots,P_{q-1}$ are assigned to the subtrees $\desc(j_1),\desc(j_2),\ldots, \desc(j_{c-1})$. 
To this end, we give a dynamic programming for a respective problem; we denote by $P_0$ the empty activity set.

%subproblem %%%%%%%%%%%%%%%%
\vspace{0.2cm}
%\noindent
{\em Computation of $\PS[i,B,B^{\prime},(a,k),t]$}: 
First, for each $c \in [|\ch(i)|]$ and $q \in [0,|\calP|]$, we will check whether the activity sets $P_0,P_1,\ldots,P_{q}$ can be assigned to the subtrees rooted at $j_1,j_2,\ldots,j_c$, and exactly $\ell$ players can be assigned to the activity $a$; we refer to this subproblem by $TP[j_c,P_q,\ell]$. We initialize $TP[j_1,P_q,\ell]$ to {\em true} if one of the following statements holds:
\begin{itemize}
\item the empty activity set $P_0$ can be allocated to the first subtree $\desc(j_1)$, i.e., $q=0$, $\ell=0$, and $\PS[j_1,B,\emptyset,(a_{\emptyset},1),1]$ is {\em true}; 
\item only the activity $a$ can be assigned to $\ell$ players in $\desc(j_1)$, i.e., $q=0$, $\ell\geq 1$, and $\PS[j_1,B,\{a\},(a,k),\ell]$ is {\em true};
\item only the activity set $P_1$ can be assigned to players in $\desc(j_1)$, i.e., $q=1$, $\ell=0$, and there exists an alternative $(b,x)\in P_1 
\times [n]\cup \{(a_{\emptyset},1)\}$ such that $\DS[j_1,B,P_1,(b,x),x]$ is {\em true}, or $\WS[j_1,B,P_1,(b,x),x]$ is {\em true} and ($b= a_{\emptyset}$ or $i$ weakly prefers $(a,k)$ to $(b,x+1)$);
\item $P_1$ can be assigned to players in $\desc(j_1)$ while activity $a$ can be assigned to $\ell$ players from $\desc(j_1)$, i.e., $q=1$, $\ell\geq 1$, and $\PS[j_1,B,P_1 \cup \{a\},(a,k),\ell]$ is {\em true}.
\end{itemize}
We set $TP[j_1,P_q,\ell]$ to {\em false} otherwise. Then, we iterate through $j_1,j_2,\ldots,j_{|\ch(i)|}$ and $P_0,P_1,\ldots,P_{|\calP|}$, and update $T[j_c,P_q,\ell]$: for each $c \in [|\ch(i)|]$, for each $q \in [|\calP|]$, and for each $\ell \in [0,t]$, we set $T[j_c,P_q,\ell]$ to {\em true} if one of the following statements holds:
\begin{itemize}
\item $P_1, P_2, \ldots, P_{q}$ can be assigned to the subtrees $\desc(j_1), \desc(j_2),\ldots, \desc(j_{c-1})$ with $\ell$ players from the subtrees being assigned to the activity $a$, and the void activity can be assigned to the subtree $\desc(j_c)$, i.e., both $TP[j_{c-1},P_q,\ell]$ and $\PS[j_c,B,\emptyset,(a_{\emptyset},1),1]$ are {\em true};
\item $P_1, P_2, \ldots, P_{q}$ can be assigned to the subtrees $\desc(j_1), \desc(j_2),\ldots, \desc(j_{c-1})$ while the activity $a$ is assigned to $x$ players from the subtrees $\desc(j_1), \desc(j_2),\ldots, \desc(j_{c-1})$ and to $\ell - x$ players from the subtree $\desc(j_c)$, i.e., $\ell \geq 2$, and there exists an $x \in [\ell-1]$ such that both $TP[j_{c-1},P_{q},\ell-x]$ and $\PS[j_c,B,\{a\},(a,k),x]$ are {\em true};
\item $P_1, P_2, \ldots, P_{q-1}$ can be assigned to the subtrees $\desc(j_1), \desc(j_2),\ldots, \desc(j_{c-1})$ with $\ell$ players from the subtrees being assigned to the activity $a$, and only the activity set $P_q$ can be assigned to the subtree $\desc(j_c)$, i.e., $TP[j_{c-1},P_{q-1},\ell]$ is {\em true}, and there exists an alternative $(b,x)\in P_q \times [n]\cup \{(a_{\emptyset},1)\}$ such that $\DS[j_c,B,P_q,(b,x),x]$ is {\em true}, or $\WS[j_c,B,P_q,(b,x),x]$ is {\em true} and ($b = a_{\emptyset}$ or $i$ weakly prefers $(a,k)$ to $(b,x+1)$); 
\item $P_1, P_2, \ldots, P_{q-1}$ can be assigned to the subtrees $\desc(j_1), \desc(j_2),\ldots, \desc(j_{c-1})$, and the activity set $P_q$ can be assigned to the subtree $\desc(j_c)$ while the activity $a$ is assigned to $x$ players from the subtrees $\desc(j_1), \desc(j_2),\ldots, \desc(j_{c-1})$ and to $\ell - x$ players from the subtree $\desc(j_c)$, i.e., $\ell \geq 2$, and there exists an $x \in [\ell-1]$ such that both $TP[j_{c-1},P_{q-1},\ell-x]$ and $\PS[j_c,B,P_q \cup \{a\},(a,k),x]$ are {\em true}.
\end{itemize}
We set $TP[j_c,P_q,\ell]$ to {\em false} otherwise. 

\vspace{0.2cm}

%\noindent
{\em Computation of $\DS[i,B,B^{\prime},(a,k),t]$}: 
Second, for each $c \in [|\ch(i)|]$ and $q \in [0,|\calP|]$, we will check whether the activity sets $P_0,P_1,\ldots,P_{q}$ can be assigned to the subtrees rooted at $j_1,j_2,\ldots,j_c$, and exactly $\ell$ players can be assigned to the activity $a$, and ($a=a_{\emptyset}$ or some of the $\ell$ players strictly prefers $(a,k)$ to $(a,k+1)$); we refer to this by $TD[j_c,P_q,\ell]$.

If $i$ strictly prefers $(a,k)$ to $(a,k+1)$, then we set each value $TD[j_c,P_q,\ell]$ to $TP[j_1,P_q,\ell]$. Otherwise, we compute $TD[j_c,P_q,\ell]$ as follows.
We initialize $TD[j_1,P_q,\ell]$ to {\em true} if one of the following statements holds:
\begin{itemize}
\item only the activity $a$ can be assigned to $\ell$ players in $\desc(j_1)$, i.e., $q=0$, $\ell\geq 1$, and $\DS[j_1,B,\{a\},(a,k),\ell]$ is {\em true}; 
\item $P_1$ can be assigned to players in $\desc(j_1)$ while activity $a$ can be assigned to $\ell$ players from $\desc(j_1)$, i.e., $q=1$, $\ell\geq 1$, and $\DS[j_1,B,P_1 \cup \{a\},(a,k),\ell]$ or ) is {\em true}.
\end{itemize}
We set $TD[j_1,P_q,\ell]$ to {\em false} otherwise.  Then, for each $c \in [|\ch(i)|]$, for each $q \in [|\calP|]$, and for each $\ell \in [0,t]$, we set $TD[j_c,P_q,\ell]$ to {\em true} if one of the following statements holds:
\begin{itemize}
\item $P_1, P_2, \ldots, P_{q}$ can be assigned to the subtrees $\desc(j_1), \desc(j_2),\ldots, \desc(j_{c-1})$ with $\ell$ players from the subtrees being assigned to the activity $a$, and the void activity can be assigned to the subtree $\desc(j_c)$, i.e., both $TD[j_{c-1},P_q,\ell]$ and $\PS[j_c,B,\emptyset,(a_{\emptyset},1),1]$ are {\em true};

\item $P_1, P_2, \ldots, P_{q}$ can be assigned to the subtrees $\desc(j_1), \desc(j_2),\ldots, \desc(j_{c-1})$ while the activity $a$ is assigned to $x$ players from the subtrees $\desc(j_1), \desc(j_2),\ldots, \desc(j_{c-1})$ and to $\ell - x$ players from the subtree $\desc(j_c)$, i.e., $\ell \geq 2$, and there exists an $x \in [\ell-1]$ such that 
both $TP[j_{c-1},P_q,\ell-x]$ and $\DS[j_c,B,\{a\},(a,k),x]$ are {\em true}, or 
both $TD[j_{c-1},P_q,\ell-x]$ and $\PS[j_c,B,\{a\},(a,k),x]$ are {\em true}; 

\item $P_1, P_2, \ldots, P_{q-1}$ can be assigned to the subtrees $\desc(j_1), \desc(j_2),\ldots, \desc(j_{c-1})$ with $\ell$ players from the subtrees being assigned to the activity $a$, and only the activity set $P_q$ can be assigned to the subtree $\desc(j_c)$, i.e., 
$TD[j_{c-1},P_{q-1},\ell]$ is {\em true}, and there exists an alternative $(b,x)\in P_q 
\times [n]\cup \{(a_{\emptyset},1)\}$ such that $\DS[j_c,B,P_q,(b,x),x]$ is {\em true}, or $\WS[j_c,B,P_q,(b,x),x]$ is {\em true} and ($b = a_{\emptyset}$ or $i$ weakly prefers $(a,k)$ to $(b,x+1)$); 

\item $P_1, P_2, \ldots, P_{q-1}$ can be assigned to the subtrees $\desc(j_1), \desc(j_2),\ldots, \desc(j_{c-1})$ and $P_q$ can be assigned to the subtree $\desc(j_c)$ while the activity $a$ is assigned to $x$ players from the subtrees $\desc(j_1), \desc(j_2),\ldots, \desc(j_{c-1})$ and to $\ell - x$ players from the subtree $\desc(j_c)$, i.e., $\ell \geq 2$, and there exists an $x \in [\ell-1]$ such that 
both $TD[j_{c-1},P_{q-1},\ell-x]$ and $\PS[j_c,B,P_q \cup \{a\},(a,k),x]$ are {\em true}, or 
both $TP[j_{c-1},P_{q-1},\ell-x]$ and $\DS[j_c,B,P_q \cup \{a\},(a,k),x]$ are {\em true}.
\end{itemize}
We set $TD[j_c,P_q,\ell]$ to {\em false} otherwise. 

\vspace{0.2cm}
%\noindent
{\em Computation of $\WS[i,B,B^{\prime},(a,k),t]$}: 
Third, for each $c \in [|\ch(i)|]$ and $q \in [0,|\calP|]$, we will check whether the activity sets $P_0,P_1,\ldots,P_{q}$ can be assigned to the subtrees rooted at $j_1,j_2,\ldots,j_c$, and exactly $\ell$ players can be assigned, and no player in $\desc(j)$ has an incentive to deviate to $i$'s coalition;
we refer to this subproblem by $TW[j_c,P_q,\ell]$.
We initialize $TW[j_1,P_q,\ell]$ to {\em true} if one of the following statements holds:
\begin{itemize}
\item the empty activity set $P_0$ can be allocated to the first subtree $\desc(j_1)$, i.e., $q=0$, $\ell=0$, $\PS[j_1,B,\emptyset,(a_{\emptyset},1),1]$ is {\em true}, and ($a=a_{\emptyset}$ or $j_1$ weakly prefers $(a_{\emptyset},1)$ to $(a,k+1)$); 
\item only the activity $a$ can be assigned to $\ell$ players in $\desc(j_1)$, i.e., $q=0$, $\ell\geq 1$, and $\WS[j_1,B,\{a\},(a,k),\ell]$ is {\em true};
\item only the activity set $P_1$ can be assigned to players in $\desc(j_1)$, i.e., $q=1$, $\ell=0$, and there exists an alternative $(b,x)\in P_1 \times [n]\cup \{(a_{\emptyset},1)\}$ such that (1). $\DS[j_1,B,P_1,(b,x),x]$ is {\em true}, or ($\WS[j_1,B,P_1,(b,x),x]$ is {\em true} and ($b=a_{\emptyset}$ or $i$ weakly prefers $(a,k)$ to $(b,x+1)$)), and (2). ($a=a_{\emptyset}$ or $j_1$ weakly prefers $(b,x)$ to $(a,k+1)$);
\item $P_1$ can be assigned to players in $\desc(j_1)$ while activity $a$ can be assigned to $\ell$ players from $\desc(j_1)$, i.e., $q=1$, $\ell\geq 1$, and $\WS[j_1,B,P_1 \cup \{a\},(a,k),\ell]$ is {\em true}.
\end{itemize}
We set $TW[j_1,P_q,\ell]$ to {\em false} otherwise.  
Then, for each $c \in [|\ch(i)|]$, for each $q \in [|\calP|]$, and for each $\ell \in [0,t]$, we set $TW[j_c,P_q,\ell]$ to {\em true} if one of the following statements holds:
\begin{itemize}
\item  $P_1, P_2, \ldots, P_{q}$ can be assigned to the subtrees $\desc(j_1), \desc(j_2),\ldots, \desc(j_{c-1})$ with $\ell$ players from the subtrees being assigned to the activity $a$, and the void activity can be assigned to the subtree $\desc(j_c)$, i.e., both $TW[j_{c-1},P_q,\ell]$ and $\PS[j_c,B,\emptyset,(a_{\emptyset},1),1]$ are {\em true}, 
and $a=a_{\emptyset}$ or player $j_c$ weakly prefers $(a_{\emptyset},1)$ to $(a,k+1)$;

\item $P_1, P_2, \ldots, P_{q}$ can be assigned to the subtrees $\desc(j_1), \desc(j_2),\ldots, \desc(j_{c-1})$ while the activity $a$ is assigned to $x$ players from the subtrees $\desc(j_1), \desc(j_2),\ldots, \desc(j_{c-1})$ and to $\ell - x$ players from the subtree $\desc(j_c)$, i.e., $\ell \geq 2$, and there exists an $x \in [\ell-1]$ such that both $TW[j_{c-1},P_q,\ell-x]$ and $\WS[j_c,B,\{a\},(a,k),x]$ are {\em true}; 

\item $P_1, P_2, \ldots, P_{q-1}$ can be assigned to the subtrees $\desc(j_1), \desc(j_2),\ldots, \desc(j_{c-1})$ with $\ell$ players from the subtrees being assigned to the activity $a$, and $P_q$ can be assigned to the subtree $\desc(j_c)$ with no player from the subtree being assigned to $a$, i.e., $TW[j_{c-1},P_{q-1},\ell]$ is {\em true}, and there exists an alternative $(b,x)\in P_q \times [n]\cup \{(a_{\emptyset},1)\}$ such that (1). $\DS[j_c,B,P_q,(b,x),x]$ is {\em true}, or ($\WS[j_c,B,P_q,(b,x),x]$ is {\em true} and ($b=a_{\emptyset}$ or $i$ weakly prefers $(a,k)$ to $(b,x+1)$)), and (2). $a=a_{\emptyset}$ or $j_c$ weakly prefers $(b,x)$ to $(a,k+1)$;

\item $P_1, P_2, \ldots, P_{q-1}$ can be assigned to the subtrees $\desc(j_1), \desc(j_2),\ldots, \desc(j_{c-1})$ and $P_q$ can be assigned to the subtree $\desc(j_c)$ while the activity $a$ is assigned to $x$ players from the subtrees $\desc(j_1), \desc(j_2),\ldots, \desc(j_{c-1})$ and to $\ell - x$ players from the subtree $\desc(j_c)$, i.e., $\ell \geq 2$, and there exists an $x \in [\ell-1]$ such that both $TW[j_{c-1},P_{q-1},\ell-x]$ and $\WS[j_c,B,P_q \cup \{a\},(a,k),x]$ are {\em true}. 
\end{itemize}
We set $TW[j_c,P_q,\ell]$ to {\em false} otherwise. 

A desired assignment exists if and only $TP[j_c,P_q,t-1]$ (respectively, $TD[j_c,P_q,t-1]$ and $TW[j_c,P_q,t-1]$) is {\em true}. Clearly, the size of each dynamic programming table is at most $pn^2$ and each entry can be filled in polynomial time. 
\end{proof}

\subsection*{Proof of Theorem \ref{thm:W1:activities:clique:IS}}
%Here, we prove the part of Theorem~\ref{thm:W1:activities:clique} about individual stability.
\begin{proof}
We provide a parameterized reduction from {\sc Clique} on regular graphs. 
Given a regular graph $G=(V,E)$ and a positive integer $k$, where $|V|=n$, $|E|=m$, and each vertex of $G$ has degree $\delta \geq k-1$, 
we create an instance of \gGASP\ whose underlying graph is a clique, as follows. 

%Vertex and Edge Activities
We create three {\em vertex activities} $a^{(1)}_i$, $a^{(2)}_i$, and $a^{(3)}_i$ for each $i \in [k]$, one {\em edge activity} $b_j$ for each $j \in [k(k-1)/2]$, and four other activities $d$, $x$, $y$, and $z$. 
%Vertex and Edge Players
For each $v \in V$, we create one {\em vertex player} $v$, and for each edge $e=\{u,v\} \in E$, we create two {\em edge players} 
$e_{uv}$ and $e_{vu}$.

\emph{Idea.}
We will create $k$ empty IS-instances $N_i=\{p^{(i)}_1,p^{(i)}_2,p^{(i)}_3\}$ for each $i \in [k]$, and another empty IS-instance $N_g=\{g_1,g_2,g_3\}$ together with the stabilizer $g$; in order to stabilize these gadgets, each activity $a^{(1)}_i$ should be assigned to some vertex player, and the stabilizer $g$ needs to form a coalition with the players in $N_g$. Further, we will create dummies of vertex and edge players in such a way that a stable assignment has the following properties:
\begin{enumerate}
\item if a vertex player $v$ is assigned to a vertex activity, it forms a coalition of size $\alpha(v)$ that consists of its dummies and $\delta-k+1$ edge players incident to the vertex; and
\item if an edge player $e_{vu}$ is assigned to an edge activity, it forms a coalition of size $\beta(e)$ that consists of the edge player $e_{uv}$ and the dummies of $\{u,v\}$.
\end{enumerate}

\emph{Construction details.}
%Vertex and Edge dummy players
Now, let $P=\{j(k+3)+n\mid j\in[n]\}$, and let $\alpha:V\to P$ be a bijection that assigns a 
distinct number in $P$ to each vertex $v\in V$. Note that $u\neq v$
implies that the intervals $[\alpha(u),\alpha(u)+k+1]$ and 
$[\alpha(v)-1,\alpha(v)+k]$ are disjoint.
Similarly, let $Q=\{2j\mid j\in[m]\}$ and 
let $\beta:E \rightarrow Q$ be a bijection that assigns a distinct number
in $Q$ to each edge $e\in E$.
For each $v\in V$ we construct a set $\Dummy(v)$ of $\alpha(v)-\delta+k-2$ {\em dummy vertex players}. 
Similarly, for each $e\in E$ we construct a set $\Dummy(e)$ of $\beta(e)-2$ {\em dummy edge players}.

%Preferences for vertex and edge players
We will now define the players' preferences.
\begin{itemize}
\item Each vertex player $v \in V$ and the players in $\Dummy(v)$ approve each 
alternative $(a^{(1)}_i,s)$ where $i\in [k]$ and $s \in [\alpha(v),\alpha(v)+k]$.
\item Each edge player $e_{vu}$ approves each alternative $(a^{(1)}_i,s)$ where $i\in [k]$ and $s \in [\alpha(v),\alpha(v)+k]$ as well as each alternative
in $(b_j,\beta(e))$ where $j \in [k(k-1)/2]$.
\item  The dummies in $\Dummy(e)$ only approve the alternatives in $(b_j,\beta(e))$ where $j \in [k(k-1)/2]$.
\end{itemize}
All of these players are indifferent among all alternatives they approve. The stabilizer $g$ approves each alternative of the form $(a^{(1)}_i, s)$ with $i \in [k]$, $s\in \bigcup_{v \in V}[\alpha(v)+2,\alpha(v)+k]$ and is indifferent among them; 
she also approves $(d,4)$, but likes it less than all other approved alternatives.

%Preferences for empty IS instances
For players in $N_g$, we have
\begin{align*}
g_1:&~ (d,4) \succ (y,2) \succ (z,1) \succ  (x,3)  \succ (x,2) \succ (x,1) \succ (a_{\emptyset},1),\\
g_2:&~ (d,4) \succ (x,3) \succ (x,2) \succ (z,2) \succ (y,2) \succ (y,1) \succ (a_{\emptyset},1),\\
g_3:&~ (d,4) \succ (x,3) \succ (z,2) \succ (z,1) \succ (a_{\emptyset},1).
\end{align*}
Notice that their preference over the alternatives of $x$, $y$, and $z$ is cyclic; hence, in an individually stable assignment, the activity $d$ should be assigned to all the players in $N_g$ together with its stablizer $g$; otherwise $\pi$ cannot be stable as we have seen in Example \ref{ex:IS:empty}. Similarly, the preference for players in $N_{i}$ is cyclic and given by
\begin{align*}
p^{(i)}_1:&~ (a^{(2)}_i,2) \succ (a^{(1)}_i,1) \succ (a^{(3)}_i,3) \succ (a^{(3)}_i,2) \succ (a^{(3)}_i,1) \succ (a_{\emptyset},1),\\
p^{(i)}_2:&~(a^{(3)}_i,3) \succ (a^{(3)}_i,2) \succ (a^{(1)}_i,2) \succ (a^{(2)}_i,2) \succ (a^{(2)}_i,1) \succ (a_{\emptyset},1),\\
p^{(i)}_3:&~(a^{(3)}_i,3) \succ (a^{(1)}_i,2) \succ (a^{(1)}_i,1) \succ (a_{\emptyset},1).
\end{align*}
Again, in an individually stable assignment, at least one of the three activities $a^{(1)}_i, a^{(2)}_i$, $a^{(3)}_i$ must be assigned outside of $N_i$; the only individual rational way to do this is to assign the activity $a^{(1)}_i$ to other players.

Finally, we take the underlying social network to be a complete graph.
Note that the number of activities depends on $k$, but not on $n$, and the size of our instance of \gGASP\ is bounded by $O(n^2+m^2)$.

\emph{Correctness.}
We will now argue that the graph $G$ contains a clique of size $k$ if and only if there exists an individually stable assignment
for our instance of \gGASP. 
Suppose that $G$ contains a clique $S$ of size $k$. We construct an assignment $\pi$ as follows. 
We establish a bijection $\eta$ between $S$ and $[k]$, and for each $v\in V$
we form a coalition of size $\alpha(v)$ that engages in $a_{\eta(v)}$: this coalition consists
of $v$, all players in $\Dummy(v)$, and all edge players $e_{vu}$ such that $u\not\in S$. 
Also, we establish a bijection $\xi$ between the edge set $\{\,e=\{u,v\}\in E\mid u,v \in S\,\}$ 
and $[k(k-1)/2]$, and assign the activity $b_{\xi(e)}$ 
to the edge players $e_{uv}$, $e_{vu}$ as well as to all players in $\Dummy(e)$. 
Finally, we set $\pi(p^{(i)}_1)=\pi(p^{(i)}_2)=\pi(p^{(i)}_3)=a^{(3)}_i$ for each $i \in [k]$ and $\pi(g)=\pi(g_1)=\pi(g_2)=\pi(g_3)=d$, 
and assign the void activity to the remaining players.
We will now argue that the resulting assignment $\pi$ is individually stable. 

%assigned vertex player
Clearly, no player assigned to an activity $a^{(1)}_i$ or $b_j$
wishes to deviate.
Now, consider a player $v\in N$ with $\pi(v)=a_\emptyset$;
by construction, $v$ only wants to join a coalition if it engages in an activity $a^{(1)}_i$
and its size is in the interval $[\alpha(v)-1, \alpha(v)+k]$, and no such coalition
exists. The same argument applies to players in $\Dummy(v)$.
Similarly, consider an edge player $e_{vu}$ with $\pi(e_{vu})=a_\emptyset$.
We have $v\not\in S$, and therefore $e_{vu}$ does not want to join
any of the existing coalitions; the same argument applies to all dummies of $e=\{u,v\}$.
Further, players in $N_i$ do not want to deviate
since players $p^{(i)}_2$ and $p^{(i)}_3$ are allocated to their top alternative and the activity $a^{(1)}_i$ is assigned to at least one player.
Also, players in $N_g$ do not want to deviate since they are allocated one of their top choices. 
Finally, the stabilizer $g$ does not want to deviate, since there is no coalition of size 
$s\in \bigcup_{v\in V}[\alpha(v)+1,\alpha(v)+k-1]$ that engages in an activity $a^{(1)}_i$. 
Hence, $\pi$ is individually stable.

Conversely, suppose that there exists an individually stable feasible assignment $\pi$.
Notice that $\pi$ cannot allocate an activity $a^{(1)}_i$ to players in $N_i$, 
or leave it unallocated, since no such assignment can be individually stable.
Thus, each vertex activity $a^{(1)}_i$ is allocated to a coalition 
whose size lies in the interval $[\alpha(v),\alpha(v)+k]$ 
for some $v \in V$.
%%%%%
Further, individual stability implies that $\pi$ allocates the activity $d$ to the players in $N_g$ and its stabilizer $g$.
Now, if some vertex activity $a^{(1)}_i$ is assigned to $s$ players,
where $s\in[\alpha(v)+1, \alpha(v)+k-1]$ for some $v \in V$, 
the stabilizer $g$ would then deviate to that coalition. Further, for each $v \in V$, the number of players other than $g$ who approve the alternatives of the form $(a^{(1)}_i,\alpha(v)+k)$ is at most $\alpha(v)+k-1$ but $g$ is assigned to $d$; thus, for each $a^{(1)}_i$ we have $|\pi^{a^{(1)}_i}|=\alpha(v)$ for some $v\in V$. 
Consider a player $v\in S$, and let $a(v)$ be the activity assigned to $\alpha(v)$
players under $\pi$. We have $\pi(v)= a(v)$, since otherwise $v$ has an IS-deviation to $a(v)$.

Now, let $S=\{v\in V\mid \pi(v) =a^{(1)}_i\text{ for some }i \in [k]\}$.
By construction, $|S|=k$. We can show that $S$ is a clique in a similar manner to the proof of Claim \ref{claim:clique}. 
\end{proof}

\subsection*{Proof of Theorem \ref{thm:NP:clique:core}}
\begin{proof}
Our problem is in NP by Proposition~\ref{prop:in-core}. We again reduce from X3C. Let $(V,\calS)$ be an instance of X3C where $V=\{v_1,v_2,\ldots,v_{3k}\}$ and $\calS=\{S_1,S_2,\ldots,S_{m}\}$. 

We define the set of activities by $A^{*}=\{a,b,c_1,c_2\}$. We introduce three players $x_1,x_2,x_3$ and one player $S_j$ for each $S_j \in \calS$. 

\emph{Idea.}
Again, we will create the dummies of each $v_i \in V$ and the preferences of players as follows:
\begin{itemize}
\item The players $x_1$, $x_2$, and $x_3$ form an empty core instance. In a stable assignment, they need to be assigned to the activity $a$. 
\item If the dummies of $v_i$ are assigned to the void activity and none of the players corresponding to the sets including $v_i$ (denoted by $\calS(v_i)$) is assigned to the activity $c_2$, then these players will form a blocking coalition of size $\beta(v_i)$ and deviate to activity $b$.
\end{itemize}

\emph{Construction details.}
For each $v_i\in V$, we let $\calS(v_i)=\{\, S_j \mid v_i \in S_j \in \calS \,\}$ and $\beta(v_i)=i+3k+1$ 
and create a set $\Dummy(v_i)=\{d^{(1)}_i,d^{(2)}_i,\ldots,d^{(\beta(v_i)-|\calS(v_i)|)}_i\}$ of dummy players. For each $i\in[3k]$
the number $\beta(v_i)$ is the target coalition size when all players 
in $\calS(v_i)$ are engaged in activity $b$, together with the players in $\Dummy(v_i)$. We then create an edge between any pair of players.

%preferences
The agents' preferences over alternatives are defined as follows. 
For each $S_j \in \calS$, we let $B_j=\{b\}\times \{\, \beta(v_i)\mid v_i \in S_j \,\}$. The preferences of each player $S_j \in \calS$ are given by
\begin{align*}
&S_j:~(c_2,k)\succ B_j \succ (c_1,n-k) \succ (a_{\emptyset},1).
\end{align*}
For each $v_i \in V$ the dummy players in $\Dummy(v_i)$ only approve the alternative $(b,\beta(v_i))$.

Finally, the 
preferences of players $x_1$, $x_2$, and $x_3$ are given by
\begin{align*}
&x_{1}:~(c_1,2) \sim (c_2,2) \succ (a,3) \succ (a_{\emptyset},1)\\
&x_{2}:~(a,2) \succ (c_1,2) \sim (c_2,2) \succ (a,3) \succ (a_{\emptyset},1)\\
&x_{3}:~(a,3) \succ (c_1,1) \sim (c_2,1) \succ (a,2) \succ (a_{\emptyset},1).
\end{align*}
Notice that the preferences of $x_1$, $x_2$, and $x_3$, when restricted to $\{a,c_1,a_{\emptyset}\}\times [1,2,3]$ or $\{a,c_2,a_{\emptyset}\}\times [1,2,3]$, form an empty core instance (Example \ref{ex:core:empty}).

We will show that $(V,\calS)$ contains an exact cover if and only if there exists a core stable feasible assignment.

\emph{Correctness.}
Suppose that $(V,\calS)$ admits an exact cover $\calS^{\prime}$. Then, we construct a feasible assignment $\pi$ by setting $\pi(x_i)=a$ for $i=1,2,3$, $\pi(S_j)=c_2$ for $S_j \in \calS^{\prime}$, $\pi(S_j)=c_1$ for $S_j \in \calS\setminus \calS^{\prime}$, and assigning the remaining players to the void activity.  
Clearly, no subset together with $c_i$ $(i=1,2)$ or $a$ strongly blocks $\pi$. 
We will show that no subset $T$ together with activity $b$ strongly blocks $\pi$. Suppose towards a contradiction that such a subset $T$ exists; 
as no players in $\{x_1,x_2,x_3\}$ approves alterntives of $b$, it must be the case 
that $|T|=\beta(v_i)$ for some $v_i\in V$ and hence $T$ consists of agents who approve $(b, \beta(v_i))$, i.e., $T=\calS(v_i) \cup \Dummy(v_i)$ for some $v_i\in V$. However, since $\calS^{\prime}$ is an exact cover, 
there is an agent $S_j\in \calS^{\prime}\cap \calS(v_i)$ with $\pi(S_j)=c_2$, and this agent prefers $(c_2, k)$ to $(b, \beta(v_i))$, a contradiction.
Hence, $\pi$ is core stable.

Conversely, suppose that there exists a core stable feasible assignment $\pi$ and let $\calS^{\prime}=\{\, S_j \in \calS \mid \pi(S_j)=c_2 \,\}$. We will show that $\calS^{\prime}$ is an exact cover.  Observe that by individual rationality, the only agents who can be allocated to $a$ are the players $x_1$, $x_2$, and $x_3$. Hence, by core stability, both activities $c_1$ and $c_2$ must be allocated outside of the players $x_1$, $x_2$, and $x_3$; otherwise, no core stable outcome would exist as we have seen in Example \ref{ex:core:empty}. The only indivudally rational way to do this is to assign activity $c_2$ to $k$ players from $\calS$, and assign activity $c_1$ to the remaining $n-k$ players in $\calS$. Thus, $|\calS^{\prime}|=k$. Then, no player in $\calS$ can be engaged in activity $b$, and hence, by individual rationality, all dummy players must be assigned to the void activity.
Now suppose towards a contradiction that $\calS^{\prime}$ is not a cover, i.e., there exists an element $v_i \in V$ such that $\calS(v_i) \cap \calS^{\prime}=\emptyset$. This would mean that $\pi$ assigns all players in $\calS(v_i)$ to the activity $c_1$, and hence the coalition $\calS(v_i) \cup \Dummy(v_i)$ together with the activity $b$ strongly blocks $\pi$, contradicting the stability of $\pi$.
\end{proof}

\section*{Appendix. Proofs omitted from Section 6}
\subsection*{Proof of Theorem \ref{thm:W1players:clique:NSIS}}
\begin{proof}
We describe a parameterized reduction from the W[1]-hard \MCC{} problem.
Given an undirected graph~$G=(V,E)$, a positive integer~$h\in \mathbb{N}$,
and a vertex coloring~$\phi \colon V\to [h]$, \MCC{} asks whether~$G$ admits a colorful $h$-clique, that is,
a size-$h$ vertex subset~$Q\subseteq V$ such that the vertices in $Q$ are pairwise adjacent
and have pairwise distinct colors.
Without loss of generality, we assume that there are exactly $q$ vertices of each color for some $q\in\mathbb N$, 
and that there are no edges between vertices of the same color.

Let $(G,h,\phi)$ be an instance of \MCC{} with $G=(V, E)$.
For convenience,
we write $V^{(i)} = \{v_1^{(i)}, \ldots, v_q^{(i)}\}$ to denote the set of vertices of color~$i$
for every $i \in [h]$. 
We construct our \gGASP\ instance as follows.
We have one \emph{vertex activity}~$v$ for each vertex~$v \in V$, 
one \emph{edge activity}~$e$ for each edge~$e \in E$,
three \emph{color activities}~$a_i$, $b_i$, and $c_i$ for each color~$i \in [h]$, 
three \emph{colorpair activities}~$a_{\{ij\}}$, $b_{\{ij\}}$, and $c_{\{ij\}}$ for each color pair~$\{i,j\} \subseteq [h], i \neq j$, 
and four other additional activities $d$, $x$, $y$, and $z$.

\emph{Idea.}
We will have one \emph{color gadget} $\colour(i)$ for each color~$i \in [h]$, one \emph{colorpair gadget} $\pair(\{i,j\})$ for each color pair~$\{i,j\},i \neq j$, and one empty IS-instance $N_g=\{g_1,g_2,g_3\}$ together with the stabilizer $g$.
For most of the possible assignments, these gadgets will be unstable, unless the following holds:
\begin{enumerate}
 \item For each color~$i \in [h]$ the first two players from the color gadget select
       a vertex of color~$i$ (by being assigned together to the corresponding vertex activity).
 \item For each color pair~$\{i,j\} \subseteq [h], i \neq j$ the first two players of the colorpair gadget select 
       an edge connecting one vertex of color~$i$ and one vertex of color~$j$ (by being assigned together to
       the corresponding edge activity).
 \item Every selected edge for the color pair~$\{i,j\}$ must be incident to both vertices selected for color~$i$ and color~$j$. 
 \item The stabilizer $g$ as well as the players in $N_g$ are engaged in the activity $d$.
\end{enumerate}
If the four conditions above hold, then the assignment must encode a colorful $h$-clique.

\emph{Construction details.}
The color gadget~$\colour(i)$, $i\in [h]$ consists of the following three players.
 \begin{align*}
p_1^{(i)}:&~ (v_1^{(i)},2) \succ E(v^{(i)}_1) \times [3,5]\succ (v_2^{(i)},2) \succ E(v^{(i)}_2) \times [3,5] \succ \dots \succ (v_q^{(i)},2)  \succ E(v^{(i)}_q) \times [3,5] \succ 
\\ &~ (b_i,2) \succ (a_i,1) \succ (c_i,3) \succ (c_i,2) \succ (c_i,1) \succ (a_{\emptyset},1),\\
p_2^{(i)}:&~ (v_q^{(i)},2)\succ E(v^{(i)}_q) \times [3,5]\succ (v_{q-1}^{(i)},2) \succ E(v^{(i)}_{q-1})\times [3,5] \succ \dots \succ (v_1^{(i)},2) \succ E(v^{(i)}_{1}) \times [3,5] \succ 
\\ &~(c_i,3) \succ (c_i,2) \succ (a_i,2) \succ (b_i,2) \succ (b_i,1) \succ (a_{\emptyset},1),\\
p_3^{(i)}:&~ (c_i,3) \succ (a_i,2) \succ (a_i,1) \succ (a_{\emptyset},1),
\end{align*}
where $E(v)$ to denote the set of activities corresponding to edges incident to vertex $v$ for every $v \in V$.

The colorpair gadget~$\pair(\{i,j\})$, $\{i,j\} \subseteq [h], i \neq j$, consists of three players.
\begin{align*}
p_1^{\{i,j\}}:&~E(\{i,j\}) \times [2,5] \succ (b_{\{ij\}},2) \succ (a_{\{ij\}},1) \succ (c_{\{ij\}},3) \succ (c_{\{ij\}},2) \succ (c_{\{ij\}},1) \succ (a_{\emptyset},1)\\
p_2^{\{i,j\}}:&~E(\{i,j\}) \times [2,5] \succ (c_{\{ij\}},3) \succ (c_{\{ij\}},2) \succ (a_{\{ij\}},2) \succ (b_{\{ij\}},2) \succ (c_{\{ij\}},1) \succ (a_{\emptyset},1),\text{ and}\\
p_3^{\{i,j\}}:&~(c_{\{ij\}},3) \succ (a_{\{ij\}},2) \succ (a_{\{ij\}},1) \succ (a_{\emptyset},1)
\end{align*}
where $E(\{i,j\})$ denotes the set of edge activities incident to vertices of color $i$ and $j$.

There is a stabilizer player $g$ with the following preferences.
\begin{align*}
g:&~ E \times [4,5] \succ (d,4) \succ  (a_{\emptyset},1).
\intertext{The set $N_g$ consists of three players.}
g_1:&~ (d,4) \succ (y,2) \succ (z,1) \succ  (x,3)  \succ (x,2) \succ (x,1) \succ (a_{\emptyset},1),\\
g_2:&~ (d,4) \succ (x,3) \succ (x,2) \succ (z,2) \succ (y,2) \succ (y,1) \succ (a_{\emptyset},1),\\
g_3:&~ (d,4) \succ (x,3) \succ (z,2) \succ (z,1) \succ (a_{\emptyset},1).
\end{align*}

We take the underlying social network to be a complete graph. 
Together, there are $3h + 3\binom h2 + 4$ players, namely
\[  N = \bigcup_{i\in [h]} \colour(i) \cup \bigcup_{i\neq j\in [h]} \pair(\{i,j\}) \cup \{g\} \cup N_g.  \]
Note that the number of players depends on $h$, but not on $n$, and the size of our instance of gGASP is bounded by $O(n^2 + m^2)$.

\emph{Correctness.}
We will now argue that the graph $G$ admits a colorful clique of size $h$ 
if and only if our instance of \gGASP\ admits a Nash stable feasible assignment or an individually stable feasible assignment.

Suppose that there exists a colorful clique~$H$ of size $h$. We construct a Nash stable assignment $\pi$ where 
\begin{itemize}
\item the first two players $p_1^{(i)}$ and $p_2^{(i)}$ of the color gadget~$\colour(i)$ are assigned to the activity corresponding to the vertex of color~$i$ from~$H$,
\item the last player $p_3^{(i)}$ of the color gadget~$\colour(i)$ is assigned to the activity $a_i$,
\item the first two players $p_1^{\{i,j\}}$ and $p_2^{\{i,j\}}$ of the colorpair gadget~$\pair(\{i,j\})$ are assigned to the activity corresponding
to the edge between the vertices of color~$i$ and~$j$ in~$H$,
\item the last player $p_3^{\{i,j\}}$ of the colorpair gadget~$\pair(\{i,j\})$ is assigned to the activity $a_{\{ij\}}$, and
\item the stabilizer $g$ and the players in $N_g$ are assigned to the activity $d$.
\end{itemize} 
Observe that for a successful Nash deviation a player must join an existing non-empty coalition, because no player
prefers a size-one activity to the currently assigned one.
By construction, the last player of each color gadget and the last player of the colorpair gadget cannot deviate
(no other players engage in an approved activity).
Consider the first two players of a color gadget~$\colour(i)$.
They cannot deviate to a vertex activity, because their current activity is their only approved vertex activity 
that has some players assigned to it.
They cannot deviate to an edge activity either, because they would only prefer edge activities corresponding to
edges that are not incident to the vertex of color~$i$; these activities, however, 
have no players assigned to them.
The first two players of each colorpair gadget~$\pair(\{i,j\})$ and the players in $N_g$
do not wish to deviate since they are each assigned to their top alternative.
Finally, the stabilizer $g$ does not wish to deviate since there is no coalition of size $3$ or $4$ assigned to an edge activity.
Thus, we have a Nash stable feasible assignment, and hence, an individually stable assignment.

Conversely, suppose that there exists an individually stable feasible assignment~$\pi$. 
Then, $\pi$ must assign the activity $d$ to the stabilizer as well as the players in $N_g$; otherwise, $\pi$ cannot be stable as we have seen in Example~\ref{ex:IS:empty}. 
Likewise, the first two players of each colorpair gadget~$\pair(\{i,j\})$ must be assigned to the same edge activity
corresponding to some edge~$\hat{e}_{i,j} \in E$. 
We say that these players ``select edge~$\hat{e}_{i,j}$''. 

Now suppose towards a contradiction that the first player $p_1^{(i)}$ of color gadget~$\colour(i)$ is assigned to an edge activity $\hat{e}_{i,j}$ incident to vertices of color $i$ and $j$. Then, such a coalition has size in $[3,5]$ by individual rationality, and must not contain the stabilizer $g$, because $g$ has to be engaged in activity $d$ as discussed before. Hence the player $p_1^{(i)}$ can only be assigned to $\hat{e}_{i,j}$ together with $p_2^{(i)}$ and the first two players from the colorpair gadget $\pair(\{i,j\})$, which results in a coalition of size $3$ or $4$. This would, however, cause an IS-deviation by the stabilizer $g$ to the edge activity $\hat{e}_{i,j}$, a contradiction. The same argument applies to when the second player $p_2^{(i)}$ is assigned to an edge activity. Therefore, the first two players of each color gadget~$\colour(i)$ are assigned to the same vertex activity corresponding to some vertex~$v_{\ell_i} \in V^{(i)}$. We say that these players ``select vertex~$v_{\ell_i}$''.
(Note that at this point, the coalition structure of any stable outcome is already fixed:
The first two players of each color gadget and of each colorpair gadget
must form a coalition of size two, respectively, and
all other players must be form singleton coalitions.)

Now, assume towards a contradiction that the selected vertices and edges do not form a colorful clique of size~$h$.
The size and colorfulness are clear from the construction.
Hence, there must be some pair~$\{v_{\ell_i},v_{\ell_j}\}$ of selected vertices that are not adjacent.
However, this would imply that colorpair gadget~$\pair(\{i,j\})$ selected an edge 
that is not incident to vertex~$v_{\ell_i}$
or not incident to vertex~$v_{\ell_j}$.
Without loss of generality let it be non-incident to vertex~$v_{\ell_i}$.
Now, there are two cases.
First, if $\hat{e}_{i,j} \in E(v_{x}^{(i)})$ with $x<\ell_i$, then
player~$p_1^{(i)}$ would have an IS-deviation to the activity corresponding to~$\hat{e}_{i,j}$.
Second, if $\hat{e}_{i,j} \in E(v_{x}^{(i)})$ with $x>\ell_i$,
then  player~$p_2^{(i)}$ would have an IS-deviation to the activity corresponding to~$\hat{e}_{i,j}$.
In both cases we have a contradiction to the assumption that~$\pi$ is individually stable. A similar argument applies to the case when there is a Nash stable assignment.
\end{proof}

\end{document}